\documentclass[journal]{IEEEtran}
\usepackage{amsmath,amsfonts}
\usepackage{algorithmic}
\usepackage{algorithm}
\usepackage{array}
\usepackage[caption=false,font=normalsize,labelfont=sf,textfont=sf]{subfig}
\usepackage{textcomp}
\usepackage{stfloats}
\usepackage{url}
\usepackage{verbatim}
\usepackage{graphicx}
\usepackage{cite}
\usepackage{xspace}
\usepackage{xcolor}
\usepackage[hidelinks]{hyperref}
\hyphenation{op-tical net-works semi-conduc-tor IEEE-Xplore}
% updated with editorial comments 8/9/2021

\newcommand{\T}{\hspace{1.5pt}{\mathcal{T}}}
\newcommand{\wi}{w_{I}}
\newcommand{\wo}{w_{O}}
\renewcommand{\P}{\mathcal{P}}
%orna2
\renewcommand{\H}{H}

\newcommand{\A}{\mathcal{A}}
\newcommand{\U}{\mathcal{U}}
\newcommand{\True}{\mathtt T}
\newcommand{\False}{\mathtt F}
\newcommand{\spec}{\varphi}

\newcommand{\LTL}{LTL\xspace}

\newcommand{\stam}[1]{}
\newcommand{\zug}[1]{\langle#1\rangle}
\newcommand{\set}[1]{{\{#1\}}}

\newcommand{\prop}[2][]{{\it prop}_{#1}(#2)}
\newcommand{\cl}[1]{{\it cl}_{#1}}

%orna2
\newcommand{\D}{D}
\newcommand{\V}{V}
\newcommand{\C}{C}
\newcommand{\G}{G}
\newcommand{\vhcg}{(\V,\H,\C,\G)}
\newcommand{\GL}{\mathbf{G}}
\newcommand{\UN}{\mathbf{U}}
\newcommand{\X}{\mathbf{X}}
\newcommand{\FU}{\mathbf{F}}
\renewcommand{\iff}{\leftrightarrow}

\newcommand{\req}{{\it req}}
\newcommand{\sens}{{\it sens}}
\newcommand{\open}{{\it open}}
\newcommand{\high}{{\it high}}

\newcommand{\ctls}{CTL$^\star$\xspace}

\newtheorem{theorem}{Theorem}[section]
\newtheorem{proposition}[theorem]{Proposition}

\newtheorem{remark}{Remark}[section]
\newtheorem{xmpl}[theorem]{Example}
\newenvironment{example}{\begin{xmpl}\rm}{ \end{xmpl}}

\def\squarebox#1{\hbox to #1{\hfill\vbox to #1{\vfill}}}
\newcommand{\qed}{\hspace*{\fill}
 	\vbox{\hrule\hbox{\vrule\squarebox{.667em}\vrule}\hrule}\smallskip}
 \newenvironment{proof}{\begin{trivlist}
 		\item[\hspace{\labelsep}{\bf\noindent Proof: }]
 	}{\qed\end{trivlist}}

\title{Synthesis with Guided Environments\thanks{A preliminary version was published in {\em Tools and Algorithms for the Construction and Analysis of Systems - 31st International Conference}, 2025. Supported in part by the Israel Science Foundation, Grant 2357/19, and the European Research Council, Advanced Grant ADVANSYNT.}% <-this % stops a space
}
\author{Orna Kupferman and Ofer Leshkowitz\\
School of Computer Science and Engineering\\
The Hebrew University of Jerusalem, Israel} 
\begin{document}
\maketitle

\begin{abstract}
In the synthesis problem, we are given a specification, and we automatically generate a system that satisfies the specification in all environments. We introduce and study {\em synthesis with guided environments} (SGE, for short), where the system may harness the knowledge and computational power of the environment during the interaction. The underlying idea in SGE is that in many settings, in particular when the system serves or directs the environment, it is of the environment's interest that the specification is satisfied, and it would follow the guidance of the system. Thus, while the environment is still
hostile, in the sense that the system should satisfy the specification no matter how the environment assigns values to the input signals, 
in SGE the system assigns values to some output signals and guides the environment via {\em programs\/} how to assign values to other output signals. 
A key issue is that these assignments may depend on input signals that are hidden from the system but are known to the environment, using programs like ``copy the value of the hidden input signal $x$ to the output signal $y$." SGE is thus particularly useful in settings where the system has partial visibility.

We solve the problem of SGE, show its superiority with respect to traditional synthesis, and study theoretical aspects of SGE, like the complexity (memory and domain) of programs used by the system, SGE for branching-time specifications, and the connection of SGE to synthesis of (possibly distributed) systems with partial visibility.

\end{abstract}

\begin{IEEEkeywords}
Synthesis, Partial Visibility, Temporal Logic, Automata
\end{IEEEkeywords}

\section{Introduction}
\label{intro}
\IEEEPARstart{S}{ynthesis} is the automated construction of a system from its specification \cite{BCJ18}. 
Given a linear temporal logic (\LTL) formula $\spec$ over sets $I$ and $O$ of input and output signals, the goal is to return a {\em transducer} that {\em realizes\/} $\spec$. 
At each moment in time, the transducer reads a truth assignment, generated by the environment, to the signals in $I$, and generates a truth assignment to the signals in $O$. 
% ofer1: This repeats forever --> This process continues indefinitely
This process continues indefinitely, generating an infinite computation in $(2^{I \cup O})^\omega$. The transducer realizes $\spec$ if its interactions with all environments generate computations that satisfy $\spec$~\cite{PR89a}.

The fact the system has to satisfy its specification in all environments has led to a characterization of the environment as {\em hostile}. In particular, in the game that corresponds to synthesis, the objective of the environment is to violate the specification $\varphi$. 
% ofer1: trying to make a distinction between theory and reality.
%While this game-theoretic approach correctly formalizes systems that realize the specification against all environments,
%orna2: some small changes 
In real-life applications, the satisfaction of the specification is often also in the environment's interest. In particular, in cases where the system serves or guides the environment, we can expect the environment to follow guidance from the system. We introduce and study {\em synthesis with guided environments} (SGE, for short), where the system may harness the knowledge and computational power of the environment during the interaction, guiding it how to assign values to some of the output signals.

Specifically, in SGE, the set $O$ of output signals is partitioned into sets $\C$ and $\G$ of {\em controlled\/} and {\em guided\/} signals. Then, a system that is synthesized by SGE assigns values to the signals in $\C$ and guides the environment how to assign values to the signals in $\G$. 
Clearly, not all output signals may be guided to the environment. For example, physical actions of the system, like closing a gate or raising a robot arm, cannot be performed by the environment. In addition, 
%orna2: removed it, as it may cause confusion -- it transits to a setting with hidden outputs
%we may prefer to keep some assignments private, and 
%orna2: some rewrites
it may be the case that the system cannot trust the environment to follow its guidance. 
%it may be the case that we simply cannot trust the environment to follow the instructions of the system for some of the output signals. 
As argued above, however, in many cases we can expect the environment to follow the system's guidance, and 
%orna12 "fairness" is a profetional term, so better to avoid it, so we won't get reviews with references to work in fairness...
require the system to satisfy the specification only when the environment follows its guidance. 
%orna12 new
Indeed, when we go to the cinema and end up seeing a bad movie, we cannot blame a recommendation system that guided us not to choose this movie. The recommendation system is bad (violates its specification, in our synthesis story) only if a user that follows its guidance is disappointed.% 
%it is fair to require the system to satisfy the specification only when the environment obeys its instructions.
\footnote{Readers who are still concerned about harnessing the environment towards the satisfaction of the specification, please note that the environment being hostile highlights the fact that the system has to satisfy the specification for {\em all\/} input sequences. This is still the case also in SGE: While we expect the environment to follow guidance received from the system, we do not limit the input sequences that the environment may generate. 
%orna12 new
Thus, there are no assumptions on the environment or collaboration between the system and the environment in the senses studied in \cite{BEK15,AMNS23}: the setting is as in traditional synthesis, only with assignments to some output signals being replaced by programs that the environment is expected to follow.} 

One advantage of SGE is that it enables a decomposition of the satisfaction task between the system and the environment. We will get back to this point after we describe the setting in more detail. 
%ofer1: Added However
%orna2: good, and I was taught that "howevers" are better inside sentences
The main advantage of SGE, however, has to do with {\em partial visibility}, namely the setting where the set $I$ of input signals is partitioned into sets $V$ and $\H$ of {\em visible\/} and {\em hidden\/} signals, and the systems views only the signals in $V$. Partial visibility makes synthesis much harder, as the hidden signals still appear in the specification, yet the behavior of the system should be independent of their truth values.

Synthesis with partial visibility has been the subject of extensive research.
One line of work studies the technical challenges in solving the problem \cite{Rei84,KV00a,CDHR07}.
Essentially,  in a setting with full visibility, the interaction between the system and the environment induces a single computation and hence a single run of a deterministic automaton for the specification. Partial visibility forces the algorithm to maintain subsets of states of the automaton. 
%orna2: less good, but shorter
Indeed, now the interaction induces several computations, obtained by the different possible assignments to the hidden signals. 
%Indeed, now the interaction induces several computations, obtained by extending the assignments to the visible and output signals by different assignments to the hidden signals. 
This makes synthesis with partial visibility exponentially harder than synthesis with full visibility for specifications given by automata. When the specification is given by an \LTL formula, this exponential price is dominated by the doubly-exponential translation of \LTL formulas to deterministic automata, thus \LTL synthesis with partial visibility is 2EXPTIME-complete, and is not harder than \LTL synthesis with full visibility \cite{Ros92}. 

A second line of work studies different settings in which partial visibility is present. This includes, for example,  {\em distributed systems} \cite{PR90,KV01,Sch08}, systems with controlled {\em sensing} \cite{CMH08,AKK19} or with assumptions about visibility \cite{FMM24}, and systems that maintain {\em privacy} \cite{WRRLS18,KL22}. 
Finally, researchers have studied alternative forms of partial visibility (e.g., {\em perspective games}, where visibility of all signals is restricted in segments of the interaction), as well as partial visibility in richer settings (e.g., {\em multi-agent} \cite{AHK02,BMMRV17,GPW18} or {\em stochastic\/} \cite{KR97,CDNV14} systems) or in problems that are strongly related to synthesis (e.g., control \cite{KS98},  planning \cite{dGV16}, and rational synthesis \cite{BMV17,FGR18}).

To the best of our knowledge, in all settings in which synthesis with partial visibility has been studied so far, there is no attempt to make use of the fact that the assignments to the hidden signals are known to the environment. In SGE, the guidance that the system gives the environment may refer to the values of the hidden signals. Thus, the outcome of SGE is a {\em transducer with a guided environment\/} (TGE): a transducer whose transitions depend only on the assignments to the signals in $V$, it assigns values to the signals in $\C$, and guides the environment how to assign values to the signals in $\G$ using programs that may refer to the values of the signals in $\H$.

Consider, for example, a system in a medical clinic that directs patients  ``if you come for a vaccine, go to the first floor; if you come to the pharmacy, go to the second floor''. Such a system is as correct as a system that asks customers for the purpose of their visit and then outputs the floor to which they should go. 
%ofer2: the first type --> this type
%Clearly, if the customers prefer not to reveal the purpose of their visit, we can direct them only with a system of the first type. 
Clearly, if the customers prefer not to reveal the purpose of their visit, we can direct them only with a system of this type.
This is exactly what SGE  does: it replaces assignments that depend on hidden signals or are complicated to compute with instructions to the environment. As another example, consider a smart-home controller that manages various smart devices within a home by getting inputs from devices like thermostats and security cameras, and generating outputs to devices like lighting systems or smart locks. When it is desirable to hide from the controller information like sleep patterns or number of occupants, we can do that by limiting its input, and guiding the user about the activation of some output devices, for example with instructions like, ``if you expect guests, unlock the backyard gate". In Examples~\ref{app ex1} and~\ref{app ex2}, we describe more elaborated examples, in particular of a server that directs users who want to upload data to a cloud. The users may hide from the server the sensitivity level of their data, and we can expect them to follow instructions that the server issues, for example instructions to use storage of high security only when the data they upload is sensitive.

We study several aspects of SGE.
We start by examining the memory used by the environment. Clearly, this memory can be used to reduce the state space of the TGE. To see this, note that in an extreme setting in which the TGE can guide all output signals, it can simply instruct the environment to execute a transducer that realizes the specification. Beyond a trade-off between the size of the TGE and the memory of the environment, we discuss how the size of the memory depends on the sets of visible and guided signals. For example, we show that, surprisingly, having more guided signals does not require more memory, yet having fewer guided signals may require more memory. 
% ofer1: only doubly-exponential upper bound (not tight)
% ofer1: This paragraph is about section 3. The double exponential upper bound comes later in section 4 
%Our most technically-challenging result in this front is a doubly-exponential upper bound on the size of the memory needed by the environment, which leads to an overall triply-exponential upper bound for the SGE problem with unbounded environment memory. 

We then describe an automata-based solution for SGE.
The main challenge in SGE is as follows. Consider a system that views only input signals in $V$. Its interactions with environments that agree on the signals in $V$ generate the same response. In synthesis with partial visibility, this is handled by {\em universal tree automata\/} that run on trees with directions in $2^V$, thus trees whose branches correspond to the interaction from the point of view of the system \cite{KV00a}. 
%% ofer1: making it clear that its G that is about to be depended on H, not the programs.
In the setting of SGEs, the interactions with different environments that agree on the signals in $V$ still generate the same response, but this response now involves programs that guide the environment on how to assign values to signals in $\G$ based on the values assigned to the signals in $\H$. As a result, the computations induced by these interactions may differ (not only on $\H$ but also on $\G$).
%ofer1: some rewrites
%orna2: Combines with "this is technically hard"
This differences between SGE and traditional synthesis with partial visibility is our main technical challenge. We show that we can still reduce SGE to the nonemptiness problem of universal co-B\"uchi tree automata, proving that SGE for \LTL specifications is 2EXPTIME-complete. In more detail, given an \LTL formula $\spec$ and a bound $k$ on the memory that the environment may use, the constructed automaton is of size exponential in $|\spec|$ and linear in $k$, making SGE doubly-exponential in $|\spec|$ and exponential in $k$.
%ofer1: added the doubly-exponential upper bound to here. 
 Finally, we prove a doubly-exponential upper bound on the size of the memory needed by the environment, leading to an overall triply-exponential upper bound for the SGE problem with unbounded environment memory.

We continue and study the domain of programs that TGEs use. Recall that these programs guides the environment how to update its memory and assign values to the guided signals, given a current memory state and the current assignment to the hidden signals. Thus, for a memory state space $M$, each program is of the form $p:M \times 2^\H \rightarrow M \times 2^\G$. We study ways to reduce the domain $2^{\H}$, which is the most dominant factor. The reduction depends on the specification $\spec$ we wish to synthesize. We argue that the SGE algorithm can restrict attention to {\em tight\/}  programs: ones whose domain is a set of predicates over $\H$ obtained by simplifying propositional sub-formulas of $\spec$. Further simplification is achieved by exploiting the fact that programs are called after an assignment to the signals in $V \cup \C$ has been fixed, and exploiting dependencies among all signals. 

We next compare our solution with one that views a TGE as two distributed processes that are executed together in a pipeline architecture: the TGE itself, and a transducer with state space $M$ that implements the instructions of the TGE to the environment. We argue that the approach we take is preferable and can lead to a quadratic saving in their joint state spaces, similar to the saving obtained by defining a regular language as the intersection of two automata. 
%ofer3: not sure I understand the following or why its relevant
Generating programs that manage the environment's memory efficiently is another technical challenge in SGE.

Finally, we extend the SGE framework to handle branching-time specifications, such as those expressed in \ctls. In this setting, the interaction between the system and its environment induces a {\em  computation tree\/} whose branches correspond to different behaviors of the environment. Indeed, the system is deterministic, and nondeterminism is present due to the different possible assignments the environment may provide to the visible and hidden input signals \cite{KB17,BSK17}. The branching-time setting enables the synthesis of properties like ``in all computations, the environment can always assign values to the input signals so that the output signal {\it release} is eventually true". Note that such a behavior cannot be specified in LTL. Synthesis of branching-time specifications with hidden inputs is of particular interest and challenge, as the branching degree of the computation tree viewed by the system is induced only by the visible signals, and it thus smaller than the actual branching degree \cite{KV00a}.  In SGEs, the system may guide the environment to assign different values to the output signals in branches that agree on the visible inputs. We show that our automata-theoretic approach naturally generalizes to \ctls specifications by constructing alternating parity tree automata that operate over the computation tree induced by a TGE. This allows us to synthesize systems that satisfy \ctls specifications in the presence of guided environments, with a complexity that remains 2EXPTIME-complete with respect to the specification size.
%orna12: repeats
\stam{
From a technical point of view, SGE introduces several challenges on top of these addressed in work on traditional synthesis. One primary challenge is managing the interplay between four distinct types of signals including hidden input signals and guided output signals. Although the setting resembles synthesis with partial visibility, SCE introduces an additional layer of complexity: the system can harness hidden information and allow it to influence the values assigned to guided signals through programs that instruct the environment. Technically, it means that computations that are indistinguishable in settings with partial visibility become distinguishable in SCE: indeed the responses of the system should coincide on such computations, but identical responses may lead to different computations. In addition, the system must generate programs that manage the environment's memory efficiently, and ensure a correct behaivor against all assignments to the hidden inputs. These requirements make the problem inherently more complex than standard LTL synthesis. In particular, the complexity of the LTL SGE problem when the environment's memory is unbounded remains open, with a current 3EXPTIME upper bound. This upper bound can also be obtained by viewing a TGE as a distributed system with a pipeline architecture of two layers, highlighting its close relation to distributed systems, a relationship we explore in a dedicated section.
}

We conclude with directions for future research. Beyond extensions of the many settings in which synthesis has been studied to a setting with guided environments, we discuss two directions that are more related to ``the guided paradigm'' itself: settings with {\em dynamic\/} hiding and guidance of signals, thus when $\H$ and $\G$ are not fixed throughout the interaction; and {\em bounded SGE}, where, as in traditional synthesis \cite{SF07,KLVY11}, beyond a bound on the memory used by the environment, there are bounds on the size of the state space of the TGE and possibly also on the size of the state space of its environment. 

\section{Preliminaries}\label{sec prelim}

%\subsection{LTL}
We describe on-going behaviors of reactive systems using the linear temporal logic \LTL \cite{Pnu81}. 
We consider systems that interact via sets $I$ and $O$ of  input and output signals, respectively.  
Formulas of \LTL are defined over $I \cup O$ using the usual
Boolean operators and the temporal operators $\GL$ (``always") and $\FU$ (``eventually"), $\X$ (``next time'') and
$\UN$ (``until''). The semantics of \LTL is defined with respect to infinite 
computations in $(2^{I \cup O})^\omega$. Thus, each \LTL formula $\spec$ over $I \cup O$ induces a language $L_\spec\subseteq (2^{I \cup O})^\omega$ of all computations that satisfy $\spec$. 

The {\em length\/} of an \LTL formula $\spec$, denoted $|\spec|$, is the number of nodes in the generating tree of $\spec$. Note that $|\spec|$ bounds the number of sub-formulas of $\spec$.

%\subsection{Transducers with Cooperative Environments}
We model reactive systems that interact with their environments by finite-state transducers. 
A \emph{transducer} is a tuple $\T=\zug{I,O,S,s_0,\delta,\tau}$, where $I$ and $O$ are sets of input and output signals, $S$ is a finite set of states, $s_0\in S$ is an initial state, $\delta:S\times 2^I\rightarrow S$ is a transition function, and $\tau: S \times 2^I \rightarrow 2^O$ is a function that labels each transition by an assignment to the output signals. 
%Both the transition function $\delta$ and the labeling function $\tau$ can be extended to non-empty finite words $x_V\in (2^V)^+$, $\delta^+:S\times (2^V)^+\rightarrow S$ and $\tau^+:S\times (2^V)^+\rightarrow 2^O$ respectively, by $\delta^+(s,v)=\delta(s,v)$ and $\tau^+(s,v)=\tau(s,v)$ for all $v\in 2^V$, and $\delta^+(s,x_V\cdot v)=\delta(\delta^+(s,x_V),v)$ and $\tau^+(s,x_V\cdot v)=\tau(\delta^+(s,x_V),v)$ for all $x_V\in (2^V)^+$ and $v\in 2^V$. 
Given an infinite sequence $\wi=i_1\cdot i_2\cdots\in (2^I)^{\omega}$ of assignments to input signals, $\T$ generates an infinite sequence $\wo=o_1\cdot o_2\cdots\in (2^O)^{\omega}$ of assignments to output signals. 
Formally, a {\em run\/} of $\T$ on $\wi$ is an infinite sequence of states $s_0\cdot s_1\cdot s_2\cdots$, where for all $j \geq 1$, we have that $s_j=\delta(s_{j-1},i_j)$. 
%Or equivalently, $\delta^+(s_0,i_1\cdots i_j)$. 
Then, the sequence $w_O$ is obtained from the assignments along the transitions that the run traverses. Thus for all $j \geq 1$, we have that $o_j=\tau(s_{j-1},i_j)$. %, or simply $o_j=\tau^+(s_0,i_1\cdots i_j)$. 
We define the \emph{computation\/} of $\T$ on $\wi$ to be the word $\T(\wi)=(i_1\cup o_1)\cdot (i_2\cup o_2)\cdots \in (2^{I\cup O})^{\omega}$. %The \emph{language} of $\T$, denoted $L(\T)$, is the set of its computations; that is $L(\T)=\set{\T(\wi) : \wi\in (2^I)^{\omega}}$.

For a specification language $L_\spec \subseteq (2^{I\cup O})^{\omega}$, we say that $\T$ {\em $(I,O)$-realizes\/} $L_\spec$ if for every input sequence $\wi\in (2^I)^{\omega}$, we have that $\T(\wi) \in L_\spec$. 
%Equivalently, if $L(\T) \subseteq L_\spec$. 
In the {\em synthesis\/} problem, we are given a specification language $L_\spec$ and a partition of the signals to $I$ and $O$, and we have to return a transducer that $(I,O)$-realizes $L_\spec$ (or determine that $L_\spec$ is not realizable).  The language $L_\spec$ is typically given by an LTL formula $\spec$. We then talk about realizability or synthesis of $\spec$ (rather than $L_\spec$).

%\begin{theorem}\label{ltl syn 2exp}
%	The LTL synthesis problem is 2EXPTIME-complete~\cite{?}.
%\end{theorem}

In synthesis with {\em partial visibility}, we seek a system that satisfies a given specification in all environments even when it cannot observe the assignments to some of the input signals. Formally, the set of input signals is partitioned into {\em visible\/} and {\em hidden\/} signals, thus $I=\V \cup \H$. The specification $\spec$ is still over $\V \cup \H \cup O$, yet the behavior of the transducer that models the system is independent of $\H$. Formally, $\T=\zug{\V,\H,O,S,s_0,\delta,\tau}$, where now $\delta:S\times 2^\V\rightarrow S$ and $\tau: S \times 2^\V \rightarrow 2^O$.
%ofer2: fixed the definition of the outputs o_j to depend on V and not H - we are still in the standard setting of incomplete information.
Given an infinite sequence $\wi=(v_1 \cup h_1)\cdot (v_2 \cup h_2) \cdots\in (2^{\V \cup \H})^{\omega}$, the run and computation of $\T$ on $w_I$ is defined as in the case of full visibility, except that now, for all $j \geq 1$, we have that $s_j=\delta(s_{j-1},v_j)$ and $o_j=\tau(s_{j-1},v_j)$.

We can now define a \emph{transducer with a guided environment} (TGE for short). TGEs extend traditional transducers by instructing the environment how to manage its guided signals. A TGE may be executed in a setting with partial visibility, thus $I = \V \cup \H$. 
It uses the fact that the assignments to the signals in $\H$ are known to the environment, and it guides the assignment to some of the output signals to the environment. 
As discussed in Section~\ref{intro}, in practice not all output signals can be guided. 
%ofer1: these examples already appear in the intro. maybe we can remove to save space
%orna2: indeed. Added a reference to the intro above instead.
%In practice, not all output signals can be guided. 
%For example, physical actions of the system, like closing a gate or raising a robot arm, cannot be performed by the environment. 
%
%In addition, some assignments should better stay private, and it may be the case that the environment cannot be trusted to follow the instructions of the system. 
Formally, the set of output signals is partitioned into {\em controlled\/} and {\em guided\/} signals, thus $O=\C \cup \G$. In each transition, the TGE assigns values to the signals in $\C$ and instructs the environment how to assign values to the signals in $\G$. The environment may have a finite memory, in which case the transducer also instructs the environment how to update the memory in each transition. The instructions that the transducer generates are represented by {\em programs}, defined below.

Consider a finite set $M$ of memories, and sets $\H \subseteq I$ and $\G \subseteq O$ of input and output signals. Let $\P_{M,\H,\G} = (M \times 2^\H\rightarrow M\times 2^\G)$ denote the set of \emph{propositional programs} that update the memory state and assign values to signals in $\G$, given a memory in $M$ and an assignment to the signals in $\H$.
Note that each member of $\P_{M,\H,\G}$ is of the form $p:M \times 2^\H \rightarrow M\times 2^\G$.
For a program $p\in \P_{M,\H,\G}$, let $p_M:M\times 2^\H\rightarrow M$ and $p_\G:M\times 2^\H\rightarrow 2^\G$ be the projections of $p$ onto $M$ and $\G$ respectively. Thus, $p(m,h)=\zug{p_M(m,h),p_\G(m,h)}$ for all $\zug{m,h}\in M\times 2^\H$. 
In Section~\ref{sec programs} we discuss ways to restrict the set of programs that a TGE may suggest to its environment without affecting the outcome of the synthesis procedure. 

Now, a {\em TGE\/} is $\T=\zug{\V,\H,\C,\G,S,s_0,\delta,M,m_0,\tau}$, where $\V$ and $\H$ are sets of visible and hidden input signals, $\C$ and $\G$ are sets of controlled and guided output signals, $S$ is a finite set of states, $s_0 \in S$ is an initial state, $\delta:S\times 2^\V\rightarrow S$ is a transition function, $M$ is a set of memories that the environment may use, $m_0 \in M$ is an initial memory, and $\tau:S\times 2^\V \rightarrow 2^\C \times \P_{M,\H,\G}$ labels each transition by an assignment to the controlled output signals and a program.
Note that $\delta$ and $\tau$ are independent of the signals in $\H$, and that $\tau$ assigns values to the signals in $\C$ and instructs the environment how to use the signals in $\H$ in order to assign values to the signals in $\G$. The latter  reflects the fact that the environment does view the signals in $\H$, and constitute the main advantage of TGEs over standard transducers. 

Given an infinite sequence $\wi=(v_1 \cup h_1)\cdot (v_2 \cup h_2) \cdots\in (2^{\V \cup \H})^{\omega}$,
the run of $\T$ on $\wi$ is obtained by applying $\delta$ on the restriction of $\wi$ to $\V$. Thus, $r=s_0\cdot s_1\cdot s_2\cdots$, where for all $j \geq 1$, we have that $s_{j}=\delta(s_{j-1},v_j)$.
The interaction of $\T$ with the environment generates an infinite sequence $w_C=c_1 \cdot c_2 \cdot c_3 \cdots \in (2^\C)^\omega$ of assignments to the controlled signals, an infinite sequence $w_M=m_0\cdot m_1\cdot m_2 \cdots \in M^\omega$ of memories, and an infinite sequence $w_P = p_1\cdot p_2\cdot p_3\cdots  \in (\P_{M,\H,\G})^\omega$ of programs, which in turn generates an infinite sequence $w_G=d_1 \cdot d_2 \cdot d_3 \cdots \in (2^\G)^\omega$ of assignments to the guided signals. Formally, for all $j \geq 1$, we have that $\zug{c_j,p_j} = \tau(s_{j-1},v_j)$ and $\zug{m_j,d_j}=p_j(m_{j-1},h_j)$.
The computation of $\T$ on $\wi$ is then $\T(\wi)=(v_1 \cup h_1 \cup c_1\cup d_1)\cdot (v_2 \cup h_2 \cup c_2\cup d_2) \cdots \in (2^{\V \cup \H \cup C \cup \G})^{\omega}$.
%As with traditional transducers, the language of a TGE $\T$ is $L(\T)=\set{\T(w_I):w_I\in (2^I)^\omega}$.
 
Note that while the domain of the programs in $\P_{M,\H,\G}$ is $M \times 2^\H$, programs are chosen in $\T$ along transitions that depend on $2^\V$. Thus, effectively, assignments made by programs depend on signals in both $\H$ and $\V$.

For a specification language $L_\spec \subseteq (2^{I\cup O})^{\omega}$, partitions $I=\V\cup \H$ and $O=\C\cup \G$, and a bound $k \geq 1$ on the memory that the environment may use, we say that a TGE $\T$, {\em $\vhcg$-realizes $L_\spec$ with memory $k$}, if $\T=\zug{\V,\H,C,\G,S,s_0,\delta,M,m_0,\tau}$, with $|M|=k$, and 
%ofer: Changing to L(\T)
%orna1
$\T(\wi) \in L_\spec$, for every input sequence $\wi\in (2^I)^{\omega}$.
%$L(\T)\subseteq L_\spec$.
%ofer2: made into a single paragraph and slighted rephrased to make it less repetitive
Then, in the {\em synthesis with guided environment\/} problem (SGE, for short), given such a specification $L_\spec \subseteq (2^{I \cup O})^\omega$, a bound $k\geq 1$ and partitions $I=\V\cup \H$ and $O=\C\cup \G$, we should construct a TGE with memory $k$ that $\vhcg$-realizes $L_\spec$ (or determine that no such TGE exists). 
 
Let us consider the special case of TGEs that operate in an environment with no memory, thus $M=\{m\}$, for a single memory state $m$.
Consider first the setting where $\T$ has full visibility, thus $\V=I$ and $\H=\emptyset$. Then, programs are of the form $p:\set{m}\times 2^{\emptyset}\rightarrow \set{m}\times 2^\G$, and so each program is a fixed assignment to the signals in $\G$. Hence, TGEs that have full visibility and operate in an environment with no memory coincide with traditional transducers. 
Consider now a setting with partial visibility, thus $\V \neq I$. Now, programs are of the form $p:\set{m}\times 2^{\H}\rightarrow \set{m}\times 2^\G$, allowing the TGE to guide the environment in a way that depends on signals in $\H$. 

In Examples~\ref{app ex1} and~\ref{app ex2} below, we demonstrate how collaboration from the environment can lead to the realizability of specifications that are otherwise non-realizable. 

\begin{example}\label{app ex1}
{\bf [TGEs for simple specifications]}
Let $I=\{i\}$, $O=\{o\}$, and consider the simple specification $\spec_1=\GL (i \leftrightarrow o)$. 
Clearly, $\spec_1$ is realizable in a setting with full visibility, yet is not realizable in a setting with partial visibility and $\H=\{i\}$. A TGE can realize $\spec_1$ even in the latter setting. Indeed, even when the environment has no memory, it can delegate the assignment to $o$ to the environment and guide it to copy the value of $i$ into $o$. Formally, $\spec_1$ is realizable by the $\T_1=\zug{\emptyset,\{i\},\emptyset,\{o\},\{s\},\{s\},\delta,\{m\},m,\tau}$, with $\delta(s,\emptyset)=s$ and $\tau(s,\emptyset)=\zug{\emptyset,p}$, where the program $p \in \P_{\{m\},\{i\},\{o\}}$ is the command $o := i$. 

Consider now the specification $\spec_2=\GL(i\leftrightarrow \X o)$. Here, a TGE in a setting in which $i$ is not visible needs an environment with at least one register, inducing a memory of size $2$. Then, the TGE can instract the environment to store the value of $i$ in the  register, and use the stored value when assigning a value to the signal $o$ in the next round. Formally, $\spec_2$ is realizable by the TGE $\T_2=\zug{\emptyset,\{i\},\emptyset,\{o\},\{s\},\{s\},\delta,\{m_0,m_1\},m_0,\tau}$,with $\delta(s,\emptyset)=s$ and $\tau(s,\emptyset)=\zug{\emptyset,p}$, where $p \in \P_{\{m_0,m_1\},\{i\},\{o\}}$ instructs the environment to move to $m_0$ when $i=\False$ and to $m_1$ when $i=\True$, and to assigns $\False$ to $o$ when it is in $m_0$ and $\True$ when it is in $m_1$.
\hfill \qed \end{example}

\begin{example}\label{app ex2}
{\bf [A TGE implementing a server]}
Consider a server that directs users who want to upload data to a cloud.
The set of input signals is $I=\{\req,\sens\}$, where $\req$ holds when a user requests to upload data, and $\sens$ holds when the data is sensitive. The set of output signals is $O=\{\open,\high\}$, where $\open$ holds when the cloud is open for uploading, and $\high$ holds when the storage space is of high security.
Users pay more for storage of high security. Therefore, in addition to guaranteeing that all requests are eventually responded by $\open$, we want the server to direct the users to use storage of high security only when the data they upload is sensitive. In addition, the cloud cannot stay always open to uploads. Formally, we want to synthesize a transducer that realizes the conjunction $\spec$ of the following \LTL formulas. 
\begin{itemize}
	\item
	$\spec_1=\GL((\req \wedge \sens) \rightarrow ((\neg \open) \UN (\open \wedge \high)))$.
	\item
	$\spec_2=\GL((\req \wedge \neg \sens) \rightarrow ((\neg \open) \UN ((\open \wedge \neg \high) \vee (\req \wedge \sens)))$.
	\item
	$\spec_3=\GL\FU \neg \open$.
\end{itemize}

A server that has full visibility of $I$ can realize $\spec$. To see this, note
that a server can open the cloud for uploading whenever a request arrives, storing the data in a storage of high security  iff it is sensitive. Then, in order to satisfy $\spec_3$, the server should delay the response to successive requests, but this does not prevent it from satisfying $\spec_1$ and $\spec_2$.  

Users may prefer not to share with the server information about the sensitivity of their data.
In current settings of synthesis with partial visibility, this renders $\spec$ to be unrealizable.
Indeed, when the sensitivity of the data is hidden from the scheduler, thus when $\V=\{\req\}$ and $\H=\{\sens\}$, the behavior of the server is independent of $\sens$, making it impossible to assign values to $\high$ in a way that satisfies $\spec$ in all environments.

While the output signal $\open$ controls access to the cloud, the
output signal $\high$ only directs the user which type of storage to use, and it is of the user's interest to store her data in a storage of an appropriate security level. Accordingly, the environment can be guided with the assignment to $\high$. Thus, $\C=\{\open\}$ and $\G=\{\high\}$.

\begin{figure}[ht]
	\centering
	\includegraphics[width=0.48\textwidth]{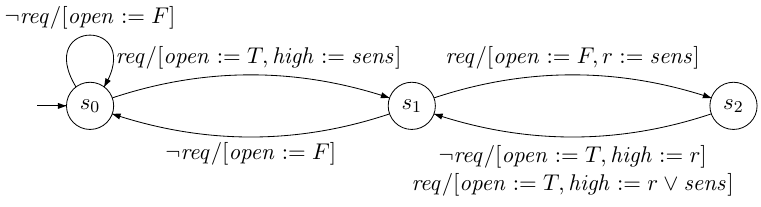}
	\caption{A transducer with a guided environment.}
	\label{fig storage}
\end{figure}

In Figure~\ref{fig storage}, we present a TGE that implements such a guidance and realizes $\spec$. 
Each transition of the TGE is labeled by a guarded command of the form $v / [c,p]$, where $v \in 2^\V$ is an assignment to the visible input signals (or a predicate describing several assignments), $c \in 2^\C$ is an assignment to the controlled output signals, and $p$ is a program that instructs the environment how to assign values to the guided signals and how to update its memory.\footnote{For clarity, in the TGE in Figure~\ref{fig storage},  the assignment to $\high$ is missing in transitions with $\open=\False$, where $\high$ can be assigned arbitrarily, and the assignment to the register $r$ is missing in transitions that are taken when there are no pending requests, in which case $r$ can be assigned arbitrarily.}
The environment uses a memory with one register $r$. Accordingly, the transitions of the TGE are guarded by the truth value of $\req$, they include an assignment to $\open$, and then a program that instructs the environment how to assign a value to $\high$ and to the register $r$. The key advantage of TGEs is that these instructions may depend on the hidden input signals. Indeed, the environment does know their values. For example, when the TGE moves from $s_1$ to $s_2$, it instructs the environment to assign to $r$ the value of $\sens$. Then, when the TGE moves from $s_2$ to $s_1$, it instructs the environment to use the value stored in $r$ (as well as the current value of $\sens$) when it assigns a value to $\high$. 
\hfill \qed \end{example}

\section{On the Memory Used by the Environment}

In this section we examine different aspects of the memory used by the environment. We discuss a trade-off between the size of the memory and the size of the TGE,  
and how the two depend on the partitions $I=\V\cup \H$ and $O=\C\cup \G$ of the input and output signals. 

\begin{example}
\label{illustrate memory}
In order to better understand the need for memory and the technical challenges around it, consider the tree appearing in Figure~\ref{fig:memorytree}.

\begin{figure}[ht]
	\centering
		\includegraphics[width=0.30\textwidth]{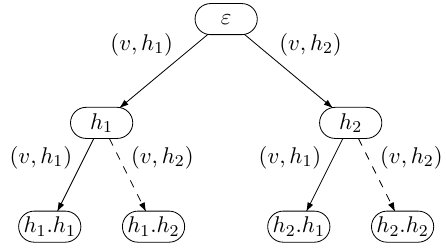} 
		\caption{Two rounds of interaction with a TGE $\T$.}
		\label{fig:memorytree}
\end{figure}

	The tree describes two rounds of an interaction between a TGE $\T$ and its environment. In both rounds, $\T$ views the assignment $v \in 2^\V$. Consider two assignments $h_1,h_2 \in 2^\H$. The tree presents the four possible extensions of the interaction by assignments in $\{h_1,h_2\}$. Since the signals in $\H$ are hidden, $\T$ cannot distinguish among the different branches of the tree. In particular, it has to suggest to the environment the same program in all the transitions from the root to its successors, and the same program in all the transitions from these successors to the leaves. While these programs may depend on $\H$, a program with no memory would issue the same assignment to the signals in $\G$ along the two dashed transitions $\zug{h_1,h_1.h_2}$ and $\zug{h_2,h_2.h_2}$.
	Indeed, in both transitions, the environment assigns $h_2 \in 2^\H$ to the hidden signals.  Using memory, $\T$ can instruct the environment to maintain information about the history of the computation, and thus distinguish between computations with the same visible component and the same current assignment to the hidden signals. For example, by moving to memory state $m_1$ with $h_1$ and to memory state $m_2$ with $h_2$, a program may instruct the environment to assign different values along the dashed lines.  
\hfill \qed \end{example}

We first observe that in settings in which all output signals can be guided, the TGE can instruct the environment to simulate a transducer that realizes the specification. Also, in settings with full visibility, the advantage of guided environments is only computational, thus specifications do not become realizable, yet may be realizable by smaller systems. Formally, we have the following. 

\begin{theorem}
\label{trade offs}
	Consider an LTL specification $\spec$ over $I \cup O$.
	\begin{enumerate}
		\item
		If $\spec$ is realizable by a transducer with $n$ states, then $\spec$ is $(\emptyset,I,\emptyset,O)$-realizable by a one-state TGE with memory $n$.
		\item
		For every partition $I=\V \cup \H$ and $O=\C \cup \G$, if $\spec$ is $(\V,\H,C,\G)$-realizable by a TGE with $n$ states and memory $k$, then $\spec$ is $(I,O)$-realizable (in a setting with full visibility) by a transducer with $n \cdot k$ states. In particular, if $\spec$ is $(\emptyset,I,\emptyset,O)$-realizable by a one-state TGE with memory $k$, then $\spec$ is $(I,O)$-realizable (in a setting with full visibility) by a transducer with $k$ states.
	\end{enumerate}
\end{theorem}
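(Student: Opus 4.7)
\smallskip

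\noindent\textbf{Proof proposal.} Both parts are essentially simulation arguments, and the plan is to give two explicit constructions matching the dimensions in the statement.

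For Part~1, I would let $\T=\zug{I,O,S,s_0,\delta,\tau}$ be the transducer that realizes $\spec$ with $|S|=n$, and build a TGE $\T'$ whose single state does nothing but whose environment memory stores the current state of $\T$. Concretely, I set $\T'=\zug{\emptyset,I,\emptyset,O,\{s\},s,\delta',S,s_0,\tau'}$, where $\delta'(s,\emptyset)=s$ and $\tau'(s,\emptyset)=\zug{\emptyset,p}$ for a single program $p \in \P_{S,I,O}$ defined by $p(s',i)=\zug{\delta(s',i),\tau(s',i)}$. Since $\V=\emptyset$ the TGE sees only $\emptyset$ at each step, and the environment, which sees every $i\in 2^I$, receives the same program $p$ on every round. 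An easy induction on the length of the prefix shows that on input $\wi$, the memory sequence of $\T'$ is exactly the run of $\T$ on $\wi$ and the guided outputs are exactly $\tau$'s outputs, so $\T'(\wi)=\T(\wi)\in L_\spec$.

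For Part~2, I would let $\T=\zug{\V,\H,\C,\G,S,s_0,\delta,M,m_0,\tau}$ be a TGE that $(\V,\H,\C,\G)$-realizes $\spec$ with $|S|=n$ and $|M|=k$, and build a classical (fully visible) transducer $\T''=\zug{I,O,S\times M,\zug{s_0,m_0},\delta'',\tau''}$. The idea is that the product state tracks both the TGE's state and the environment's memory; since $\T''$ is allowed to read all of $I=\V\cup\H$, it can compute both updates in one step. For a state $\zug{s,m}$ and input $i=v\cup h\in 2^{\V\cup\H}$, let $\zug{c,p}=\tau(s,v)$ and $\zug{m',d}=p(m,h)$, and set $\delta''(\zug{s,m},i)=\zug{\delta(s,v),m'}$ and $\tau''(\zug{s,m},i)=c\cup d$. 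A straightforward induction on the length of the prefix establishes that on every $\wi\in(2^I)^\omega$, the run of $\T''$ on $\wi$ has first component equal to the run of $\T$ on $\wi$ and second component equal to the memory sequence produced in the TGE's interaction, whence $\T''(\wi)=\T(\wi)\in L_\spec$. The ``in particular'' clause is the special case $n=1$ of this construction.

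I do not anticipate a genuine obstacle; both directions are conceptually symmetric, and the only thing to be careful about is the bookkeeping of which component (TGE state versus environment memory) is updated from which part of the input. The content of the theorem lies in matching the size parameters $n\cdot k$ and $n$, and the constructions above are tight in that respect.
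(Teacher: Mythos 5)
Your proposal is correct and matches the paper's proof essentially verbatim: Part~1 is the same one-state TGE with memory $S$ and the single program $p(s',i)=\zug{\delta(s',i),\tau(s',i)}$, and Part~2 is the same product construction on $S\times M$. If anything, your Part~2 is written slightly more carefully, since you make the memory argument $m$ explicit in $p(m,h)$ where the paper's notation elides it.
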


\begin{proof}
	We start with the first claim.
	Given a transducer $\T=\zug{I,O,S,s_0,\delta,\tau}$, we define a program $p \in \P_{S,I,O}$ that instructs the environment to simulate $\T$.
	Formally, for every $s \in S$ and $i \in 2^I$, we have that $p(s,i)=\zug{\delta(s,i),\tau(s,i)}$.
	Consider the TGE $\T'=\zug{\emptyset,I,\emptyset,O,\{s\},s,\delta',S,s_0,\tau'}$, with $\delta'(s,\emptyset)=s$ and $\tau'(s,\emptyset)=\zug{\emptyset,p}$. It is easy to see that for every $w_I\in (2^I)^\omega$, we have that $\T(w_I)=\T'(w_I)$.  In particular, if $\T$ realizes $\spec$, then so does $\T'$.
	
	We continue to the second claim. Note that the special case where $\V=C=\emptyset$ is the ``only if'' direction of the first claim. Given a TGE $\T=\langle \V,H,C,\G$, $S,s_0$, 
	$\delta,M$, $m_0,\tau\rangle$, consider the transducer $\T'=\zug{H \cup \V, \C\cup \G,S \times M, \zug{s_0,m_0},\delta',\tau'}$, where for all $\zug{s,m} \in S \times M$ and $i \in 2^{\H \cup \V}$ with $\tau(s,i \cap \H)=\zug{c,p}$, we have that $\delta'(\zug{s,m},i)=\zug{\delta(s,i \cap \V),p_{M}(i \cap \H)}$ and $\tau'(\zug{s,m},i) = c \cup p_{\G}(i \cap \H)$. It is easy to see that for every $w_I\in (2^I)^\omega$, we have that $\T(w_I)=\T'(w_I)$.  In particular, if $\T$ realizes $\spec$, then so does $\T'$.
\end{proof}

Note that Theorem~\ref{trade offs} also implies that an optimal balance between controlled and guided signals in a setting with full visibility can achieve at most a quadratic saving in the combined states space of the TGE and its memory. 
This is similar to the quadratic saving in defining a regular language by the intersection of two automata. Indeed, handling of two independent specifications can be partitioned between the TGE and the environment. Formally, we have the following. 

%ofer2: above-->over. Should it be "over" or "above"?
\begin{theorem}\label{tce comp quad}
	For all $\,k\geq 1$, there exists a specification $\spec_k$ over $I=\set{i_1,i_2}$ and $O=\set{o_1,o_2}$, such that $\spec_k$ is $(\set{i_1},\set{i_2},\set{o_1},\set{o_2})$-realizable by a TGE with $k$ states and memory $k$, yet a transducer that $(I,O)$-realizes $\spec_k$ needs at least $k^2$ states.
\end{theorem}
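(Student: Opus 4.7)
The plan is to choose $\spec_k$ as a conjunction of two subspecifications on disjoint signal pairs, so that the TGE can handle one via its states and delegate the other to the environment via a single program. For each $j\in \set{1,2}$, let $L_j\subseteq (2^{\set{i_j,o_j}})^\omega$ consist of all computations $\pi$ such that for every round $t\geq 1$, the signal $o_j$ holds in $\pi$ at round $t$ iff $|\set{1\leq t'\leq t : i_j\in \pi(t')}|\equiv 0\pmod k$. Let $L_{\spec_k}$ be the set of computations in $(2^{I\cup O})^\omega$ whose projections to $\set{i_j,o_j}$ lie in $L_j$ for both $j\in \set{1,2}$. Each $L_j$ is $\omega$-regular (recognized by a $k$-state modular counter), so $L_{\spec_k}$ is $\omega$-regular as well.

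For the upper bound, I would construct a TGE $\T_k$ with $S=M=\set{0,1,\ldots,k-1}$ and $s_0=m_0=0$, in which the state tracks the cumulative count (mod $k$) of rounds in which $i_1$ has been true, while the environment's memory tracks the analogous count for $i_2$. Formally, $\delta(s,v)=(s+1)\bmod k$ if $i_1\in v$ and $\delta(s,v)=s$ otherwise; the label $\tau(s,v)=\zug{c,p}$ assigns $c=\set{o_1}$ iff the updated state $\delta(s,v)$ equals $0$, paired with the fixed program $p\in \P_{M,\set{i_2},\set{o_2}}$ that increments the memory modulo $k$ whenever $i_2$ is true and assigns $\set{o_2}$ iff the updated memory equals $0$. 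A straightforward induction on $t$ shows that after round $t$, the state of $\T_k$ equals the $i_1$-count mod $k$ and the memory equals the $i_2$-count mod $k$; hence $\T_k$ $(\set{i_1},\set{i_2},\set{o_1},\set{o_2})$-realizes $\spec_k$ with memory $k$.

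For the lower bound, the key observation is that $\spec_k$ pins down the output at every round as a deterministic function of the input history: at time $t$, $o_j$ is forced to be true iff the number of rounds in which $i_j$ was true in $1,\ldots,t$ is divisible by $k$. Thus any transducer $\T$ that $(I,O)$-realizes $\spec_k$ must, after reading an input prefix $w$, be in a state whose future output behavior is fully determined by the pair $(a(w)\bmod k,\ b(w)\bmod k)$, where $a(w),b(w)$ are the counts of $i_1,i_2$ along $w$. I would then show that any two prefixes with $(a\bmod k,\ b\bmod k)\neq (a'\bmod k,\ b'\bmod k)$ are inequivalent: if the first coordinate differs, extend both prefixes with $(k-a)\bmod k$ rounds where $i_1=\True$ and $i_2=\False$, which forces $o_1=\True$ after $w$ but not after $w'$; symmetrically, using $i_2$-only rounds, a difference in the second coordinate is witnessed. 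Hence the $k^2$ residue pairs yield $k^2$ pairwise distinguishable Myhill-Nerode classes, so any $(I,O)$-realizing transducer has at least $k^2$ states.

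The main obstacle is conceptual rather than technical: one has to exploit the fact that the two subspecifications are completely independent over disjoint signal pairs, so that the $k^2$ states forced in a monolithic transducer factor exactly as $k\cdot k$ — the $k$-sized visible/controlled counter that the TGE keeps in its states, and the $k$-sized hidden/guided counter that the environment keeps in its memory under the TGE's guidance. Verifying that no clever shared encoding can collapse states below $k^2$ is exactly what the Myhill-Nerode argument on residue pairs provides.
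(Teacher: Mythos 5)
Your proposal follows essentially the same route as the paper: a conjunction of two independent counting requirements over the disjoint pairs $\set{i_1,o_1}$ and $\set{i_2,o_2}$, with the TGE keeping one $k$-state counter in its own states and delegating the other counter to the environment's memory via a single fixed program, and a product-style lower bound for any monolithic transducer. The one substantive difference is your choice of counter, and it creates a gap in the paper's context: your $L_j$ requires $o_j$ to track the number of $i_j$-occurrences \emph{modulo} $k$, and modular counting is the standard example of a regular but non-star-free, hence non-LTL-definable, property. Since this theorem is the matching lower bound to the quadratic saving of Theorem~\ref{trade offs}, which concerns LTL specifications (and the paper's own witness is the LTL formula $\spec^l_k=(\FU^k i_l)\iff\FU o_l$, a \emph{threshold} rather than a modular counter), your specification does not qualify as stated; the fix is simply to replace ``count $\equiv 0 \pmod k$'' by ``at least $k$ occurrences so far,'' after which your construction and argument go through essentially unchanged. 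Two smaller remarks. First, your lower bound is in fact more carefully worked out than the paper's sketch: because your specification forces the output at every round as a function of the input history, the Myhill--Nerode argument on the $k^2$ residue pairs is clean, whereas the paper only gestures at the analogous automata-intersection bound. Second, there is a trivial edge case in your distinguishing extensions: when $a=0$ the extension of length $(k-a)\bmod k=0$ is empty and distinguishes nothing; extend by $k$ (or by $k+((k-a)\bmod k)$ in general) instead.
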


\begin{proof}	
	For a proposition $p$, let $\FU^1 p = \FU p$, and for $j \geq 1$, let $\FU^{j+1} p= \FU(p \wedge \X(\FU^j p))$. Thus, $\FU^j p$ holds in a computation $\pi$ iff $p$ holds in at least $j$ positions in $\pi$. 

We define $\spec_k=\spec^1_k\wedge \spec^2_k$, where for $l \in \{1,2\}$, we have that $\spec^l_k=(\FU^k i_l)\iff \FU o_l$. That is, $\spec^l_k$ is over $\set{i_l,o_l}$, and it holds in a computation $\pi$ if $o_l$ is always $\False$ in $\pi$, unless there are $k$ positions in $\pi$ in which $i_l$ is $\True$, in which case $o_l$ has to be $\True$ in at least one position. 

We prove that an $(I,O)$-transducer that realizes $\spec_k$ needs $k^2$ states, yet there is a TGE that $(\set{i_1},\set{i_2},\set{o_1},\set{o_2})$-realizes $\spec_k$ with state space and memory set both of size $k$.

It is easy to see that $\spec_k$ is $(\set{i_l},\set{o_l})$-realizable by a transducer $\T_l^k$ with $k$ states $\{s_0,\ldots,s_{k-1}\}$. 
The state space serves like a counter. Transitions from state $s_j$, for $0\leq j< k-1$, assign $\False$ to $o_l$, looping when $i_1$ is $\False$ and moving to $s_{j+1}$ when $i_1$ is $\True$. From state $s_{k-1}$, the transducer always loops, copying the value of $i_l$ to $o_l$. 

 A TGE that $(\set{i_1},\set{i_2},\set{o_1},\set{o_2})$-realizes $\spec_k$ can have the state space of $\T_1^k$, and label its transitions by a program $p$ that instructs the environment to simulates $\T_2^k$, with memory that corresponds to the state space of $\T_2^k$. The details are similar to the simulation described in the proof of Theorem~\ref{trade offs}~(1). 
 
 On the other hand,  a transducer that $(I,O)$-realizes $\spec_k$ needs to maintain both a counter for $i_1$ and a counter for $i_2$, and so it needs at least $k^2$ states. The formal proof is similar to the proof that an automaton 
 for $F^k i_1 \wedge F^k i_2$ needs $k^2$ states. Essentially, a transducer with fewer states would reach the same state $s$ after reading two input sequences that differ in the number of occurrences of $i_1$ or $i_2$, and would then err on the output of at least one continuation of these sequences. 
\end{proof}
	
\stam{
\begin{proof}
	We construct a specification $\spec_k=\spec^1_k\wedge \spec^2_k$ where $\spec^1_k$ is above $\set{i_1,o_1}$ and $\spec^2_k$ is above $\set{i_2,o_2}$. In fact, $\spec^2_k$ is obtained from $\spec^1_k$ by renaming $i_1$ to $i_2$ and $o_1$ to $o_2$. The formula $\spec^1_k$ is chosen such that there is a transducer of size $k$ that $(\set{i_1},\set{o_1})$-realizes it when reading the single input signal ${i_1}$ and responding with the single output signal $o_1$. Naturally, by renaming the signals the same transducer can be used to realize $\spec^2_k$. This also means that the transducer of size $k^2$ that runs the two transducers in parallel $(I,O)$-realizes $\spec_k=\spec^1_k\wedge \spec^2_k$.
	
	Note also that a TGE with $I=\set{i_1,i_2} $ and $O=\set{o_1,o_2}$ may assign the value to $o_1$ according to the transducer that $(\set{i_1},\set{o_1})$-realizes $\spec^1_k$, and use its memory to embody the transducer that $(\set{i_2},\set{o_2})$-realizes $\spec^2_k$. The TGE will always send the same program $p$ that encodes the transition function and labeling function of the transducer for $\spec^2_k$. Thus, a transducer that $(\set{i_1},\set{o_1})$-realizes $\spec^1_k$ with $k$ states induces a TGE that $(\set{i_1},\set{i_2},\set{o_1},\set{o_2})$-realizes $\spec_k=\spec^1_k\wedge \spec^2_k$ with $k$ states and $k$ memories.
	
	Thus it is only left to find a specification $\spec^1_k$, such that there is a transducer that $(\set{i_1},\set{o_1})$-realizes $\spec^1_k$ with $k$ states, yet a transducer that $(I,O)$-realizes $\spec_k=\spec^1_k\wedge \spec^2_k$ needs at least $k^2$ states. 
	
	For an \LTL formula $\psi$ and $j\geq 0$ denote $F^1 \psi=F\psi$ and $F^{j+1} \psi = F(\psi\wedge \X (F^j \psi))$. Observe that $F^j \psi$ holds in a computation $\pi=\pi_1\cdot \pi_2\cdots \in (2^{I\cup O})^\omega$, iff there are $j$ indices $t_j>\ldots>t_2>t_1\geq 1$ such that $\pi^{t}\models \psi$. Namely, $F^j \psi$ holds if $\psi$ holds for $j$ times. 
	
	We take $\spec^1_k=(F^k i_1)\iff o_1$. It is not hard to see that $\spec_k^1$ is realizable by transducer of size $k$. Specifically, we have $k$ states that implement a counter. From state $0\leq j< k-1$ we always output $\emptyset\in 2^\set{o_1}$, and we stay in place unless we read the input $\set{i_1}$ and then move to state $j+1$. From state $k-1$ we always stay in place and output $\set{o_1}$ iff we read $\set{i_1}$. 
	
	We prove that a transducer that $(I,O)$-realizes $\spec_k=\spec^1_k\wedge \spec^2_k$ needs at least $k^2$. This follows from the known finite word automata intersection lower bound. We can think of a transducer that $(\set{i_1},\set{o_1})$-realizes $\spec^1_k$ as an automaton that accepts exactly all words that consist of $k$ input assignments where the value of ${i_1}$ is $\True$. Indeed, we may set a state of the transducer as accepting if there is a path labeled only with inputs where $i_1$ is $\False$ that reaches a transition that assigns $o_1$ with $\True$. Using a similar approach, one can translate a transducer that realizes $\spec_k$ into an automaton that accepts exactly all words that consists of $k$ input assignments where $i_1$ is $\True$, and $k$ input assignments where $i_2$ is $\True$. Such an automaton witnesses the fact that automata intersection involves taking a product, and hence needs $k^2$ states.  
}
	
	\stam{ 
	Let $\spec_k^1=(F^{k} i_1)\iff F o_1$ and $\spec_k^2=(F^{k}i_2)\iff F o_2$. Thus, for $j=1,2$ the formula $\spec^j_k$ asserts that there is a time when $o_j$ holds iff at least $k$ times $i_1$ holds. Consider the specification $\spec_k=\spec^1_k\wedge \spec^2_k$. It is not hard to see that $\spec_k^1$ and $\spec_k^2$ are realizable by transducers of size $k$ and with disjoint sets of signals. Specifically, there exist $\T_k^j=\zug{\set{i_j},\set{o_j},S_j,\set{s^j_0},\delta_j,\tau_j}$ for $j=1,2$, such that $\T_k^j$ realizes $\spec_k^j$ and $|S^j_k|=k$. The transducer $\T_k^j$ simply counts the number of times that $i_j$ holds, and outputs $o_j$ only when the count reached $k$. It is also not hard to see that in order to realize $\spec_k$ we need at least $k^2$ states as we need to hold two $k$ counters in parallel. Indeed, one may think of a transducer $\T$ that realizes $\spec$ as a finite state machine that reads letters in $\Sigma=2^I$, and traverses an edge labeled $S\in 2^O$ with $o_j\in S$ only if at least $k$ letters $\sigma$ with $i_j\in \sigma$ have been read. A machine with less than $k^2$ states will fail to count correctly the number of letters $\sigma$ with $i_j\in \sigma$ for both $j=1,2$.
	
	We show that $\spec_k$ is $(\set{i_1},\set{o_1},\set{i_2}\set{o_2})$-realizable by a TGE with $k$ transducer states and $k$ memory states. Consider the TGE $\T_k=\zug{\set{i_1},\set{o_1},\set{i_2}\set{o_2},S_1,s^1_0,\delta_1,S_2,s^2_0,\tau}$ with the same state space and transition function as $\T^1_k$, and with the labeling function $\tau:S_1\times 2^\set{i_1}\rightarrow 2^\set{o_2}\times \P_{S_2,\set{i_2},\set{o_2}}$ that is defined as follows. The controlled component of $\tau$ is simply induced from the labeling function of $\T^1_k$, and $\tau$ always generates the same program $p$ that allows the environment to simulate $\T^2_k$ and hence generate guided assignments just as the transducer $\T^2_k$. Formally, $\tau$ is defined for all $s\in S_1$ and $v\in 2^\set{i_1}$, by $\tau(s,v)=\zug{\tau_1(s,v),p}$ where $p\in \P_{S_2,\set{i_2},\set{o_2}}$ is defined for all $m\in S_2$ and $h\in 2^\set{i_2}$ by $p(m,h)=\zug{\delta_2(m,h),\tau_2(m,h)}$. 
	
	It is not hard to see that for all $w_V=v_1\cdot v_2\cdot v_3\cdots \in (2^\V)^\omega$ and $w_H\in h_1\cdot h_2\cdot h_3\cdots \in (2^\H)^\omega$, if we let $w_I=(v_1\cup h_1)\cdot (v_2\cup h_2)\cdot (v_3\cup h_3)\cdots \in (2^I)^\omega$, then $\T_k(w_I)=(v_1\cup h_1\cup c_1\cup d_1)\cdot (v_1\cup h_1\cup c_1\cup d_1)\cdots $, where $\T^1_k(w_V)=(v_1\cup c_1)\cdot (v_1\cup c_1)\cdots $ and $\T^2_k(w_\H)=(h_1\cup d_1)\cdot (h_1\cup d_1)\cdots$. Thus, as $\T^1(w_V)\models\spec^1_k$ and $\T^2_k(w_\H)\models\spec^2_k$, and $\spec^1_k$ and $\spec^1_k$ consider disjoint sets of signals, we conclude that $\T_k(w_I)\models\spec^1_k\wedge\spec^2_k$.
	\end{proof}
	}
	
	Unsurprisingly, larger memory of the environment enables TGEs to synthesize more specifications. Formally, we have the following.

\begin{theorem}\label{monotone mem}
	For every $k \geq 1$, the LTL specification $\spec_k=\GL(i \leftrightarrow \X^{k}o)$ is $(\emptyset,\{i\},\emptyset,\{o\})$-realizable by a TGE with memory $2^k$, and is not with memory $2^{k-1}$.
\end{theorem}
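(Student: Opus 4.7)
My plan is to prove the two directions separately, using a shift-register construction for the upper bound and a pigeonhole argument for the lower bound.

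For the upper bound, I will exhibit a one-state TGE whose memory $M = \{0,1\}^k$ has size $2^k$. Since $\V = \emptyset$, every transition carries the empty controlled assignment $\emptyset$ and the same program $p$. The program treats its memory as a $k$-bit shift register: on memory $(b_1,\ldots,b_k)$ and hidden input $h \in 2^{\{i\}}$, it outputs $b_1$ to $o$ and updates the memory to $(b_2,\ldots,b_k,\beta(h))$, where $\beta(h) = \True$ iff $h = \{i\}$. Initializing the memory to $(\False,\ldots,\False)$, a routine induction shows that for every $j \geq 1$ the output at step $j+k$ equals $\beta(h_j)$, which is exactly what $\spec_k = \GL(i \leftrightarrow \X^k o)$ requires (note that $\spec_k$ places no constraint on the first $k$ outputs).

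For the lower bound, suppose toward contradiction that some TGE $\T = \zug{\emptyset,\{i\},\emptyset,\{o\},S,s_0,\delta,M,m_0,\tau}$ with $|M| \leq 2^{k-1}$ $(\emptyset,\{i\},\emptyset,\{o\})$-realizes $\spec_k$. Since $\V = \emptyset$, the transition $\delta$ and labeling $\tau$ receive the only possible visible input $\emptyset$ at each step, so both the sequence of states $s_1, s_2, \ldots$ and the sequence of programs $p_1, p_2, \ldots$ are determined independently of the hidden input. For a hidden-input sequence $h_1 \cdot h_2 \cdots$, let $\mu_j(h_1,\ldots,h_j) \in M$ denote the memory after step $j$. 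I would then apply the pigeonhole principle to $\mu_k$: its domain has $|2^{\{i\}}|^k = 2^k$ elements while its range has at most $2^{k-1}$ elements, so two distinct prefixes $h_1 \ldots h_k$ and $h'_1 \ldots h'_k$ reach a common memory $m$, and they disagree in some coordinate $\ell \in \{1,\ldots,k\}$.

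To finish, I extend both prefixes with a common suffix (say, $\emptyset$ repeated) of length at least $\ell$. From step $k+1$ onward the two runs feed the same program sequence $p_{k+1}, p_{k+2}, \ldots$ the same inputs starting from the same memory $m$, so their memories and outputs coincide at every step beyond $k$. In particular, the output at step $k+\ell$ is identical for both extensions. However, $\spec_k$ forces this output to equal $\beta(h_\ell)$ on one run and $\beta(h'_\ell)$ on the other, and these values differ by the choice of $\ell$, a contradiction. I expect the main subtlety to be bookkeeping: keeping straight the off-by-one between the memory entering step $j$ and the input consumed at step $j$, and confirming that the programs after step $k$ are shared by both runs precisely because the TGE's state and program sequences depend only on $\V = \emptyset$.
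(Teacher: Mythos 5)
Your proof is correct and follows essentially the same route as the paper: a shift-register program for the upper bound, and for the lower bound a collision of two length-$k$ hidden-input prefixes on a common memory state (the paper phrases this as injectivity of the prefix-to-memory map $f$), extended by a common suffix so that the outputs agree at a position where $\spec_k$ demands they differ. If anything, your index bookkeeping is the cleaner of the two: pigeonholing length-$k$ prefixes into at most $2^{k-1}$ memories is exactly what is needed to rule out memory $2^{k-1}$, whereas the paper's write-up announces injectivity on $(2^{\{i\}})^{k}$ but then argues with length-$(k-1)$ prefixes.
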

	
\begin{proof}
We first describe a TGE with memory that is composed of $k$ registers $\set{r_1,r_2,\ldots,r_k}$, that $(\emptyset,\{i\},\emptyset,\{o\})$-realize $\spec_k$. The TGE has a single state in which it instructs the environment to assign to $o$ the value stored in $r_k$, and to shift the stored values to the right, thus with $r_1$ getting the current value of $i$ and $r_{j+1}$ getting the value stored in $r_j$, for $1 \leq j < k$. This guarantees that the value assigned to $o$ agrees with the value that $i$ was assigned with $k$ rounds earlier, and so $\spec_k$ is satisfied in all environments.

We continue and prove the lower bound. I.e., that a TGE with less than $2^{k-1}$ memory states cannot $(\emptyset,\{i\},\emptyset,\{o\})$-realize $\spec_k$. Consider a TGE $\T$ that $(\emptyset,\{i\},\emptyset,\{o\})$-realizes $\spec$ with memory set $M$. As $\T$ has no visibility, it generates the same sequence $\zug{\emptyset,p^1},\zug{\emptyset,p^2},\zug{\emptyset,p^3},\ldots \in (2^{\{\emptyset\}} \times \P_{M,\{i\},\{o\}})^\omega$, for all input sequences. Consider the function $f:(2^\set{i})^*\rightarrow M$ that maps a finite input sequence $x\in (2^\set{i})^*$ to the memory in $M$ that is reached after the environment generated $x$. Thus, $f(\varepsilon)=m_0$, and for every $x\in (2^\set{i})^*$ and $a\in 2^\set{i}$, we have $f(x\cdot a)=p^{|x|+1}_M(f(x),a)$.

We prove that $f$ is injective on $(2^\set{i})^{k}$. That is, we prove that for every two input sequences $x,y \in (2^\set{i})^{k}$, if $x \neq y$, then $f(x) \neq f(y)$. It will then follow that $|M|\geq 2^{k}$, as required.

Assume by way of contradiction that there are two input sequences $x=x_1\cdot x_2\cdots x_{k-1}\in (2^\set{i})^{k-1}$ and $y=y_1\cdot y_2\cdots y_{k-1}\in (2^\set{i})^{k-1}$, such that $x \neq y$, yet $f(x)=f(y)=m$. 
Since $x \neq y$, there is some position $t\leq k-1$ such that  $x_{t}\neq y_{t}$.

Consider the infinite input sequences $x'=x\cdot \emptyset^\omega$ and $y'=y\cdot \emptyset^\omega$. By our assumption, $\T$ reaches the same memory state $m$ after the first $k-1$ rounds of its interaction with environments that generate $x'$ and $y'$. Since $x'$ and $y'$ agree on their suffix after these rounds, we have that $\T(x')$ and $\T(y')$ agree on the assignment to $o$ on all positions after $k$. In particular, $\T(x')$ and $\T(y')$ agree on the assignment to $o$ in position $k+t$, contradicting the fact that $i$ holds only in one of the assignments $x_{t}$ and $y_t$. Thus, $\T$ does not $(\emptyset,\{i\},\emptyset,\{o\})$-realizes $\spec$, and we have reached a contradiction.
\end{proof}

We turn to consider how changes in the partition of the input signals to visible and hidden signals, and the output signals to controlled and guided ones, 
may affect the required size of the TGE and the memory required to the environment. In \autoref{mem suff}, we show that increasing visibility or decreasing control, does not require more states or memory. On the other hand, in \autoref{mem not suff}, we show that increasing control by the system may require the environment to use more memory. Thus, surprisingly, guiding more signals does not require more memory, yet guiding fewer signals may require more memory. Intuitively, it follows from the can that guiding fewer signals may force the TGE to satisfy the specification in alternative (and more space consuming) ways. 

\begin{theorem}
\label{mem suff}
	Consider an LTL specification $\spec$ that is $(\V,H,C,\G)$-realizable by a TGE with $n$ states and memory $k$.
	\begin{enumerate}
		\item
		For every $I' \subseteq \H$, we have that $\spec$ is $(\V \cup I',\H \setminus I',\C,\G)$-realizable by a TGE with $n$ states and memory $k$.
		\item
		For every $O' \subseteq \C$, we have that $\spec$ is $(\V,\H,C \setminus O',\G \cup O')$-realizable by a TGE with $n$ states and memory $k$.
	\end{enumerate}
\end{theorem}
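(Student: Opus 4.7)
The plan is to prove both parts by direct constructions that transform the given TGE $\T=\zug{\V,\H,\C,\G,S,s_0,\delta,M,m_0,\tau}$ into a new TGE $\T'$ that operates under the new partition, keeps the same state space $S$ and memory set $M$ (hence the same $n$ and $k$), and generates the same computation as $\T$ on every input sequence $\wi \in (2^I)^\omega$. Since computations are preserved, so is realization of $\spec$. I would prove equality of computations by a straightforward induction on the length of a prefix of $\wi$, tracking the state, memory, controlled assignment, and guided assignment generated at each step.

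For part~(1), fix $I' \subseteq \H$ and set $\V'=\V\cup I'$, $\H'=\H\setminus I'$. The intuition is that signals moving from $\H$ to $\V$ were previously fed into the programs emitted by $\T$, and now they can be ``baked into'' those programs at the moment the TGE emits them. Concretely, I would keep $S$, $s_0$, $M$, $m_0$ unchanged, define $\delta'(s,v\cup i')=\delta(s,v)$ for $v \in 2^\V$ and $i' \in 2^{I'}$ (so the extra visible bits are ignored at the transition level, ensuring the state count stays $n$), and, whenever $\tau(s,v)=\zug{c,p}$, set $\tau'(s,v\cup i')=\zug{c,p'}$ where $p' \in \P_{M,\H',\G}$ is the partial application $p'(m,h')=p(m,h'\cup i')$. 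For every $\wi = (v_1 \cup h'_1 \cup i'_1)\cdot(v_2 \cup h'_2 \cup i'_2)\cdots$, the step-by-step computation of $\T'$ matches that of $\T$ on the same $\wi$ viewed under the original partition.

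For part~(2), fix $O' \subseteq \C$ and set $\C'=\C \setminus O'$, $\G'=\G \cup O'$. The idea is symmetric: a controlled assignment produced directly by $\T$ can be folded into the program as a constant contribution that does not actually depend on the hidden input. I would keep $S$, $s_0$, $M$, $m_0$, and $\delta$ unchanged; for $\tau(s,v)=\zug{c,p}$, split $c=c_0 \cup c_1$ with $c_0=c \cap \C'$ and $c_1=c \cap O'$, and define $\tau'(s,v)=\zug{c_0,p'}$, where $p' \in \P_{M,\H,\G'}$ is given by $p'(m,h)=\zug{p_M(m,h),\,p_\G(m,h) \cup c_1}$. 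Again, a direct comparison of the two interactions shows that on every $\wi$ the computation of $\T'$ equals that of $\T$, so $\T'$ $(\V,\H,\C',\G')$-realizes $\spec$ with the same $n$ and $k$.

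No deep obstacle is expected; both constructions are essentially bookkeeping. The only subtleties to keep track of are that in part~(1) the program's domain genuinely shrinks from $2^\H$ to $2^{\H'}$ (so the current visible bits $i'$ must be supplied to $p$ at the moment $\tau'$ emits the program, not later by the environment), and that in part~(2) the program $p'$ is allowed to be constant in its $h$-argument on the $O'$-component, which is consistent with the definition of $\P_{M,\H,\G'}$. The verification that $\T'(\wi)=\T(\wi)$ in both cases is then a routine unfolding of definitions.
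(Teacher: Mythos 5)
Your proposal is correct and matches the paper's proof essentially verbatim: part~(1) partially applies the emitted program to the newly visible bits $i'$ while ignoring them in $\delta'$, and part~(2) folds the $O'$-portion of the controlled assignment into the program as a constant, in both cases preserving $S$, $M$, and the generated computation. No further comment is needed.
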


\begin{proof}
Let $\T=\zug{\V,H,C,\G,S,s_0,\delta,M,m_0,\tau}$ be a TGE that $\vhcg$-realizes $\spec$. We start with the first claim.
Given $I' \subseteq \H$, consider the TGE $\T'=\zug{\V \cup I',\H\setminus I' ,C,G,S,s_0,\delta',M,m_0,\tau'}$, where for every $s \in S$ and $u \in 2^{\V \cup I'}$, we have that $\delta'(s,u)=\delta(s,u \cap \V)$, and if $\tau(s,u \cap \V)=\zug{c,p}$,
then $\tau'(s,u)=\zug{c,p'}$, where for all $g \in 2^{\H \setminus I'}$, we have that $p'(g)=p(g \cup (u \cap I'))$.
It is easy to see that $\T'$ has the same size and memory as $\T$, and that  for every $w_I\in (2^I)^\omega$, we have that $\T(w_I)=\T'(w_I)$.  In particular, if $\T$ realizes $\spec$, then so does $\T'$.

We continue to the second claim. Given $O' \subseteq \C$, consider the TGE $\T'=\zug{\V,\H,\C \setminus O',\G \cup O',S,s_0,\delta,M,m_0,\tau'}$, where for every $s \in S$ and $v \in 2^\V$ with $\tau(s,v)=\zug{c,p}$, we have that $\tau'(s,v)=\zug{c \setminus O',p'}$, where for every $h \in 2^\H$ with $p(h)=\zug{m,d}$, we have $p'(h)=\zug{m,d \cup (c \cap O')}$. It is easy to see that $\T'$ has the same size and memory as $\T$, and that  for every $w_I\in (2^I)^\omega$, we have that $\T(w_I)=\T'(w_I)$.  In particular, if $\T$ realizes $\spec$, then so does $\T'$.
\end{proof}

\begin{theorem}\label{mem not suff}
	For every $m \geq 2$, there is an LTL specification $\spec_m$ over $\{i, o_0$, $o_1$,  $\ldots, o_m\}$ such that for every $0 \leq j \leq m$, we have that $\spec_m$ is $(\emptyset$, $\{i\}$, $\{o_0, \ldots,o_{j-1}\}$, $\{o_{j},\ldots,o_m\})$-realizable by a TGE with memory~$2^{j}$ and is not $(\emptyset,\{i\}$, $\{o_0, \ldots,o_{j-1}\}$, $\{o_{j},\ldots,o_m\})$-realizable by a TGE with memory~$2^{j}-1$.
	%ofer: short
	% \begin{enumerate}
	% 	\item
	% 	$\spec_m$ is $(\emptyset,\{i\},\emptyset,\{o_1,\ldots,o_m\})$-realizable by a TGE with memory~$1$.  
	% 	\item
	% For every $1 < j \leq m$, we have that $\spec_m$ is $(\emptyset$, $\{i\}$, $\{o_1, \ldots,o_j\}$, $\{o_{j+1},\ldots,o_m\})$-realizable by a TGE with memory~$j+1$ and is not $(\emptyset,\{i\}$, $\{o_1, \ldots,o_j\}$, $\{o_{j+1},\ldots,o_m\})$-realizable by a TGE with memory~$j$.
	% \end{enumerate}
\end{theorem}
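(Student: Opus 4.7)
The plan is to take
$$\spec_m \;=\; \bigvee_{l=0}^{m} \Bigl( \GL(i \iff \X^l o_l) \,\wedge\, \bigwedge_{l' \neq l} \GL(\neg o_{l'}) \Bigr),$$
which forces the TGE to commit to a single index $l^* \in \{0,\ldots,m\}$, mirror $i$ in $o_{l^*}$ delayed by $l^*$ rounds, and hold every other output permanently $\False$. The trailing conjunction $\bigwedge_{l' \neq l} \GL(\neg o_{l'})$ is crucial: without it, a TGE could satisfy different disjuncts on different input sequences, and only a weaker lower bound would follow.

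For the upper bound with partition $\C = \{o_0,\ldots,o_{j-1}\}$, $\G = \{o_j,\ldots,o_m\}$, I would realize the $l=j$ disjunct: fix each controlled $o_{l'}$ ($l'<j$) to $\False$ via the TGE's labeling, instruct the environment to always assign $\False$ to each guided $o_{l'}$ ($l'>j$), and use the $2^j$ memory states as a $j$-bit shift register whose oldest stored bit is emitted as $o_j(t) = i(t-j)$, exactly as in the construction of Theorem~\ref{monotone mem}.

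For the lower bound, I would assume for contradiction a TGE $\T$ with $|M| < 2^j$ that $\vhcg$-realizes $\spec_m$. Applying pigeonhole to the map $w \mapsto m_j^{(w)}$ from $(2^\H)^j$ to $M$, I pick colliding $w \neq w'$ and a difference position $t^* \in \{1,\ldots,j\}$, and consider the inputs $w\cdot \emptyset^\omega$ and $w'\cdot \emptyset^\omega$. Because the memories coincide at round $j$ and subsequent inputs match, $\T$'s outputs on both inputs are identical for every $t > j$. Each input must satisfy some disjunct; call the chosen indices $l_w$ and $l_{w'}$.

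The main obstacle is the case analysis ruling out every combination of $(l_w, l_{w'})$. When $l_w = l_{w'} = l$: at time $l + t^* > j$, which holds for $l \geq j$ (and for $l < j$ the signal $o_l$ is controlled, hence a fixed function of time), the disjunct demands $o_l = w_{t^*}$ on one input and $o_l = w'_{t^*}$ on the other, yet the two outputs coincide while $w_{t^*} \neq w'_{t^*}$, a contradiction. When $l_w \neq l_{w'}$: the $l_w$-th disjunct on $w \cdot \emptyset^\omega$ forces $o_{l_{w'}} \equiv \False$, so by output-agreement past round $j$ (if $l_{w'} \geq j$) or directly by $o_{l_{w'}}$ being the same fixed controlled signal on both inputs (if $l_{w'} < j$), we also have $o_{l_{w'}}(t) = \False$ on $w' \cdot \emptyset^\omega$ throughout the window $t - l_{w'} \in \{1,\ldots,j\}$; but the $l_{w'}$-th disjunct demands $o_{l_{w'}}(t) = w'_{t - l_{w'}}$ on that window, forcing $w'_s = \False$ for all $s \in \{1,\ldots,j\}$, i.e.\ $w = 0^j$ (up to the relabeling; using the mirror argument on $w\cdot \emptyset^\omega$ also gives $w' = 0^j$), contradicting $w \neq w'$.
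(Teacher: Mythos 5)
Your proof is correct, but it takes a genuinely different route from the paper's. The paper uses the plain disjunction $\spec_m=\psi_0\vee\cdots\vee\psi_m$ with $\psi_l=\GL(i\leftrightarrow\X^l o_l)$, and its lower-bound argument is a short informal one: a TGE that controls $o_0,\ldots,o_{j-1}$ cannot realize $\psi_0\vee\cdots\vee\psi_{j-1}$ since $i$ is hidden, hence must realize some $\psi_l$ with $l\geq j$, which needs $j$ registers. You instead strengthen each disjunct with $\bigwedge_{l'\neq l}\GL(\neg o_{l'})$, and your remark that this addition is what makes a rigorous pigeonhole argument go through is a genuine insight: with the paper's plain disjunction, the colliding prefixes $w$ and $w'$ could a priori be served by different disjuncts $l_w\neq l_{w'}$, and satisfying $\psi_{l_w}$ on one computation then says nothing about $o_{l_{w'}}$ on it, which is exactly the gap your mutual-exclusion conjuncts close (your case $l_w\neq l_{w'}$). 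Since the theorem is existential in $\spec_m$, substituting your stronger specification is legitimate, and both your upper bound (realize the $l=j$ disjunct with a $j$-bit shift register, everything else constantly $\False$) and your case analysis (using that with $\V=\emptyset$ the controlled outputs and the program sequence are input-independent, and that colliding memories at round $j$ force agreement of all guided outputs from round $j+1$ on) are sound. One cosmetic slip: at the end of the $l_w\neq l_{w'}$ case you derive $w'_s=\False$ for all $s$ but write ``$w=0^j$''; the two roles of $w$ and $w'$ are swapped there, though the intended conclusion $w=w'=0^j$, contradicting $w\neq w'$, is exactly what your two symmetric halves establish. What your approach buys is a complete, checkable proof; what the paper's buys is a syntactically simpler specification at the cost of a lower-bound argument that elides precisely the cross-disjunct interaction you handle explicitly.
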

\begin{proof}
	We define $\spec_m=\psi_0 \vee \psi_1 \vee \cdots \vee \psi_m$, where for all $0 \leq j \leq m$, we have that $\psi_j=\GL(i \leftrightarrow \X^{j}o_j)$. That is, $\spec_m$ requires the existence of  $0 \leq j \leq m$ such that $o_j$ always copies $i$ with a delay of $j$ steps. 
	
	A TGE that does not guide $o_0,\ldots,o_{j-1}$, cannot realize the disjunction $\psi_0\lor\ldots\lor\psi_{j-1}$, as $i$ is hidden, and should thus guide the environment in a way that causes the realization of $\psi_{j} \vee \cdots \vee \psi_m$. This, however, must  involve at least $j$ registers that maintain the last $j$ values of $i$, enabling the TGE to realize $\psi_{j}$ and hence requires a memory of size $2^j$.
\end{proof}

\section{Solving the SGE Problem}\label{solution}

Recall that in the SGE problem, we are given an \LTL specification $\spec$ over $I \cup O$, partitions $I=\V\cup \H$ and $O=\C\cup \G$, and a bound $k\geq 1$, and we have to return a TGE that $\vhcg$-realizes $\spec$ with memory $k$. 
The main challenge in solving the SGE problem is that it adds to the difficulties of synthesis with partial visibility the fact that the system's interactions with environments that agree on the visible signals may generate different computations. Technically, the system should still behave in the same way on input sequences that agree on the visible signals. Thus, the interaction should generate the same assignments to the controlled output signals and the same sequence of programs. However, these programs may generate different computations, as they also depend on the hidden signals. 

Our solution uses {\em alternating tree automata}, and we start by defining them.

\subsection{Words, trees, and automata}
\label{subsec:words-trees-automata}
An \emph{automaton} on infinite words is $\A = \zug{ \Sigma, Q, q_0, \eta, \alpha }$, where $\Sigma$ is an alphabet, $Q$ is a finite set of states, $q_0\in Q$ is an initial state, $\eta: Q\times \Sigma \to 2^Q$ is a transition function, and $\alpha$ is an acceptance condition to be defined below. 
%For states $q,s \in Q$ and a letter $\sigma \in \Sigma$, we say that $s$ is a $\sigma$-successor of $q$ if $s \in \eta(q,\sigma)$. 
%short
%We consider automata with a total transition function. 
%That is, for every state $q\in Q$ and letter $\sigma\in \Sigma$, we have that $|\eta(q,\sigma)|\geq 1$.
If $|\eta(q, \sigma)| = 1$ for every state $q\in Q$ and letter $\sigma \in \Sigma$, then $\A$ is \emph{deterministic}.

A \emph{run}  of $\A$ on an infinite word $w = \sigma_1 \cdot \sigma_2 \cdots \in \Sigma^\omega$ is an infinite sequence of states $r = r_0\cdot r_1\cdot r_2\cdots \in Q^\omega$, such that $r_0 = q_0$, and for all $i \geq 0$, we have that $r_{i+1} \in \eta(r_i, \sigma_{i+1})$.  The acceptance condition $\alpha$ defines a subset of $Q^\omega$, indicating which runs are {\em accepting}. We consider here the \emph{B\"uchi}, \emph{co-B\"uchi}, and {\em parity\/} acceptance conditions. 
%ofer2: added back the definitions of the acceptance conditions
All conditions refer to the set  ${\it inf}(r)\subseteq Q$ of states that $r$ traverses infinitely often. Formally, ${\it inf}(r) = \{  q \in Q: q = r_i \text{ for infinitely many $i$'s}   \}$. 
In a B\"uchi automaton, the acceptance condition is $\alpha\subseteq Q$ and a run $r$ is accepting if ${\it inf}(r)\cap \alpha\neq \emptyset$. Thus, $r$ visits $\alpha$ infinitely often. Dually, in a co-B\"uchi automaton, a run $r$ is accepting if ${\it inf}(r)\cap \alpha = \emptyset$. Thus, $r$ visits $\alpha$ only finitely often. Finally, in a parity automaton $\alpha:Q\to \{1,...,k\}$ maps states to ranks, and a run $r$ is accepting if the maximal rank of a state in ${\it inf}(r)$ is even. Formally, $\max_{q \in {\it inf}(r)} \{\alpha(q)\}$ is even. 
A run that is not accepting is \emph{rejecting}.  

Note that when $\A$ is not deterministic, it has several runs on a word. 
%orna2 short
%We distinguish between two branching modes. 
If $\A$ is a {\em nondeterministic\/} automaton, then 
a word $w$ is accepted by $\A$ if there is an accepting run of $\A$ on $w$. 
If $\A$ is a {\em universal\/} automaton, then 
a word $w$ is accepted by $\A$ if all the runs of $\A$ on $w$ are accepting.
The language of $\A$, denoted $L(\A)$, is the set of words that $\A$ accepts. %Two automata are \emph{equivalent} if their languages are equivalent. 

Given a set $\Upsilon$ of directions, the {\em full $\Upsilon$-tree} is the set $T=\Upsilon^*$. 
The elements of $T$ are called {\em nodes},
and the empty word $\varepsilon$ is the {\em root} of $T$.
An {\em edge} in $T$ is a pair $\zug{x,x \cdot a}$, for $x \in \Upsilon^*$ and $a \in \Upsilon$. 
The node $x \cdot a$ is called a {\em successor} of $x$. 
A {\em path} $\pi$ of $T$ is a set $\pi \subseteq T$ such that
$\varepsilon \in \pi$ and for every $x \in \pi$, 
there exists a unique $a \in \Upsilon$ such that $x \cdot a \in \pi$.
We associate an infinite path $\pi$ with the infinite word in $\Upsilon^\omega$ obtained by concatenating the directions taken along $\pi$. 

Given an alphabet $\Sigma$, a {\em $\Sigma$-labeled $\Upsilon$-tree\/} is a pair $\zug{T,\ell}$, where $T=\Upsilon^*$ and $\ell$ labels each edge of $T$ by a letter in $\Sigma$. That is, $\ell(x,x\cdot a)\in \Sigma$ for all $x\in T$ and $a\in \Upsilon$.
Note that an infinite word in $\Sigma^\omega$ can be viewed as a $\Sigma$-labeled $\{1\}$-tree.

A {\em universal tree automaton\/} over $\Sigma$-labeled $\Upsilon$-trees
is $\A=\langle\Sigma,\Upsilon$, $Q$, $q_{0}, \eta,\alpha\rangle$, where
$\Sigma$, $Q$, $q_0$, and $\alpha$ are as in automata over words, $\Upsilon$ is the set of directions,
and
$\eta:Q \times \Sigma \times \Upsilon\rightarrow 2^Q$ is a transition function.

%orna2: some changes
Intuitively, $\A$ runs on an input $\Sigma$-labeled $\Upsilon$-tree $\zug{T,\ell}$ as follows. 
The run starts with a single copy of $\A$ in state $q_0$ that has to accept the subtree of $\zug{T,\ell}$ with root $\varepsilon$. %(that is, it has to accept $\zug{T,\ell}$).
When a copy of $\A$ in state $q$ has to accept the subtree of $\zug{T,\ell}$ with root $x$, it goes over all the directions in $\Upsilon$. For each direction $a \in \Upsilon$, it proceeds according to the transition function $\eta(q,\sigma,a)$, where $\sigma$ is the letter written on the edge $\zug{x,x \cdot a}$ of $\zug{T,\ell}$. That is, $\sigma=\ell(x,x\cdot a)$. The copy splits into $|\eta(q,\sigma,a)|$ copies: for each 
$q' \in \eta(q,\sigma,a)$, a copy in state $q'$ is created, and it has to accept the subtree with root $x \cdot a$. 

Formally, a \emph{run} of $\A$ over a $\Sigma$-labeled $\Upsilon$-tree $\zug{T,\ell}$, is a tree with directions in $\Upsilon \times Q$ and nodes labeled by pairs in $T \times Q$ that describe how the different copies of $\A$ proceed. Thus, a run is a pair $\zug{T_r,r}$, where $T_r \subseteq (\Upsilon \times Q)^*$ and $r:T_r \rightarrow T \times Q$ are defined as follows. 
\begin{itemize}
	\item
	$\varepsilon \in T_r$ and $r(\varepsilon)=\zug{\varepsilon,q_0}$. Thus, the run starts with a single copy of $\A$ that reads the root of $\zug{T,\ell}$ and is in state $q_0$.
	\item
	Consider a node $y \in T_r$ with $r(y)=\zug{x,q}$. Recall that $y$ corresponds to a copy of $\A$ that reads the node $x$ of $\zug{T,\ell}$ and is in state $q$. 
	For a direction $a \in \Upsilon$ let $\sigma_a=\ell(x, x\cdot a)$. For every state $q' \in \eta(q,\sigma_a,a)$, we have that $y \cdot \zug{a,q'} \in T_r$ and $r(y \cdot \zug{a,q'})=\zug{x \cdot a,q'}$. 
	Thus, the run sends $|\eta(q,\sigma_a,a)|$ copies to the subtree with root $x \cdot a$, one for each states in $\eta(q,\sigma_a,a)$. 
\end{itemize}

Acceptance is defined as in automata on infinite words, except that now we define, 
given a run $\zug{T_r,r}$ and an infinite path
$\pi \subseteq T_r$, the set $inf(\pi) \subseteq Q$ as the set of states that are visited along $\pi$ infinitely often, thus  $q \in inf(\pi)$ if and only if there are infinitely many nodes $y \in \pi$ for
which $r(y) \in T \times \{q\}$.
We denote by $L(\A)$ the set of all $\Sigma$-labeled $\Upsilon$-trees that $\A$ accepts.
 
%ofer2: replaced G with D in the acronyms (for deterministic)
We denote the different classes of automata by three-letter acronyms in $\{ \text{D,N,U} \} \times \{ \text{B,C,P}\} \times \{\text{W,T}\}$. The first letter stands for the branching mode of the automaton (deterministic, nondeterministic, or universal); the second for the acceptance condition type (B\"uchi, co-B\"uchi, or parity); and the third indicates we consider automata on words or trees. 
For example, UCTs are universal co-B\"uchi tree automata.

\subsection{An automata-based solution}
\label{automata sol}

Each TGE $\T=\zug{\V,H,\C,\G,S,s_0,\delta,M,m_0,\tau}$ induces a $(2^\C\times \P_{M,\H,\G})$-labeled $2^\V$-tree $\zug{(2^\V)^*,\ell}$, obtained by simulating the interaction of $\T$ with all input sequences.  
Formally, let $\delta^*:(2^\V)^* \rightarrow S$ be an extension of $\delta$ to finite sequence in $(2^\V)^*$, starting from $s_0$. Thus, $\delta^*(w)$ is the state that $\T$ visits after reading $w \in (2^\V)^*$. Formally, 
$\delta^*(\varepsilon)=s_0$, and for $w \in (2^\V)^*$ and $v \in 2^\V$, we have that $\delta^*(w \cdot v)=\delta(\delta^*(w),v)$. Then, the tree $\zug{(2^\V)^*,\ell}$ is such that $\ell(w,w\cdot v)= \tau(\delta^*(w),v)$. 
 
Given $w_V=v_1\cdot v_2\cdot v_3\cdots \in (2^\V)^\omega$, let $\ell(w_V)=\zug{c_1,p_1}\cdot \zug{c_2,p_2}\cdot \zug{c_3,p_3}\cdots\in (2^C\times\P_{M,\H,\G})^\omega$ be the sequence of labels along the path induced by $w_V$. For every $j \geq 1$, $\zug{c_j,p_j}=\ell(\zug{v_1,v_2, \ldots v_{j-1}},\zug{v_1,v_2,\ldots, v_{j}})$. 
It is convenient to think of $\ell(w_V)$ as the operation of $\T$ while interacting with an environment that generates an input sequence $(v_1\cup h_1)\cdot (v_2\cup h_2)\cdot (v_2\cup h_2)\cdots $, for some $w_\H = h_1 \cdot h_2 \cdot h_3 \cdots \in (2^\H)^\omega$. The transducer $\T$ sees only the assignments to the signals in $\V$, knowing its assignments to $\C$ and the programs for the environment, but not the assignments to $\H$, and thus neither the memory state nor the assignments to $\G$.
%
%Indeed, as $\T$ views only the assignments to the signals in $V$, it knows which assignments it makes to signals in $C$, and it knows which programs it instructs the environment to follow, but as it does not view the assignments to the signals in $H$, it does not know the state of the memory nor the assignments to the signals in $D$ that its programs induce. 

The environment, which does know $w_\H$, can complete $\ell(w_V)$ to a computation in $(2^{\V \cup \H \cup \C \cup \G})^\omega$. Indeed, given a current memory state $m\in M$ and an assignment $h\in 2^\H$ to the hidden signals, the environment can apply the last program sent $p\in \P_{M,\H,\G}$, calculate $p(m,h)=\zug{m',g}$, and thus obtain the new memory state and assignment to the guided signals. 
%
%there are unique sequences $w_D=d_1 \cdot d_2 \cdot d_3 \cdots \in (2^D)^\omega$ of assignments to the delegated signals, and $w_M=m_0,m_1,m_2, \ldots \in M^\omega$ of memories, induced by the operation of $\ell(w_V)$ on $w_H$. 
Formally, for all $j \geq 1$, we have that $\zug{m_j,g_j}=p_j(m_{j-1},h_j)$. Then, the {\em operation} of $\ell(w_V)$ on $w_\H$, denoted $\ell(w_V) \circ w_\H$, is the sequence $(v_1 \cup h_1 \cup c_1\cup g_1)\cdot (v_2 \cup h_2 \cup c_2\cup g_2) \cdots \in (2^{\V \cup \H \cup C \cup \G})^{\omega}$.

For an \LTL specification $\spec$, we say that a $(2^C\times \P_{M,\H,\G})$-labeled $2^\V$-tree $\zug{(2^\V)^*,\ell}$ is {\em $\spec$-good\/} if for all $w_V \in (2^\V)^\omega$ and $w_\H \in (2^\H)^\omega$, we have that $\ell(w_V) \circ w_\H$ satisfies $\spec$. By definition, a TGE $(\V,H,\C,\G)$-realizes $\spec$ iff its induced $(2^\C\times \P_{M,\H,\G})$-labeled $2^V$-tree is $\spec$-good.

\begin{theorem}\label{good trees}
Given an LTL specification $\spec$ over $I \cup O$, partitions $I=\V\cup \H$ and $O=\C\cup \G$, and a set $M$ of memories, we can construct a UCT $\A_\spec$ with $2^{|\spec|} \cdot |M|$ states such that $\A_\spec$ runs on $(2^\C\times \P_{M,\H,\G})$-labeled $2^\V$-trees and accepts a labeled tree $\zug{(2^\V)^*,\ell}$ iff $\zug{(2^\V)^*,\ell}$ is $\spec$-good.
\end{theorem}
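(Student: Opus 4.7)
The plan is to bundle the environment's memory state into the state space of a UCT whose underlying word automaton is a universal co-B\"uchi automaton for $\spec$, and to push the universal quantification over assignments to the hidden signals into the transition function, so that the tree automaton --- which does advance along a single direction $v \in 2^\V$ at a time --- nonetheless universally explores every continuation of the computation induced by a choice of $h \in 2^\H$.

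I would first invoke the standard translation from \LTL to UCW, producing $U_\spec = \zug{2^{I \cup O}, Q_\spec, q^\spec_0, \eta_\spec, \alpha_\spec}$ with $|Q_\spec| = 2^{|\spec|}$ states whose language is exactly the set of computations satisfying $\spec$. The UCT $\A_\spec$ is then defined over $(2^\C \times \P_{M,\H,\G})$-labeled $2^\V$-trees with state space $Q_\spec \times M$, initial state $\zug{q^\spec_0, m_0}$, and acceptance set $\alpha_\spec \times M$. The transition, from $\zug{q,m}$ reading the label $\zug{c,p}$ along direction $v$, generates, for every $h \in 2^\H$ and every $q' \in \eta_\spec(q,\, v \cup h \cup c \cup p_\G(m,h))$, a copy in state $\zug{q',\, p_M(m,h)}$. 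This yields the promised $2^{|\spec|} \cdot |M|$ states and directly inherits co-B\"uchi acceptance from $U_\spec$.

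For correctness I would set up a bijection between infinite paths of a run of $\A_\spec$ and tuples $\zug{w_\V, w_\H, w_M, r_\spec}$ in which $w_\V$ is the direction sequence of the path, $w_\H$ collects the hidden assignments chosen along the path by the transition function, $w_M$ is the memory sequence deterministically induced by $w_\V$, $w_\H$ and the programs in $\ell(w_\V)$, and $r_\spec$ is a run of $U_\spec$ on $\ell(w_\V) \circ w_\H$. Since the $M$-component of a state does not influence membership in $\alpha_\spec \times M$, the co-B\"uchi condition on the path coincides with the co-B\"uchi condition on $r_\spec$. Hence $\A_\spec$ accepts $\zug{(2^\V)^*,\ell}$ iff every $U_\spec$-run on every word of the form $\ell(w_\V) \circ w_\H$ is accepting, iff $\ell(w_\V) \circ w_\H \models \spec$ for all $w_\V, w_\H$, which is exactly $\spec$-goodness.

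The main obstacle, and the point where SGE departs from classical partial-visibility synthesis, is that the label $\zug{c,p}$ on an edge already encodes a program whose guided outputs and memory updates depend on $h$. One therefore cannot subset-construct away the hidden signals in the style of \cite{KV00a}, nor run $U_\spec$ directly on the tree's labels: the letter actually fed to $U_\spec$ is a function of $h$, $m$, and $p$, and must be computed on the fly. The delicate step is to align the $\forall h$ quantification, the program evaluation $p(m,h)$, the memory update $p_M(m,h)$ and the induced letter $v \cup h \cup c \cup p_\G(m,h)$ into a single transition of $\A_\spec$, while carrying the memory component so that successive programs see the correct memory state. Once this alignment is in place, the acceptance argument factors cleanly over the $U_\spec$-component and the state bound follows immediately.
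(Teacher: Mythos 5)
Your construction is exactly the paper's: the same UCW $U_\spec$ obtained by dualizing an NBW for $\neg\spec$, the same product state space $Q\times M$ with acceptance set $\alpha\times M$, the same transition function that quantifies universally over $h\in 2^\H$ while evaluating $p(m,h)$ on the fly, and the same correspondence between branches of the run tree and runs of $U_\spec$ on the words $\ell(w_\V)\circ w_\H$. The proposal is correct and takes essentially the same approach as the paper.
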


\begin{proof}
Given $\spec$, let $U_\spec=\zug{2^{I\cup O},Q,q_0,\eta,\alpha}$ be a UCW over the alphabet $2^{I \cup O}$ that recognizes $L_\spec$. We can construct $U_\spec$ by dualizing an NBW for $\neg \spec$. By \cite{VW94}, the latter is of size exponential in $|\spec|$, and thus, so is $U_\spec$. 

We define $\A_\spec=\zug{2^\C\times \P_{M,H,\G},2^\V,Q \times M,\zug{q_0,m_0},\eta',\alpha \times M}$, where $m_0\in M$ is arbitrary, and $\eta'$ is defined for every $\zug{q,m} \in Q\times M$, $v \in 2^\V$, and $\zug{c,p} \in  2^\C\times \P_{M,\H,\G}$, as follows. 
$$
\eta'(\zug{q,m},\zug{c,p},v)=$$
$$\bigcup_{h \in 2^\H} \eta(q,v\cup h\cup c\cup p_\G(m,h)) \times \{p_M(m,h)\}.$$

Thus, when the label to direction $v \in 2^\V$ is $\zug{c,p}$, the UCT $\A_\spec$ sends copies that correspond to all possible assignments in $2^\H$, where for every $h \in 2^\H$, a copy is sent for each state in $\eta(q,v\cup h\cup c\cup p_\G(m,h))$, all with the same memory $p_M(m,h)$. Accordingly, for all $2^C\times\P_{M,\H,\G}$-labeled $2^\V$-tree $\zug{(2^\V)^*,\ell}$, there is a one to one correspondence between every infinite branch in the run tree $\zug{T_r,r}$ of $\A$ over $\zug{(2^\V)^*,\ell}$, and an infinite run of the UCW $\A_\spec$ over $\ell(w_V)\circ w_\H$, for some $w_V\in (2^\V)^\omega$ and $w_\H\in (2^\H)^\omega$. It follows that $\A$ accepts a tree $\zug{(2^\V)^*,\ell}$ iff $\zug{(2^\V)^*,\ell}$ is $\spec$-good.
\end{proof}

The construction of $\A_\spec$ allows us to conclude the following complexity result of the SGE problem. 

\begin{theorem}\label{bounded 2exp}
The LTL SGE problem is 2EXPTIME-complete. Given an LTL formula $\spec$ over sets of signals $\V$, $H$, $C$, and $\G$, and a integer $k$, deciding whether $\spec$ is $(\V,H,C,\G)$-realizable by a TGE with memory $k$ can be done in time doubly exponential in $|\spec|$ and exponential in $k$. 
\end{theorem}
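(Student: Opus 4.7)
The plan is to establish matching upper and lower bounds. For the upper bound, given $\spec$, partitions $I = \V \cup \H$ and $O = \C \cup \G$, and memory bound $k$, I would fix an arbitrary memory set $M$ with $|M| = k$ and invoke Theorem~\ref{good trees} to obtain a UCT $\A_\spec$ with $2^{O(|\spec|)} \cdot k$ states whose language is exactly the set of $\spec$-good $(2^\C \times \P_{M,\H,\G})$-labeled $2^\V$-trees. By the definition of $\spec$-goodness, $\spec$ is $\vhcg$-realizable with memory $k$ iff some TGE with memory $M$ induces a $\spec$-good tree, reducing realizability to the decision problem $L(\A_\spec) \neq \emptyset$. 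Moreover, by the standard bounded-branching model property of UCTs, if $L(\A_\spec)$ is nonempty, then it contains a regular tree whose equivalence classes under the subtree relation form the state space of a finite-state TGE that $\vhcg$-realizes $\spec$, so a witness of nonemptiness directly yields the desired TGE.

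For the complexity, I would appeal to the standard nonemptiness algorithm for UCT, which (for example, via translation to a nondeterministic parity tree automaton through a Safra-style construction, followed by solving the induced parity game) runs in time singly exponential in the number of states of the UCT. Substituting $|Q_{\A_\spec}| = 2^{O(|\spec|)} \cdot k$ yields running time doubly exponential in $|\spec|$ and exponential in $k$, as claimed. For the matching lower bound, I would observe that traditional \LTL realizability reduces immediately to SGE: take $\V = I$, $\H = \emptyset$, $\C = O$, $\G = \emptyset$, and $k = 1$; as already noted in Section~\ref{sec prelim}, in this case programs are trivial and TGEs coincide with standard transducers. Since \LTL realizability is 2EXPTIME-hard by~\cite{Ros92}, so is SGE.

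The main obstacle is justifying that the nonemptiness check remains doubly exponential in $|\spec|$ despite the alphabet of $\A_\spec$ being doubly exponential in $|\H|$ through the program space $\P_{M,\H,\G}$. The key point is that $|\H|$ is bounded by the number of signals occurring in $\spec$, hence by $|\spec|$, so even a polynomial (in the alphabet) dependence fits within the target doubly-exponential-in-$|\spec|$, exponential-in-$k$ bound. A secondary subtlety is extracting a TGE from a regular witness: one must verify that the memory component $M$ chosen upfront is exactly the memory used by the extracted TGE, which follows because the transitions of $\A_\spec$ explicitly track the memory in their state, and the labeling of the accepted tree already prescribes the programs that the TGE will issue.
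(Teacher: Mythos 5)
Your proposal is correct and follows essentially the same route as the paper: reduce the upper bound to nonemptiness of the UCT $\A_\spec$ from Theorem~\ref{good trees} and the lower bound to traditional \LTL realizability with $\H=\G=\emptyset$ and $k=1$. The only (inessential) difference is that the paper solves UCT nonemptiness Safralessly via the translation of \cite{KV05c} to an NBT and a B\"uchi game rather than via Safra determinization and a parity game, and it additionally makes explicit that the branching degree $|2^\V|$ of the game (not just the alphabet $\P_{M,\H,\G}$) is bounded using $|\V|\leq|\spec|$ -- a point worth adding to your accounting, though it does not affect the final bound.
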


\begin{proof}
We start with the lower bound which follows immediately by the fact that traditional LTL synthesis is 2EXPTIME-complete~\cite{Ros92}. The traditional synthesis of a specification $\spec$ above $I\cup O$ is cleary reduced to the synthesis of a TGE that $(I,\emptyset,O,\emptyset)$-realizes $\spec$. Indeed, a TGE with $\H=\emptyset$ and $\G=\emptyset$ is simply a traditional transducer as its transmitted programs are redundant and do not affect the generated computation. Thus, since LTL synthesis is 2EXPTIME-hard, we conclude that so is the SGE problem.

We continue and prove the upper bound. Let $M$ be a memory of size $k$, and let $\A$ be the UCT constructed for $\spec, \V,H,\C,G$, and $M$ in \autoref{good trees}. As $\A$ accepts exactly all  $\spec$-good $(2^\C\times \P_{M,\H,\G})$-labeled $2^\V$-trees, we can reduce
SGE to the non-emptiness problem for $\A$, returning a witness when it is non-empty. 

In \cite{KV05c} the authors solved the non-emptiness problem for UCTs by translating a UCT $\A$ into an NBT $\A'$ such that $L(\A')\subseteq L(\A)$ and $L(\A')\neq \emptyset$ iff $L(\A)\neq \emptyset$. The translation goes from a UCT with $n$ states to an NBT with $2^{O(n^2\log n)}$ states.\footnote{The construction in \cite{KV05c} refers to automata with labels on the nodes rather than the edges, but it can be easily adjusted to edge-labeled trees.}  

Consider an NBT with state space $S$ that runs on $\Sigma$-labeled $\Upsilon$-trees. Each nondeterministic transition of the NBT maps an assignment $\Sigma^\Upsilon$ (of labels along the branches that leave the current node) to a set of functions in $S^\Upsilon$ (describing possible labels by states of the successors of the current node in the run tree). Accordingly, non-emptiness of the NBT can be solved by deciding a B\"uchi game in which the OR-vertices are the states in $S$, and the AND-vertices are functions in $S^\Upsilon$ \cite{GH82}.

In our case, as $\A$ has $k\cdot 2^{|\spec|}$ states, we have that $S$ is of size $2^{O(k^2\log k2^{O(|\spec|)})}$ and 
$|\Upsilon|=2^{|\V|}$. Hence, $|S^\Upsilon|$, which is the main factor in the size of the game, is $2^{O(k^2\log k2^{O(|\spec|)} 2^{|\V|})}$.
Since $|\V|\leq |\spec|$ and since B\"uchi games can be solved in quadratic time \cite{VW86a}, we end up with the required complexity. 
\end{proof}

\subsection{Bounding the size of \texorpdfstring{$M$}{M}}
\label{Bound M}

As discussed in Example~\ref{illustrate memory}, programs that can only refer to the current assignment of the hidden signals may be too weak: in some specifications, the assignment to the guided output signals have to depend on the history of the interaction so far. The full history of the interaction is a finite word in $(2^{I \cup O})^*$, 
and so apriori, an unbounded memory is needed in order to remember all possible histories. 
A deterministic automaton $\D_\spec$ for $\spec$ partitions the infinitely many histories in $(2^{I \cup O})^*$ into finitely many equivalence classes. Two histories $w,w' \in (2^{I \cup O})^*$ are in the same equivalence class if they reach the same state of $\D_\spec$, which implies that $w \cdot t \models \spec$ iff  $w' \cdot t \models \spec$ for all $t \in (2^{I \cup O})^\omega$. In the case of traditional synthesis, we know that a transducer that realizes $\spec$ does not need more states than $\D_\spec$. Intuitively, if two histories of the interaction are in the same equivalence class, the transducer can behave the same way after processing them. 

In the following theorem we prove that the same holds for the memory used by a TGE: if two histories of the interaction reach the same state of $\D_\spec$, there is no reason for them to reach different memories. Formally, we have the following.

\begin{theorem}\label{TGE automata mem}
Consider a specification $\spec$, and let $Q$ be the set of states of a DPW for $\spec$.
If there is a TGE that $\vhcg$-realizes $\spec$ with memory $M$, then there is also a TGE that $\vhcg$-realizes $\spec$ with memory $Q$.
\end{theorem}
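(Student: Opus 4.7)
The plan is to obtain $\T'$ from $\T$ by identifying the environment memory with the DPW state and pushing the bookkeeping originally carried by $M$ into the state space of $\T'$. Let $\D_\spec=\zug{2^{I\cup O},Q,q_0^D,\eta_D,\alpha_D}$ be the DPW underlying the theorem. I would define $\T'=\zug{\V,\H,\C,\G,S',s_0',\delta',Q,q_0^D,\tau'}$ with $S'=S\times 2^{Q\times M}$: a state $(s,R)\in S'$ pairs a state of $\T$ with a set $R\subseteq Q\times M$ intended to track all $(q,m)$ such that some hidden history consistent with the visible history leading to $(s,R)$ brings the synchronous product of $\T$ and $\D_\spec$ to DPW state $q$ with $\T$-memory $m$. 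The initial state is $(s_0,\{(q_0^D,m_0)\})$.

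At $(s,R)$ reading $v\in 2^\V$, I let $(c,p)=\tau(s,v)$ and define $\tau'((s,R),v)=(c,p')$ as follows. For each $q\in Q$ with $(q,m)\in R$ for some $m$, pick a canonical representative $m_q\in\{m:(q,m)\in R\}$; for each $h\in 2^\H$, set $p'(q,h)=(\eta_D(q,v\cup h\cup c\cup g),g)$ with $g=p_\G(m_q,h)$, so that the environment memory in $\T'$ remains equal to the current DPW state of the computation. For $q$ absent from $R$, $p'(q,h)$ is arbitrary. The successor's $R'$ is the subset update $R'=\{(\eta_D(q,v\cup h\cup c\cup p_\G(m,h)),\,p_M(m,h)) : (q,m)\in R,\ h\in 2^\H\}$, and $\delta'((s,R),v)=(\delta(s,v),R')$.

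Correctness amounts to showing that, for every input sequence $w_I$, the sequence of DPW states visited by the environment of $\T'$ on $w_I$ is accepting, hence $\T'(w_I)\in L_\spec$. I would first establish inductively that for every visible prefix $w_V$ and hidden prefix $w_\H$, the DPW state reached by $\T'$ on $(w_V,w_\H)$ lies in the $Q$-projection of the current $R$, and that the DPW transition that $\T'$ takes coincides with a DPW transition that $\T$ itself could make from $(q,m_q)$ under some hidden input. The main obstacle is that the witnessing hidden histories may vary across time---a ``zigzag'' of representatives---so the full DPW run of $\T'$ need not coincide with the DPW run of any single $\T$-interaction, and accepting it as a parity run does not follow directly from accepting each local component.

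I would close this gap via the game-theoretic view already implicit in Theorems~\ref{good trees} and~\ref{bounded 2exp}: realizing $\spec$ with memory $Q$ is equivalent to the system winning a parity game on (an observation refinement of) $S\times Q$, and the existence of $\T$ witnesses that every $(q,m)\in R$ is a winning position in the larger parity game on $S\times M\times Q$. By positional determinacy of parity games, a coherent selection of the representatives $m_q$---essentially a Skolemization of a positional winning strategy in the larger game---produces a strategy in the smaller game in which every play stays in the winning region and therefore satisfies the parity condition. If necessary, I would enrich the state component $R$ with the parity progress measure so that the representative choices remain consistent along infinite plays, yielding the required $\T'$ with memory $Q$.
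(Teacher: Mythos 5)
You correctly identify the central difficulty (the ``zigzag'' of representatives: a run assembled from per-step choices of $m_q$ need not be the run of any single interaction with $\T$, so the parity condition does not transfer), and your instinct to resolve it game-theoretically is exactly right --- this is in fact how the paper proves the theorem. However, the patch as you state it does not close the gap. First, you apply positional determinacy to the \emph{larger} game on $S\times M\times Q$; a positional strategy there still depends on $m$, and Skolemizing $m$ away by choosing representatives $m_q$ reintroduces the very zigzag you are trying to eliminate: the resulting play in the smaller game corresponds to a sequence of moves in the larger game whose $M$-component teleports, which is not a play consistent with any single strategy there. Second, the inference ``every play stays in the winning region and therefore satisfies the parity condition'' is invalid; membership in the winning region is not closed under arbitrary concatenation of locally-winning moves (e.g., in a B\"uchi game one can remain in the winning region forever without ever visiting the accepting set). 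Appending a parity progress measure to $R$ is a way to repair this, but at that point you are re-deriving positional determinacy for the wrong object.

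The correct move --- and the paper's --- is to set up the parity game directly on $S\times Q$ (product of the $(\V,\C)$-projection of $\T$ with the DPW $\D_\spec$), where the ``system'' player is the \emph{combination} of the TGE and the guided environment and therefore has full information: from $\zug{s,q,i}$ it chooses $g\in 2^\G$ knowing all of $i\in 2^I$. The original $\T$, together with its memory $M$, supplies a (history-dependent) winning strategy in \emph{this} game, so the system wins it; memoryless determinacy of parity games applied to \emph{this} game yields a positional $g:S\times Q\times 2^I\rightarrow 2^\G$, which packages directly into programs $p_{s,v}(q,h)=\zug{\eta_D(q,v\cup h\cup \tau_C(s,v)\cup d),d}$ with $d=g(s,q,v\cup h)$. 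Note that the fact that the program may depend on $q$ and $h$ is legitimate, since it is the environment that executes it. Once this is done, your subset construction $S'=S\times 2^{Q\times M}$ and the canonical representatives are unnecessary: the new TGE keeps the original state space $S$ and merely replaces the memory $M$ by $Q$, which is all the theorem asks for.
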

\begin{proof}
	Let $\T=\zug{\V,H,\C,\G,S,s_0,\delta,M,m_0,\tau}$ be a TGE that $\vhcg$-realizes $\spec$, and let $\T_{(\V/\C)}=\zug{\V,\C,S,s_0,\delta,\tau_C}$ be the $(\V,C)$-transducer induced by $\T$. The labeling function $\tau_C:S\times 2^\V\rightarrow 2^C$ is obtained by projecting $\tau$ on its $2^C$ component. Let $\D_\spec=\zug{2^{I\cup O},Q,q_0,\eta,\alpha}$ be a DPW for $\spec$. We show that there exists a labeling function $\tau_{\G}:S\times 2^\V\rightarrow \P_{Q,\H,\G}$ such that $\T_Q=\zug{V,\H,\C,\G,S,s_0,\delta,Q,q_0,\tau'}$, with $\tau'(s,v)=\zug{\tau_C(s,v),\tau_{\G}(s,v)}$ is a TGE that $\vhcg$-realizes $\spec$. 
	
	Consider the following two player game ${\cal G}_{\T,\D_\spec}$ played on top of the product of $\T_{(\V/C)}$ with $\D_\spec$.
	The game is played between the environment, which proceeds in positions in $S\times Q$, and the system, which proceeds in positions in $S\times Q\times 2^I$. The game is played from the initial position $\zug{s_0,q_0}$, which belongs to the environment. From a position $\zug{s,q}$, the environment chooses an assignment $i\in 2^I$ and moves to $\zug{s,q,i}$. From a position $\zug{s,q,i}$, the system chooses an assignment $g\in 2^\G$ and moves to $\zug{s',q'}$,  where $s'=\delta(s,i\cap \V)$ and $q'=\eta(q,i\cup \tau_C(s,i\cap \V)\cup g)$. The system wins the game if the $Q$ component of the generated play satisfies $\alpha$. 
	
	Note that the system wins iff there exists a strategy $f:(2^I)^*\rightarrow 2^\G$ that generates for each $w_I\in (2^I)^\omega$, a word $f(w_I)\in (2^\G)^\omega$ such that $w_I|_H\oplus \T_{(\V/C)}(w_I|_V)\oplus f(w_I)\in L(\D_\spec)$. Indeed, the $Q$-component in the outcome of the game when the environment plays $w_I\in (2^I)^\omega$ is precisely the run of $\D_\spec$ on the word $w=w_I\cup w_C\cup f(w_I)$, where $w_C$ is the word generated by the $(\V/C)$-transducer $\T_{(\V/C)}$ on the input word $w_I|_V$, and, by definition $\T_{(\V/C)}(w_I|_V)=w_I|_V\oplus w_C$. Moreover, by the memoryless determinacy of parity games, such a strategy $f$ exists iff there exists a winning positional strategy $g:S\times Q\times 2^I\rightarrow 2^\G$. Thus, instead of considering the entire history in $(2^I)^*$, the system has a winning strategy that only depends on the current position in $S\times Q$. 
	
	For all $s\in S$ and $v\in 2^\V$, let $p_{s,v}:Q\times 2^\H\rightarrow Q\times 2^\G$ be defined by $p_{s,v}(q,h)=\zug{q',d}$, where $d=g(s,q,v\cup h)$ and $q'=\eta(q,v\cup h\cup \tau_C(s,v)\cup d)$. Finally, let $\tau_{Q}:S\times 2^\V\rightarrow \P_{Q,\H,\G}$ be defined by $\tau_{Q}(s,v)=p_{s,v}$, and $\T_Q=\zug{\V,\H,\C,\G,S,s_0,\delta,Q,q_0,\tau_Q}$ be the TGE with memory $Q$ that is obtained from $\T$ by replacing $M$, $m_0$,  and $\tau$ with $Q$, $q_0$, and $\tau_Q$, respectively.
	
	It is not hard to see that for all $w_I\in (2^I)^\omega$, the computation $\T_Q(w_I)$ is exactly the outcome of the game when Sys plays with the winning strategy $g$. I.e., $\T_Q(w_I)=w_I|_\H\oplus \T_{(\V/C)}(w_I|_V)\oplus g(w_I)$. Hence, $\T_Q(w_I)\in L(\D_\spec)$, for all $w_I\in (2^I)^\omega$, as required. 
\end{proof}
 
\begin{theorem}\label{TGE automata mem 2}
If there is a TGE that $\vhcg$-realizes an LTL specification $\spec$, then there is also a TGE that $\vhcg$-realizes $\spec$ with memory doubly exponential in $|\varphi|$, and this bound is tight. 
\end{theorem}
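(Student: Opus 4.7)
The plan is to derive the upper bound directly from Theorem~\ref{TGE automata mem} combined with the classical doubly-exponential LTL-to-DPW translation, and to establish tightness by exhibiting an LTL family for which any realizing TGE needs doubly-exponential memory. For the upper bound, I would translate $\spec$ into a DPW $\D_\spec$ whose state space $Q$ satisfies $|Q|\leq 2^{2^{O(|\spec|)}}$, for example by dualizing an exponential-size NBW for $\neg\spec$ and then determinizing. Theorem~\ref{TGE automata mem} then immediately produces a TGE that $\vhcg$-realizes $\spec$ with memory $|Q|$, which is doubly exponential in $|\spec|$.

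For the lower bound, I would construct a family $\{\spec_n\}_{n\geq 1}$ of LTL formulas of size polynomial in $n$, together with a fixed partition, such that any TGE $\vhcg$-realizing $\spec_n$ requires memory at least $2^{2^n}$. I would fix the partition $\V=\emptyset$, $\H=\{i\}$, with $\C$ consisting of $O(n)$ controlled counter signals and $\G=\{o\}$. The key observation justifying the choice $\V=\emptyset$ is that the TGE's transition $\delta$ then has a unique successor per state, so the TGE occupies a predetermined state $s_t$ at every time $t$, independently of the hidden inputs. Consequently, at any fixed time $t$, the joint reachable states $(s_t,m_t)$ differ only in $m_t$, and any two length-$t$ hidden input histories that must elicit distinct future continuations must reach distinct environment memories. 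The memory lower bound thus reduces to counting Nerode equivalence classes of length-$t$ hidden histories.

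For the concrete $\spec_n$, I would adapt the specification $\GL(i\leftrightarrow \X^k o)$ from Theorem~\ref{monotone mem} by replacing the fixed delay $k$ with $2^n$, and encoding this exponential delay in LTL via a polynomial-size binary counter over the $O(n)$ controlled signals, together with a safety assertion binding $o$ to the value of $i$ observed $2^n$ ticks earlier via a ``matching counter value'' pattern. The counter part can be realized by the TGE using $2^n$ of its own states, so it does not aid the TGE in distinguishing $i$-histories. The delayed-copy requirement then forces the environment, at the moment the counter completes its first round of $2^n$ ticks, to have distinguished all $2^{2^n}$ possible length-$2^n$ histories of $i$, and the pigeonhole argument of Theorem~\ref{monotone mem} then yields $|M|\geq 2^{2^n}$, matching the upper bound.

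The main obstacle is likely the compact LTL encoding of the $2^n$-delayed copy requirement, together with ensuring that no counter signal can be exploited by the TGE as a covert channel for information about $i$. Placing the counter signals in $\C$ (so that they are controlled by the TGE via its own state space), and phrasing the counter-correctness requirement as a temporal safety constraint mentioning only the controlled counter signals and not $i$, should suffice to confine all $i$-history to the environment's memory, and thereby justify the pigeonhole step.
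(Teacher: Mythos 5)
Your upper bound is exactly the paper's: invoke Theorem~\ref{TGE automata mem} together with the doubly-exponential LTL-to-DPW translation. That half is correct. Your observation that with $\V=\emptyset$ the TGE occupies a predetermined state $s_t$ at every time $t$, so that all distinctions between hidden histories must be carried by the environment's memory, is also sound, and is in fact the glue one needs to transfer a transducer-size lower bound to a memory lower bound. The paper's lower bound is obtained precisely this way, but much more cheaply than you propose: it simply takes the known doubly-exponential lower bound on the size of transducers realizing LTL formulas \cite{Ros92} and instantiates it with $I=\H$ and $O=\G$; by Theorem~\ref{trade offs}, a TGE in that setting with memory $k$ yields a transducer whose reachable state set at each time step has size at most $k$, so the fooling argument underlying the classical bound forces $k$ to be doubly exponential.

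The route you take instead --- building a hard family from scratch as an exponentially delayed copy $\GL(i\leftrightarrow \X^{2^n}o)$ compressed by a binary counter --- has a genuine gap at its core. The ``matching counter value'' pattern you rely on requires an LTL formula asserting, for \emph{every} position $t$, that $o$ at the first later position carrying the same $n$-bit counter value agrees with $i$ at $t$. This entails comparing $n$-bit counter values across an unbounded gap, which is not expressible by a polynomial-size LTL formula in the direct way you describe: restricting the assertion to positions where the counter is all-zeros makes it poly-size but only forces one bit of memory per $2^n$ steps, while quantifying over all $2^n$ counter values blows the formula up exponentially. This is exactly the obstacle that the classical 2EXPTIME-hardness constructions for LTL realizability circumvent with more elaborate machinery (e.g., environment-issued challenges that reduce the cross-gap comparison to verifying a single claimed match). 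So either import that machinery wholesale --- at which point you have reproved the lower bound of \cite{Ros92} --- or, better, just cite it and apply your $\V=\emptyset$ width argument, which is all the theorem needs.
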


\begin{proof}
The upper bound follows from \autoref{TGE automata mem} and the doubly-exponential translation of LTL formulas to DPWs \cite{VW94,Saf88,EKS20}. The lower bound follows from the known doubly-exponential lower bound on the size of transducers for LTL formulas \cite{Ros92}, applied when $I=\H$ and $O=\G$.
\end{proof}

While \autoref{TGE automata mem 2} is of theoretical interest, we find the current formulation of the SGE problem, which includes a bound on $M$, more appealing in practice: recall that SGE is doubly-exponential in $|\spec|$ and exponential in $|M|$. As $|\D_\spec|$ is already doubly exponential in $|\spec|$, solving SGE with no bound on $M$ results in an algorithm that is triply-exponential in $|\spec|$. Thus, it makes sense to let the user provide a bound on the memory.

\section{Programs}
\label{sec programs}
Recall that TGEs generate in each transition a program $p:M \times 2^\H \rightarrow M \times 2^\G$ that instructs the environment how to update its memory and assign values to the guided output signals. In this section we discuss ways to represent programs efficiently, and, in the context of synthesis, restrict the set of programs that a TGE may suggest to its environment without affecting the outcome of the synthesis procedure. Note that the number of programs in $\P_{M,\H,\G}$ is $2^{(\log |M|+|\G|) \cdot |M| \cdot 2^{|\H|}}$. Our main goal is to reduce the domain $2^{\H}$, which is the most dominant factor. 

Naturally, the reduction depends on the specification we wish to synthesize. For an LTL formula $\psi$, let $\prop{\psi}$ be the set of maximal predicates over $I \cup O$ that are subformulas of $\psi$. Formally, $\prop{\psi}$ is defined by an induction on the structure of $\psi$ as follows. 
\begin{itemize}
	\item
	If $\psi$ is a propositional assertions, then $\prop{\psi}=\{\psi\}$.
	\item
	Otherwise, $\psi$ is of the form $* \psi_1$ or $\psi_1 * \psi_2$ for some (possibly temporal) operator $*$, and $\prop{\psi}=\prop{\psi_1} \cup \prop{\psi_2}$.
\end{itemize}

Note that the definition is sensitive to syntax. For example, the formulas $(i_1 \vee o)\wedge \X  i_2$ and $(i_1 \wedge \X  i_2) \vee (o \wedge \X  i_2)$ are equivalent, but have different sets of maximal propositional assertions. Indeed, 
$\prop{(i_1 \vee o)\wedge \X  i_2}=\{i_1\vee o, i_2\}$, whereas $\prop{(i_1 \wedge \X  i_2) \vee (o \wedge \X  i_2)}=\{i_1,i_2,o\}$. 

It is well known that the satisfaction of an LTL formula $\psi$ in a computation $\pi \in (2^{I \cup O})^\omega$ depends only on the satisfaction of the formulas in $\prop{\psi}$ along $\pi$: if two computations agree on $\prop{\psi}$, then they also agree on $\psi$. Formally, for two assignments $\sigma,\sigma' \in 2^{I \cup O}$, and a set $\Theta$ of predicates over $I \cup O$, we say that $\sigma$ and $\sigma'$ {\em agree on\/} $\Theta$, denoted $\sigma \approx_{\Theta} \sigma'$, if for all $\theta \in \Theta$, we have that $\sigma \models \theta$ iff $\sigma' \models \theta$. Then, two computations $\pi=\sigma_1\cdot \sigma_2\cdots$ and $\pi' =\sigma'_1\cdot\sigma'_2\cdots$ in $(2^{I \cup O})^\omega$ agree on $\Theta$, denoted $\pi \approx_{\Theta} \pi'$, if for all $j \geq 1$, we have $\sigma_j \approx_{\Theta} \sigma'_j$. 

\begin{proposition}
	\label{agree on prop}
	Consider two computations $\pi,\pi' \in (2^{I \cup O})^\omega$. For every LTL formula $\psi$, if $\pi \approx_{\prop{\psi}} \pi'$, then $\pi \models \psi$ iff $\pi' \models \psi$. 
\end{proposition}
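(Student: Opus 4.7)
The plan is to prove the proposition by straightforward structural induction on $\psi$, leaning on the inductive definition of $\prop{\cdot}$ given just above the statement. The key preparatory observation, which I would record as a brief remark before the induction, is that the relation $\approx_\Theta$ is preserved under taking suffixes: if $\pi=\sigma_1\cdot\sigma_2\cdots$ and $\pi'=\sigma'_1\cdot\sigma'_2\cdots$ satisfy $\pi\approx_\Theta \pi'$, then for every $k\geq 0$ the suffixes $\pi^k=\sigma_{k+1}\cdot\sigma_{k+2}\cdots$ and $(\pi')^k=\sigma'_{k+1}\cdot\sigma'_{k+2}\cdots$ also satisfy $\pi^k\approx_\Theta (\pi')^k$. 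This is immediate from the pointwise definition of $\approx_\Theta$, and it is exactly what will allow the induction to push through the temporal operators.

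For the base case, $\psi$ is a propositional assertion over $I\cup O$, so $\prop{\psi}=\{\psi\}$. Then $\pi \approx_{\prop{\psi}}\pi'$ forces $\sigma_j\models\psi$ iff $\sigma'_j\models\psi$ for every position $j$, and in particular for $j=1$, which is what determines $\pi\models\psi$ versus $\pi'\models\psi$ when $\psi$ is propositional.

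For the inductive step I would split on the outermost operator. For the Boolean cases $\psi=\neg\psi_1$ and $\psi=\psi_1\wedge\psi_2$ (the other connectives are similar or definable), $\prop{\psi}$ is $\prop{\psi_1}$ or $\prop{\psi_1}\cup\prop{\psi_2}$, so the agreement hypothesis on $\prop{\psi}$ restricts to agreement on each $\prop{\psi_i}$, and the induction hypothesis on the subformulas closes the case. For the temporal cases I would use the suffix-preservation observation: for $\psi=\X\psi_1$ apply the IH to $\pi^1$ and $(\pi')^1$; for $\psi=\GL\psi_1$ and $\psi=\FU\psi_1$ apply the IH uniformly to all suffixes $\pi^k$ and $(\pi')^k$ (using $\prop{\psi}=\prop{\psi_1}$); and for $\psi=\psi_1\UN\psi_2$ apply the IH to $\pi^k$ vs $(\pi')^k$ for $\psi_2$ and to $\pi^{k'}$ vs $(\pi')^{k'}$ for $\psi_1$ with $k'<k$, using $\prop{\psi}=\prop{\psi_1}\cup\prop{\psi_2}$, which shows that the existential witness $k$ for $\pi$ serves as a witness for $\pi'$ and vice versa.

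The proof involves no real obstacle; the only thing to be careful about is the bookkeeping of which predicate set is inherited from $\prop{\psi}$ to $\prop{\psi_i}$ in the Boolean cases, so that the inductive hypothesis is applied with the correct (smaller) set of predicates, and the routine verification that $\approx_{\prop{\psi}}$ descends to suffixes in the temporal cases. Once these are noted, each case is a one-line appeal to the IH.
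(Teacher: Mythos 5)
The paper states Proposition~\ref{agree on prop} as a well-known fact and gives no proof, so there is nothing to compare against; your structural induction is the standard argument and it is correct. You correctly identify the two points that need care: that $\approx_{\prop{\psi}}$, being defined pointwise, descends to suffixes (which handles $\X$, $\GL$, $\FU$, and $\UN$), and that in the Boolean cases agreement on $\prop{\psi}=\prop{\psi_1}\cup\prop{\psi_2}$ implies agreement on each $\prop{\psi_i}$, so the inductive hypothesis applies; together with the base case for (possibly compound) propositional assertions, where $\prop{\psi}=\{\psi\}$ and satisfaction is determined by the first letter, the induction goes through.
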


As we shall formalize below, Proposition~\ref{agree on prop} enables us to restrict the set of programs so that only one computation from each equivalence class of the relation $\approx_{\prop{\psi}}$ may be generated by the interaction of the system and the environment. 

For an LTL formula $\psi$, let $\cl{\H}(\psi)$ be the set of maximal subformulas of formulas in $\prop{\psi}$ that are defined only over signals in $\H$. 
Formally, $\cl{\H}(\psi)=\bigcup_{\theta\in \prop{\psi}}\cl{\H}(\theta)$, where $\cl{\H}(\theta)$ is defined for a propositional formula $\theta$ as follows.
\begin{itemize}
	\item
	If $\theta$ is only over signals in $\H$, then $\cl{\H}(\theta)=\{\theta\}$.
	\item
	If $\theta$ is only over signals in $\V \cup O$, then $\cl{\H}(\theta)=\emptyset$.
	\item
	Otherwise, $\theta$ is of the form $\neg \theta_1$ or $\theta_1 \ast \theta_2$ for $\ast\in \set{\vee,\wedge}$, in which case $\cl{\H}(\theta)=\cl{\H}(\theta_1)$ or $\cl{\H}(\theta)=\cl{\H}(\theta_1) \cup \cl{\H}(\theta_2)$, respectively.
\end{itemize}

For example, if $\H=I=\{i_1,i_2,i_3\}$ and $O=\{o\}$, then $\cl{\H}((i_1 \vee i_2)\wedge (i_3\vee o))=\{i_1 \vee i_2,i_3\}$. 
Note that formulas in $\cl{\H}(\psi)$ are over $\H$, and that the relation $\approx_{\cl{\H}(\psi)}$ is an equivalence relation on $2^\H$. 
We say that a program $p: M \times 2^\H \rightarrow M \times 2^\G$ is {\em tight\/} for $\psi$ if for every memory $m \in M$ and two assignments $h,h' \in 2^{\H}$, if $h \approx_{\cl{\H}(\psi)} h'$, then $p(m,h)=p(m,h')$. %Thus, assignments to $H$ that agree on all the formulas in $\cl{H}(\psi)$ are mapped by $p$ to the same assignments to $D$. For $h \in 2^H$, let $[h]$ denote the equivalence class of $h$ in the $\approx_{\cl{H}(\psi)}$. 

In Theorem~\ref{only memory tight} below, we argue that in the context of LTL synthesis, one can always restrict attention to tight programs (see Example~\ref{app ex3} for an example for such a restriction). 

\begin{theorem}
	\label{only memory tight}
	If $\spec$ is $\vhcg$-realizable by a TGE, then it is $\vhcg$-realizable by a TGE that uses only programs tight for $\spec$. 
\end{theorem}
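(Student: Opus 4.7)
The plan is to transform any TGE $\T$ that $\vhcg$-realizes $\spec$ into a tight one $\T'$ by funneling every hidden assignment through a fixed representative of its $\approx_{\cl{\H}(\spec)}$-equivalence class. Concretely, fix a function $\rho:2^\H\to 2^\H$ such that $\rho(h)\approx_{\cl{\H}(\spec)} h$ for every $h$, and $\rho(h)=\rho(h')$ whenever $h\approx_{\cl{\H}(\spec)} h'$; such a $\rho$ exists since $\approx_{\cl{\H}(\spec)}$ is an equivalence relation on the finite set $2^\H$. Given $\T=\zug{\V,\H,\C,\G,S,s_0,\delta,M,m_0,\tau}$, I would define $\T'$ with identical components except that, whenever $\tau(s,v)=\zug{c,p}$, we set $\tau'(s,v)=\zug{c,p'}$ with $p'(m,h)=p(m,\rho(h))$. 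Each such $p'$ is tight by construction: if $h\approx_{\cl{\H}(\spec)} h'$ then $\rho(h)=\rho(h')$ and hence $p'(m,h)=p'(m,h')$.

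Next I would verify that $\T'$ still $\vhcg$-realizes $\spec$. Fix $w_I=(v_1\cup h_1)(v_2\cup h_2)\cdots$ and let $w_I'=(v_1\cup \rho(h_1))(v_2\cup \rho(h_2))\cdots$. A direct induction on the position $j$ shows that the run of $\T'$ on $w_I$ and the run of $\T$ on $w_I'$ coincide: both $\delta$-sequences depend only on the $\V$-components, which agree; both memory updates compute $p(m_{j-1},\rho(h_j))$ at step $j$; and both emit the same controlled assignment $c_j$ and guided assignment $g_j$. Hence $\T'(w_I)$ and $\T(w_I')$ agree on $\V\cup \C\cup \G$ at every position, and differ only on the $\H$-coordinate, where the former carries $h_j$ and the latter carries $\rho(h_j)$.

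It remains to conclude $\T'(w_I)\approx_{\prop{\spec}} \T(w_I')$, since then \autoref{agree on prop} and the fact that $\T$ realizes $\spec$ give $\T'(w_I)\models \spec$. For every $\theta\in \prop{\spec}$ and every position $j$, the two assignments agree on $\V\cup \C\cup \G$ and, by choice of $\rho$, their $\H$-projections agree on $\cl{\H}(\theta)\subseteq \cl{\H}(\spec)$. A short structural induction on $\theta$, following the recursive definition of $\cl{\H}$, shows that agreement on $\V\cup O$ together with agreement on $\cl{\H}(\theta)$ implies agreement on $\theta$: the base cases (where $\theta$ is purely over $\H$, or purely over $\V\cup O$) are immediate, and the boolean cases propagate trivially. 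The main conceptual point is precisely this lemma about $\cl{\H}$; once it is in hand, the only remaining obstacle is the simultaneous bookkeeping induction that pairs the runs of $\T'$ on $w_I$ and $\T$ on $w_I'$, which is routine because the $p$-calls are identical at each step.
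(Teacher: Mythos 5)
Your proposal is correct and follows essentially the same route as the paper's proof: define the tight programs by precomposing with a representative map on $\approx_{\cl{\H}(\spec)}$-classes, pair the run of $\T'$ on $w_I$ with the run of $\T$ on the representative input $w_I'$ by induction, and conclude via Proposition~\ref{agree on prop} that the two computations agree on $\prop{\spec}$. The only cosmetic difference is that you spell out the structural induction showing agreement on $\V\cup O$ and on $\cl{\H}(\theta)$ implies agreement on $\theta$, which the paper states without proof.
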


\begin{proof}
Let $\T=\zug{\V,H,C,\G,S,s_0,\delta,M,m_0,\tau}$ be a TGE that realizes $\spec$ with arbitrary programs. We define a labeling function $\tau':S \times 2^\V \rightarrow 2^C \times \P_{M,H,\G}$ that uses only programs tight for $\spec$, and argue that the TGE $\T'=\langle \V, H, C, \G$, $S, s_0$, $\delta, M$, $m_0, \tau' \rangle$, obtained from $\T$ by replacing $\tau$ by $\tau'$,  realizes $\spec$.

Recall that the relation $\approx_{\cl{\H}(\spec)}$ is an equivalence relation on $2^\H$. Let ${\it rep}:2^\H \rightarrow 2^\H$ map each assignment $h \in 2^\H$ to an assignment that represents the $\approx_{\cl{\H}(\spec)}$-equivalence class of $h$; for example, we can define the representative assignment to be the minimal equivalent assignment according to some order on $2^\H$. 

We define $\tau'$ such that for every $s \in S$ and $v \in 2^\V$ with $\tau(s,v)=\zug{c,p}$, we have $\tau'(s,v)=\zug{c,p'}$, where the program $p'$ is such that for all $m\in M$ and $h \in 2^\H$, we have that $p'(m,h)=p(m,{\it rep}(h))$. Clearly, $\T'$ uses only tight programs. Indeed, all the assignments in the same equivalence class of $\approx_{\cl{\H}(\spec)}$ are mapped to the same program. 

We continue and prove that $\T'$ realizes $\spec$. Consider an input sequence $\wi=\zug{v_1,h_1}\cdot \zug{v_2,h_2}\cdot \zug{v_3,h_3}\cdots \in (2^\V \times 2^\H)^\omega$. Let $r=s_0\cdot s_1\cdot s_2\cdots \in S^\omega$ be the run of $\T$ on $w_I$, and let $\zug{c_1,p_1}\cdot \zug{c_2,p_2}\cdot \zug{c_3,p_3}\cdots \in (2^C \times \P_{M,\H,\G})^\omega$ be the sequence of labels along the transitions of $r$. Thus, $\zug{c_j,p_j}=\tau(s_{j-1},v_j)$ for all $j\geq 1$. Since $\T'$ differs from $\T$ only in the programs that $\tau'$ generates, the run $r$ is also the run of $\T'$ on $w_I$, and the sequence of labels along the transitions in it is $\zug{c_1,p'_1}\cdot \zug{c_2,p'_2}\cdot \zug{c_3,p'_3}\cdots$, where $\zug{c_j,p'_j}=\tau'(s_{j-1},v_j)$ for all $j \geq 1$. 
%orna1: I think that the new sequence of registers should be defined here
Hence, if we let $\zug{m_j,d_j}=p'_j(m_{j-1},h_j)$ for all $j\geq 1$, then $w_G=d_1\cdot d_2\cdot d_3\cdots\in (2^\G)^\omega$ is sequence of assignments to the output signals in $\G$ that $\T'$ instructs the environment to perform when it reads $w_I$.

Consider now the input sequence $w'_I=\zug{v_1,{\it rep}(h_1)}$, $\zug{v_2,{\it rep}(h_2)}$, $\zug{v_3,{\it rep}(h_3)}$, $\ldots \in (2^\V \times 2^\H)^\omega$. Since $w_I$ and $w'_I$ agree on the input signals in $\V$, and since $\delta$ and $\tau$ depend only on the input signals in $V$, the run of $\T$ on $w'_I$ is also $r$, and the sequence of labels along $r$ is also $\zug{c_1,p_1},\zug{c_2,p_2},\zug{c_3,p_3},\ldots\in (2^C\times \P_{\H,\G})^\omega$. Thus if we define $m'_0=m_0$ and $\zug{m'_j,d'_j}=p_j(m'_{j-1},{\it rep}(h_j))$, for all $j\geq 1$, then $w'_G=d'_1\cdot d'_2\cdot d'_3\cdots\in (2^\G)^\omega$ is the sequence of assignments to the output signals in $\G$ that $\T$ instructs the environment to perform when it reads $w'_I$.

We prove that $\zug{m_j,d_j}=\zug{m'_j,d'_j}$ for all $j\geq 1$. Recall that by definition of $\tau'$, for all $m\in M$ and $h \in 2^\H$, we have that $p'_j(m,h)=p_j(m,{\it rep}(h))$. Hence, for all $j\geq 1$ we have $\zug{m'_j,d'_j}=p_j(m'_{j-1},{\it rep}(h_j))=p'_j(m'_{j-1},h_j)$. Thus, as $m'_0=m_0$, it follows by induction that $\zug{m_j,d_j}=p'_j(m'_{j-1},h_j)=p'_j(m_{j-1},h_j)=\zug{m'_j,d'_j}$, for all $j\geq 1$. In particular, $w'_G=w_G$, and so $\T$ generates on $w'_I$ the same assignments to the signals in $\G$ as $\T'$ generates on $w_I$. 

Thus, for the input sequence $w'_I$, the TGE $\T$ generates the computation $\T(w'_I)=\sigma_1 \cdot \sigma_2 \cdot \sigma_3 \cdots \in (2^{I\cup O})^\omega$, with $\sigma_j=v_j \cup {\it rep}(h_j) \cup c_j \cup d_j$, for all $j \geq 1$. Since $\T$ realizes $\spec$, we know that $\pi \models \spec$. Then, for the input sequence $w_I$, the TGE $\T'$ generates the computation $\T'(w_I)=\sigma'_1 \cdot \sigma'_2 \cdot \sigma'_3 \cdots\in (2^{I\cup O})^\omega$, with $\sigma'_j=v_j \cup h_j \cup c_j \cup d_j$, for all $j \geq 1$. 

Note that the computations $\T(w'_I)$ and $\T'(w_I)$ agree on $\V \cup O$, and also agree on all the formulas in $cl_\H(\spec)$. Hence, as all the formulas in $\prop{\spec}$ are composed of predicates over $\V \cup O \cup cl_\H(\spec)$, we have that 
$\T(w'_I) \approx_{\prop{\spec}} \T'(w_I)$. By Proposition~\ref{agree on prop}, we thus have that $\T'(w_I) \models \spec$, and we are done.
\end{proof}

\begin{example}{\bf [Simplifying programs]}
\label{app ex3}
	Let $\H=\{i_1,i_2\}$ and $\G=\{o\}$. Consider the propositional specification $\spec=(i_1 \wedge i_2) \leftrightarrow \neg o$. Note that there are $16$ programs in $\P_{\{i_1,i_2\},\{o\}}$ (where no memory is used). For example, the program $p:2^\H \rightarrow \{T,F\}$ assigns values to $o$ as follows.  
	$$
	\begin{array}{|c|c||c|}
		i_1 & i_2 & p(i_1,i_2)  \\ \hline \hline
		F & F & T  \\ \hline
		F & T & T  \\ \hline
		T & F & F  \\ \hline
		T & T & T  \\ \hline
	\end{array}
	$$
	Let $\theta=i_1 \wedge i_2$. Note that $\cl{\H}(\spec)= \{\theta\}$. Hence, $p$ above is not tight, as $\{i_1\} \approx_{\{\theta\}} \{i_2\}$, yet $p(\{i_1\}) \neq p(\{i_2\})$. Also, among the $16$ programs in $\P_{\{i_1,i_2\},\{o\}}$, only four are tight:
	$$
	\begin{array}{|c||c|c|c|c|}
		\theta & p_1(\theta) &  p_2(\theta) & p_3(\theta) & p_4(\theta)  \\ \hline \hline
		F & F & F & T & T   \\ \hline
		T & F & T & F & T   \\ \hline
	\end{array}
	$$
	Among these tight programs, program $p_3$ satisfies the specification, in the sense that $\spec$ is valid for all assignments in $2^I$. Indeed, program $p_3$ induces the following assignments:
	$$
	\begin{array}{|c|c|c|c|c|}
		i_1 & i_2 & \theta & o & \spec \\ \hline \hline
		F & F & F & T & T \\ \hline
		F & T & F & T & T \\ \hline
		T & F & F & T & T \\ \hline
		T & T & T & F & T \\ \hline
	\end{array}
	$$
	Thus, $\spec$ can be realized by a TGE that  uses a tight program. 
\hfill \qed \end{example}

Theorem~\ref{only memory tight} enables us to replace the domain $M \times 2^\H$ by the more restrictive domain $M \times 2^{\cl{\H}(\spec)}$. Below we discuss how to reduce the domain further, taking into account the configurations in which the programs are going to be executed.

Recall that whenever a TGE instructs the environment which program to follow, the signals in $\V$ and $\C$ already have an assignment. Indeed, $\tau: S \times 2^\V \rightarrow 2^\C \times \P_{M,\H,\G}$, and when $\zug{c,p} \in \delta(s,v)$, we know that the program $p$ is going to be executed when the signals in $V$ and $C$ are assigned $v$ and $c$. 

Below we discuss and demonstrate how the known assignments to $\V$ and $\C$ can be used to make the equivalence classes of $2^\H$ coarser.
Note that the formulas in $\cl{\H}(\theta)$ are over $\H$, thus the assignment to the signals in $\V \cup \C$ does not affect them. Rather, it can ``eliminate" from  $\prop{\spec}$ formulas that are evaluated to $\True$ or $\False$. 
For example, if $\prop{\spec}=\{(v_1 \wedge h_1 \wedge h_3) \vee d_1, (c_1 \vee h_2) \wedge d_2\}$, then an assignment of $\True$ to $v_1$ and $c_1$ simplifies the formulas to $\{(h_1 \wedge h_3) \vee d_1, d_2\}$. 
Formally, we have the following.

For a set $\Theta$ of propositional formulas over $I \cup O$ and an assignment $f \in 2^{\V \cup \C}$, let $\Theta_{|f}$ be the set of non-trivial (that is, different from $\True$ or $\False$) propositional formulas over $\H \cup \G$ obtained by simplifying the formulas in $\Theta$ according to the assignment $f$ to  the signals in $\V\cup \C$. For example, if
$\Theta=\{(v_1 \wedge h_1) \vee d_1, (c_1 \vee h_2 ) \wedge d_2\}$, and $f(v_1)=f(c_1)=\False$, then $\Theta_{|f}=\{d_1, h_2 \wedge d_2\}$.

For an LTL formula $\psi$, and an assignment $f \in 2^{\V \cup \C}$, let $\cl{\H,f}(\psi)$ be the set of maximal subformulas of formulas in $\prop{\psi}_{|f}$ % that are 
defined only over signals in $\H$.
Formally, $\cl{\H,f}(\psi)=\bigcup_{\theta\in \prop{\psi}_{|f}}\cl{\H}(\theta)$.
For example, for $\spec$ with $\prop{\spec}=\{(v_1 \wedge h_1 \wedge h_3) \vee d_1, (c_1 \vee h_2) \wedge d_2\}$,  while $\cl{\H}{(\spec)}=\{h_1 \wedge h_3,h_2\}$, we have the simplification presented in the table below. 
$$
\begin{array}{|c|c||c|}
	f(v_1) & f(c_1) & \cl{\H,f}(\spec)  \\ \hline \hline
	F & F & \{h_2\}  \\ \hline
	F & T & \emptyset  \\ \hline
	T & F & \{h_1 \wedge h_3,h_2\}  \\ \hline
	T & T & \{h_1 \wedge h_3\}  \\ \hline
\end{array}
$$

Now, we say that a program $p: M \times 2^\H \rightarrow M \times 2^\G$ is {\em $f$-tight\/} if for every memory $m \in M$ and two assignments $h,h' \in 2^{\H}$, if $h \approx_{\cl{\H,f}(\spec)} h'$, then $p(m,h)=p(m,h')$.

Then, a TGE is tight if for every state $s \in S$ and assignments $v \in \V$ with $\delta(s,v)=\zug{c,p}$, we have that $p$ is $(v \cup c)$-tight. It is easy to extend the proof of Theorem~\ref{only memory tight} to tight TGEs. Indeed, now, for every position $j \geq 1$ in the computation, the equivalence class of $h_j \in 2^\H$ is defined with respect to $\cl{\H,v_j \cup c_j}(\spec)$, and all the considerations stay valid.

\begin{remark}{\rm 
Typically, a specification to a synthesized system is a conjunction of sub-specifications, each referring to a different functionality of the system. Consequently, the assignment to each output signal may depend only on a small subset of the input signals -- these that participate in the sub-specifications of the output signal. For example, in the specification $\spec=\GL(o_1\iff i_1)\wedge \GL(o_2\iff i_2)$, the two conjuncts are independent of each other, and so the assignment to $o_1$ can depend only on $i_1$, and similarly for $i_1$ and $o_2$. Accordingly, further reductions to the set of programs can be achieved by decomposing programs to sub-programs in which different subsets of $\H$ are considered when assigning values to different subsets of $\G$. In addition, by analyzing dependencies within each sub-specification, the partition of the output signals and their corresponding ``affecting sets of hidden signals'' can be refined further. As showed in \cite{ABCF24}, finding dependent output signals leads to improvements in state-of-the-art synthesis algorithms. \hfill \qed}
\end{remark}

\section{Viewing a TGE as a Distributed System}
Consider a TGE $\T$ with memory $M$ that $\vhcg$-realizes a specification. 
%orna2: repetition, so I remove
%In each round of its interaction with the environment, $\T$ receives an assignment to the signals in $\V$, and generates an assignment to the signals in $\C$ and a program that is sent to the environment. This program directs the environment on updating its memory and generating an assignment to the signals in $G$, based on the assignment to the signals in $\H$ and the current memory state. 
As shown in Figure~\ref{arch dist} (left), the TGE's operation can be viewed as two distributed processes executed together: the TGE $\T$ itself, and a transducer $\T_G$ with state space $M$, implementing $\T$'s instructions to the environment. In each round, the transducer $\T_G$ receives from the environment an assignment to the signals in $\H$, received from $\T$ a program, and uses both in order to generate an assignment to the signals in $G$ and update its state.

\begin{figure}[!htb] 
\centering 
\includegraphics[width=0.15\textwidth]{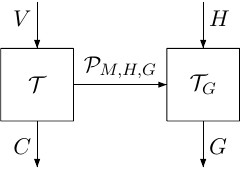}  
\hspace{1.8cm}
\includegraphics[width=0.15\textwidth]{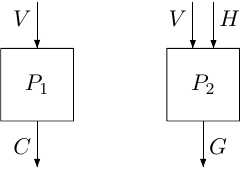} 
 \caption{A TGE that interacts with a guided environment (left) and the corresponding distributed system architecture (right).} 
 \label{arch dist} 
 \end{figure}

We examine whether viewing TGEs this way can help reduce SGE to known algorithms for the synthesis of distributed systems. 
%orna2 some rewrites
We argue that the approach here, where we do not view $\T_G$ as a process in a distributed system, is preferable. In distributed-systems synthesis, we are given a specification $\spec$ and an \emph{architecture} describing the communication channels among processes. The goal is to design strategies for these processes so that their joint behavior satisfies $\spec$. Synthesis of distributed systems is generally undecidable \cite{PR90}, primarily due to \emph{information forks} -- processes with incomparable information (e.g., when the environment sends assignments of disjoint sets of signals to two processes) \cite{Sch08}. The SGE setting corresponds to the architecture in Figure~\ref{arch dist}: Process $P_1$ is the TGE $\T$, which gets assignments to $\V$ and generates assignments to $\C$. Instead of designing $P_1$ to generate instructions for the environment, the synthesis algorithm also returns $P_2$, which instructs the environment on generating assignments to $\G$. The process $P_2$ gets (in fact generates) assignments to both $\V$ and $\H$, eliminating information forks, making SGE solvable by solving distributed-system synthesis for this architecture. A solution in the TGE setting, that is composed of a TGE $\T$ and an environment transducer $\T_G$, induces a solution in the distributed setting: $P_1$ follows $\T$, and $P_2$ simulates the joint operation of $\T$ and $\T_G$, assigning values to $\G$ as instructed by $\T$. Conversely, a TGE can encode through $\P_{M,\H,\G}$ the current assignment to $V$ together with a description of the structure of $P_2$, achieving the architecture in Figure~\ref{arch dist} (right).

Using programs in $\P_{M,\H,\G}$ goes beyond sending $\V$'s values, which are already known to the environment. Programs leverage the TGE's computation, particularly its current state, to save resources and utilize less memory. Not using the communication channel between the TGE and the environment could result in a significant increase in the size of process $P_2$. For example, when $|M|=1$ (specifications where $\G$'s assignment depends only on $\V$'s history and $\H$'s current assignment), the process $P_2$ is redundant. An explicit example is in Theorem~\ref{aux quad}, similar to the proof of \autoref{tce comp quad}. As demonstrated in Section~\ref{sec programs}, programs in $\P_{M,\H,\G}$ can be described symbolically.
Formally, we have the following.

\begin{theorem}\label{aux quad}
	For every $k\geq 1$, there exists a specification $\spec_k$ over $\V\cup \H\cup \G$, such that $\spec_k$ is $\vhcg$-realizable by a TGE with a set of states and a memory set both of size $O(k)$, yet the size of $P_2$ in a distributed system that realizes $\spec_k$ is at least $\Omega(k^2)$.
\end{theorem}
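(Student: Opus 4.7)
The plan is to reuse the specification family from \autoref{tce comp quad}, reinterpreted in the four-signal SGE setting. Take $\V=\{i_1\}$, $\H=\{i_2\}$, $\C=\emptyset$, and $\G=\{o_1,o_2\}$, and define $\spec_k=\spec_k^1\wedge \spec_k^2$, where $\spec_k^l=(\FU^k i_l)\iff \FU o_l$ for $l\in\{1,2\}$. Crucially, $\spec_k^1$ lives entirely over the visible input and one guided output, while $\spec_k^2$ lives entirely over the hidden input and the other guided output, so the two conjuncts are independent and over disjoint signal sets.

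For the TGE upper bound, I would mirror the construction in the proof of \autoref{tce comp quad}. Build a TGE $\T_k$ with $k$ states $\{s_0,\ldots,s_{k-1}\}$ implementing a counter for past occurrences of $i_1$, moving from $s_j$ to $s_{\min(j+1,k-1)}$ on $i_1$ and looping otherwise. On each transition from $s_j$, $\T_k$ emits a program that (i) assigns $o_1$ according to $j$ (so that $o_1$ eventually becomes $\True$ iff $\T_k$ has seen $k$ occurrences of $i_1$), and (ii) uses the environment's $k$-state memory as a counter for $i_2$ in exactly the analogous way, assigning $o_2$ based on the memory update. Since $\spec_k^1$ and $\spec_k^2$ are over disjoint signals, correctness decomposes cleanly, yielding a TGE with $O(k)$ states and $O(k)$ memory.

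For the lower bound on $P_2$ in the distributed architecture of Figure~\ref{arch dist} (right), I would argue as follows. Since $\C=\emptyset$, the process $P_1$ produces no controlled output, and under the standard distributed-synthesis model the processes communicate only through the signals that appear in the architecture. Hence $P_2$, which receives both $i_1$ and $i_2$ and must single-handedly produce $o_1$ and $o_2$ so as to satisfy $\spec_k^1\wedge \spec_k^2$, is effectively a stand-alone $(I,O)$-transducer realizing $\spec_k$. The lower bound argument from \autoref{tce comp quad} then applies: any such transducer must simultaneously maintain a $k$-bounded counter for $i_1$ and a $k$-bounded counter for $i_2$, because two histories that drive $P_2$ to the same state but disagree on either count can be extended to a common suffix on which the required assignments to $o_1$ or $o_2$ differ; a pumping/counting argument then gives $|P_2|\geq k^2$.

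The main obstacle I anticipate is making the lower bound robust against any admissible design of $P_1$. I would handle this by observing that $P_2$'s output stream is a deterministic function of $P_2$'s own local state and the sequence of assignments it reads in $(2^{\V\cup\H})^\omega$, independently of $P_1$'s internal state and transitions. With this locality in hand, the identify-and-diverge counterexample from \autoref{tce comp quad} transfers essentially word for word, and the $\Omega(k^2)$ bound on $|P_2|$ follows.
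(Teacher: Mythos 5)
Your proposal is correct, and it follows the same two-step strategy as the paper -- split two $k$-counters between the TGE's state space and the environment's memory for the upper bound, and invoke the product (automata-intersection) lower bound for $P_2$ -- but it instantiates the strategy with a different specification. You reuse the two-conjunct formula $(\FU^k i_1\iff \FU o_1)\wedge(\FU^k i_2\iff \FU o_2)$ from Theorem~\ref{tce comp quad}, with both outputs guided; since $\C=\emptyset$ and the architecture has no channel from $P_1$ to $P_2$, the process $P_2$ is literally a stand-alone $(I,O)$-transducer for $\spec_k$, so the $k^2$ lower bound of Theorem~\ref{tce comp quad} transfers verbatim, which makes your lower-bound step essentially immediate (your worry about ``any admissible design of $P_1$'' is moot, as $P_1$ emits nothing and cannot influence $P_2$ in this architecture). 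The paper instead uses a single guided output coupled to both counts, $\spec_k= \FU d \iff ((\FU^k v)\wedge (\FU^k h))$, which forces the same $\Omega(k^2)$ bound on $P_2$ but better showcases the point of the section: the TGE's saving comes from actively using the program channel to tell the environment when the visible counter has reached $k$, rather than merely from the independence of two sub-specifications. Both arguments are valid proofs of Theorem~\ref{aux quad}; the paper's example is the more pointed illustration of the communication channel's role, while yours is the more economical reuse of an earlier result.
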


\begin{proof}
Let $\V=\set{v}, \H=\set{h}$, and $\G=\set{d}$. Consider the specification $\spec_k= Fd \leftrightarrow ((F^k v)\wedge (F^k h))$, where the operator $F^k$ stands for ``at least $k$ occurrences in the future'' (see formal definition in \autoref{tce comp quad}). Accordingly, the specification $\spec_k$ requires that $d$ is eventually turned on iff both signals $v$ and $h$ are turned on at least $k$ times. 

We prove that in the distributed setting, Process $P_2$ needs to implement two $k$-counters simultaneously, while a TGE may decompose the two counters between the system and the environment, implying the stated quadratic saving. 

Note that the synthesis of $\spec_k$ in a distributed setting forces the process $P_2$ to implement two $k$-counters, one for the number of times $v$ is on, and one for the number of times $h$ is on. Accordingly, $P_2$ needs at $\Omega(k^2)$ states. On the other hand, synthesis of $\spec_k$ by a TGE enables a decomposition of the two counters: The TGE  $\T$ maintains a counter for $v$, and the environment transducer $\T_G$ needs to implement only a counter for $h$. Indeed, as long that the $v$ counter has a value below $k$, the TGE $\T$ sends to $\T_G$ a program that instruct it to assign $d$ with $\False$, and update its state according to $h$.  Once the counter of $v$ reaches $k$, the TGE instructs $\T_G$ to turn $d$ on if its counter of $h$ reached $k$. Thus, there is a TGE with state space and memory set both linear in $k$.
\end{proof}

\section{SGE for Branching-Time Specifications}

Branching-time specifications allow reasoning about the behavior of systems across multiple possible futures. 
Formulas in the temporal logic \ctls \cite{Lam80,EH86} include, in addition to the Boolean and temporal operators of LTL, the path quantifiers $A$ (``for all paths") and $E$ ("there exists a path"), and its semantics is defined with respect to computation trees. 
In the context of synthesis, different futures correspond to different possible behaviors of the environment. For example, the \ctls formula $AGEF{\it release}$ specifies that for all input sequences, the environment may always deviate to an input sequence that would eventually cause a release. Note that such a behavior cannot be specified in LTL 
%ofer3: added the following to emphasize that is not a specific problem with LTL, but rather an inherent deference between only specifying "good-computations" as opposed to specifying "good systems" in a way that allows relating its different computations.
(or to any formalism that only uses universal path quantification).   
The ability to refer to different futures is particularly relevant to SGE, where the interaction between the system and the environment can lead to diverse outcomes depending on the environment's hidden information. In this section, we extend the SGE framework to handle branching-time specifications, enabling the synthesis of systems that satisfy such specifications in the presence of guided environments.

%orna3: some changes
%ofer3: formalizes-->formalisms? yes
Traditionally, the semantics of \ctls is defined with respect to computation trees induced by Kripke structures, namely trees with node (rather than edge) labeling. We first define such trees. For a sets $\Upsilon$ of directions and an alphabet $\Sigma$, a \emph{$\Sigma$-node-labeled $\Upsilon$-tree\/} is a pair $\zug{T,\kappa}$, where $T \subseteq \Upsilon^*$ is a tree with direction in $\Upsilon$, and $\kappa: T \rightarrow \Sigma$ labels each node by a letter in $\Sigma$. Thus, the definition is similar to the one introduced in \autoref{automata sol}, except that now, the labels are on the nodes. In order to distinguish between the two formalisms we use the term $\Sigma$-edge-labeled $\Upsilon$-tree for the labeled trees introduced in \autoref{automata sol}, and use $\ell$ and $\kappa$ for the edge- and node-labeling functions, respectively.  Thus, a $\Sigma$-edge-labeled $\Upsilon$-tree is a pair $\zug{T,\ell}$, where $T$ is as above and $\ell(x,x\cdot a)\in \Sigma$ for all $x\in T$ and $a\in \Upsilon$.

\stam{In the context reactive systems, directions typically correspond to the set of assignments to the input signals, and labels correspond to the assignments to both input and output signals. Thus, $\Upsilon=2^I$ and $\Sigma=2^{I\cup O}$. In that case, a $2^{I\cup O}$-node-labeled $2^I$-tree of the form $\zug{(2^I)^*,\kappa}$ is thought of as a \emph{computation tree} of a system that reads input signals in $I$ and responds with assignments to output signals in $O$. Note that for convenience, the labeling function $\kappa$ includes also information about the skeleton of the tree (the input assignments), and not only the assignments to the output signals. As such, the label of a node $i_1\cdots i_k\in (2^I)^*$ is $\kappa(i_1\cdots i_k)=i_k\cup o$, where $o$ is the assignment to the output signals in $O$ that the system generates when it reads the input sequence $i_1\cdots i_k$.}

%ofer3: environment generates-->initiates
For sets $I$ and $O$ of input and output signals, we consider $2^{I\cup O}$-labeled $2^I$-trees that are thought as a \emph{computation tree} of a system that reads assignments to  input signals in $I$ and responds with assignments to output signals in $O$. 
Note that, for convenience, the nodes are labeled by the assignments to both the input and output signals, thus $\kappa$ also includes information about the direction in which nodes are reached. In more details, the label of a node $i_1\cdots i_k\in (2^I)^*$ is $\kappa(i_1\cdots i_k)=i_k\cup o$, where $o$ is the assignment to the signals in $O$ that the system generates after it reads the input sequence $i_1\cdots i_k$. Traditionally, the root of the computation tree is labeled with the empty assignment $\kappa(\varepsilon)=\emptyset$. This is to reflect the fact that the environment initiates the interaction. 

Recall from \autoref{automata sol} that each TGE $\T$ induces a $(2^\C\times \P_{M,\H,\G})$-edge-labeled $2^\V$-tree $T_{\T}=\zug{(2^\V)^*,\ell}$, which encodes the interaction of $\T$ with all possible visible input sequences. 
%orna3
In \autoref{automata sol}, we only concerned the computations generated in $T_{\T}$. Here, we need to consider the {\em computation tree} of $\T$, whose branches correspond to assignments to both $V$ and $H$, and whose nodes are labeled by assignments to both $C$ and $G$. Thus, the computation tree of $\T$ is the $2^{I\cup O}$-labeled $2^{I}$-tree $\mathrm{full}(T_{\T})=\zug{(2^I)^*,\kappa}$, constructed by extending each branch of $T_{\T}$ with all possible assignments to the hidden input signals and by augmenting the labels to include assignments to all input and output signals. In particular, the assignment to the guided output signals in a node $w \in (2^I)^*$ are determined by the environment following the sequence of programs along $w$. 

In order to define $\kappa$ formally, we also define a memory labeling $\kappa_M:(2^I)^* \rightarrow M$ that maps each history $w \in (2^I)^*$ to the environment's memory state after processing $w$. We also extend the labeling function $\ell$ to the domain $(2^I)^*$, rather than $(2^V)^*$, by ignoring the hidden input signals: 
%ofer3: denoted x and x_V for shorter \ell...
%orna4: since it's used only two times, and it's ba tov ba'aiyn to see the v's only, I think it's better without it, and I also removed the x.
for all $v_1, \ldots, v_{k+1} \in 2^\V$ and $h_1, \ldots, h_{k+1} \in 2^\H$, 
%if $x=(v_1\cup h_1)\cdots (v_k\cup h_k)$ and $x_V=$, then we define 
we define
$\ell((v_1\cup h_1)\cdots (v_k\cup h_k), (v_1\cup h_1)\cdots (v_{k+1}\cup h_{k+1})) = \ell(v_1\cdots v_k, v_1\cdots v_{k+1})$.
%$\ell(x, x \cdot (v_{k+1}\cup h_{k+1})) = \ell(v_1\cdots v_k, v_1\cdots v_k \cdot v_{k+1})$.

The functions $\kappa$ and $\kappa_M$ are then defined inductively as follows. For the empty history $\varepsilon$, we define $\kappa(\varepsilon) = \emptyset$ and $\kappa_M(\varepsilon) = m_0$, where $m_0$ is the initial memory of the environment. For a non-empty history $w = w' \cdot (v \cup h) \in (2^I)^*$, where $w' \in (2^I)^*$, $v \in 2^\V$, and $h \in 2^\H$, let 
%orna3 type...
$\ell(w', w' \cdot (v \cup h)) = \zug{c, p}$, 
%$\ell(w', v \cup h) = \zug{c, p}$, 
$\kappa_M(w') = m'$, and $p(m', h) = \zug{m, g}$. Then, we define $\kappa(w) = v \cup h \cup c \cup g$ and $\kappa_M(w) = m$.

We evaluate the satisfaction of a \ctls formula $\spec$ in a TGE $\T$ by considering its  computation tree $\mathrm{full}(T_{\T})=\zug{(2^{I})^*,\kappa}$. Thus, $\T$ \emph{satisfies} $\spec$ iff $\mathrm{full}(T_{\T})$ satisfies $\spec$ according to the standard semantics of \ctls.

Our approach to SGE for branching-time specifications is similar to the automata-theoretic approach used in Section~\ref{automata sol} to solve SGE for \LTL specifications. In fact, one can view the solution to SGE for \LTL specifications as a special case of the solution to SGE for branching-time \ctls specifications. For \LTL specifications, the satisfaction of a formula in a computation tree requires all paths in the tree to satisfy the formula. Accordingly, the solution to SGE for \LTL specifications is based on the construction of a universal tree automaton. In contrast, for \ctls specifications, the relation among paths may be more complex, and involves both universal and existential path quantifiers. Thus, the solution to SGE for \ctls specifications is based on the use of the more general alternating tree automata.

In subsection~\ref{subsec:words-trees-automata}, we defined universal tree automata. Below we extend the definition to alternating tree automata. An \emph{alternating parity tree automaton} (APT) $\A=\zug{\Sigma,\Upsilon,Q,q_0,\eta,\alpha}$, with labels in $\Sigma$ and a set $\Upsilon$ of directions is similar to a universal tree automaton, but with a transition function $\eta:Q\times \Sigma\rightarrow B^+(Q\times \Upsilon)$, where $B^+(Q\times \Upsilon)$ is the set of positive Boolean formulas over $Q\times \Upsilon$. Note that unlike the definition in Subsection~\ref{subsec:words-trees-automata}, the transition function is with domain $Q\times \Sigma$ and not $Q\times \Sigma\times \Upsilon$. This reflects the fact we consider node-labeled, rather than edge-labeled, trees. This corresponds to the standard semantics of \ctls, where the labels are on states.

%orna3 some changes
A \emph{run} of $\A$ over a $\Sigma$-node-labeled $\Upsilon$-tree $\zug{T,\kappa}$ is a node-labeled tree $\zug{T_r,r}$ in which the 
%root is labeled $\zug{\varepsilon,q_0}$ and every other node is
each node is 
 labeled by an element of $T\times Q$. 
The root is labeled $\zug{\varepsilon,q_0}$.
Intuitively, every node $y\in T_r$ with label $r(y)=\zug{x,q}$ corresponds to a copy of $\A$ that reads the subtree of $T$ with root $x$ that visits state $q$. Accordingly, this copy of $\A$ branches to a set of successors that satisfy the formula specified in $\eta(q,\kappa(x))$. Note that since at most $\Upsilon\times Q$ copies are needed in order to satisfy the positive Boolean formula $\eta(q,\kappa(x))$, the branching degree of $T_r$ is at most $\Upsilon\times Q$. 
%ofer3: The "where $r: T_r \rightarrow T \times Q$ maps..." was already mentioned above and doesn't seem to be relevant. Kept only the first part about the directions.
%As such, $T_r$ can be thought as a tree with directions in $\Upsilon\times Q$, where $r: T_r \rightarrow T \times Q$ maps each node in $T_r$ to the corresponding node in $T$ and the state of $\A$ in that the corresponding copy of $\A$ visits when it reads this node. 
As such, $T_r$ can be thought as a tree with directions in $\Upsilon\times Q$.
Formally, $\zug{T_r,r}$ satisfies the following conditions:
\begin{itemize}
	\item $\varepsilon \in T_r$ and $r(\varepsilon)=\zug{\varepsilon,q_0}$.
	\item Consider a node $y \in T_r$ with $r(y)=\zug{x,q}$. Let $\eta(q,\kappa(x)) = \theta$. Then, there is a (possibly empty) set $S\subseteq \Upsilon\times Q$ that satisfies $\theta$, and for every $\zug{a,q'}\in S$, there is a child $z$ of $y$ such that $r(z)=\zug{x\cdot a,q'}$. In other words, the node $y$ branches into a set of directions in $\Upsilon\times Q$ that satisfies the transition function of $\A$ from $x$.
	%orna3: I stopped because perhaps you had a special reason not to go with the definition in earlier work. Also, I don't see how you proceed with a path not being a set of nodes. (that is, a few lines below, when \pi=	y_0 \cdot y_1 \cdots)
\end{itemize}

The acceptance condition of an APT is defined by a parity condition $\alpha: Q \rightarrow \mathbb{N}$, which assigns a priority to each state in $Q$. An infinite path $\pi$ in a run tree $T_r$ is \emph{accepting in $r$} if the highest priority that appears infinitely often among the $Q$ components of $r(\pi)$ is even. Formally, let $\pi=y_0,y_1,y_2,\ldots \in T_r$ be an infinite path in $T_r$, and let $r(y_i)=\zug{x_i,q_i}\in T\times Q$. Let $\inf(\pi|r)\subseteq Q$ be the set of states that $r$ visits infinitely often along $\pi$; that is, $q\in \inf(\pi|r)$ if and only if there are infinitely many indices $i$ such that $q_i=q$. The run $r$ on a path $\pi$ satisfies $\alpha$ if $\max(\alpha(q) :q\in \inf(\pi|r))$ is even. A run $\zug{T_r,r}$ of an APT $\A$ is \emph{accepting} if for all the paths $\pi$ in $T_r$, the run $r$ on $\pi$ satisfies $\alpha$. 
% We denote by $L(\A)$ the set of all $\Sigma$-labeled $\Upsilon$-trees that admit an accepting run of $\A$.

\begin{theorem}\cite{KVW00}
	\label{BV94-ctls-APT}
	Given a \ctls specification $\spec$ over $I\cup O$, we can construct an APT $\A_\spec$ of size $2^{O(|\spec|)}$ and index $3$ such that $\A_\spec$ accepts exactly all $2^{I\cup O}$-node-labeled $2^I$-trees that satisfy $\spec$.
\end{theorem}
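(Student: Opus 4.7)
The plan is to proceed by structural induction on $\spec$, following the standard decomposition of a \ctls formula into maximal state subformulas and maximal path subformulas. First I would rewrite $\spec$ in positive normal form (pushing negations to atomic propositions), and then identify the maximal path subformulas occurring directly under a path quantifier $A$ or $E$. For each such path subformula $\psi$, the state subformulas that appear inside $\psi$ but are themselves quantified are treated as fresh atomic propositions for the purpose of the outer path reasoning, so that $\psi$ becomes an \LTL formula over an enriched alphabet.

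Next I would build, for every maximal path subformula $\psi$, a word automaton on infinite sequences of valuations to that enriched alphabet. For $\psi$ occurring under $A$ I would construct a universal B\"uchi word automaton $U_\psi$ of size $2^{O(|\psi|)}$ accepting exactly the sequences satisfying $\psi$, using the standard \LTL-to-automaton translation of \cite{VW94}; for $\psi$ occurring under $E$ I would construct a nondeterministic B\"uchi word automaton $N_\psi$ of the same size. I would then assemble these into a single APT $\A_\spec$ whose states are the disjoint union of those of the various $U_\psi$ and $N_\psi$, and whose transitions at a tree node dispatch the appropriate word automaton either universally over all successors (for $A$-formulas) or existentially toward a chosen successor (for $E$-formulas). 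A nested inner state subformula triggers a recursive transition that spawns the sub-automaton for that subformula at the same tree node, via an atomic conjunct in the formula returned by $\eta$.

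For the size bound, I would observe that the components contributed by distinct maximal path subformulas are disjoint, so the total number of states is $\sum_{\psi} 2^{O(|\psi|)} = 2^{O(|\spec|)}$. For the index, I would note that the $N_\psi$ contribute a B\"uchi condition on their state space while the (dualized) $U_\psi$ contribute a co-B\"uchi condition on theirs; joining both on disjoint state spaces into one parity condition needs only three priorities (e.g., $\{1,2,3\}$), one shared ``neutral'' priority plus one each for the good B\"uchi states and the bad co-B\"uchi states, yielding index $3$. Correctness would then follow by a standard induction on the structure of $\spec$, matching accepting runs of $\A_\spec$ over $\zug{T,\kappa}$ with witnesses for the satisfaction of each quantified subformula at the corresponding node of $T$.

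I expect the main obstacle to be the careful bookkeeping needed to combine the universal and existential word automata into one APT with a legitimate parity condition of index $3$ independent of the alternation depth of $A$/$E$ in $\spec$: one must ensure that states arising from $A$-branches and states arising from $E$-branches are indexed disjointly so that their priority assignments do not interact along any single path of a run, and that recursive calls to sub-automata for inner state subformulas are handled by starting a fresh copy at the initial state of the relevant sub-automaton rather than by propagating its previous configuration. This is conceptually the ``hesitant'' structure underlying the construction of \cite{KVW00}; the conceptual content is standard, while the delicate point is the priority partitioning that prevents the index from blowing up with nesting depth.
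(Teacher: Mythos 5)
The paper does not prove this theorem; it is quoted from \cite{KVW00}, and your reconstruction follows essentially the same route as that reference: positive normal form, maximal path subformulas treated over an enriched alphabet, NBWs for existentially quantified path formulas and dualized NBWs (i.e., universal \emph{co-B\"uchi}, not ``universal B\"uchi'' as you first write) for universally quantified ones, assembled into a hesitant alternating automaton in which nested state subformulas restart fresh copies, so that every path of a run is eventually trapped in a single existential or universal component. The size accounting is right.

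The one step that would fail as written is your priority assignment. A single shared ``neutral'' priority cannot serve both kinds of components: along a path trapped in a B\"uchi (existential) component that stops visiting accepting states, the maximal priority seen infinitely often is the neutral one, which must therefore be \emph{odd} to reject; along a path trapped in a co-B\"uchi (universal) component that eventually avoids its bad states, the maximal priority seen infinitely often is again the neutral one, which must therefore be \emph{even} to accept. These requirements are contradictory. The standard fix still yields index $3$: use priorities $\{0,1,2\}$, assigning $2$ to accepting states of existential components, $1$ both to the remaining states of existential components and to the bad states of universal components, and $0$ to the remaining states of universal components. With that correction (and the terminological fix above), your argument matches the cited construction.
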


We use the APT $\A_\spec$ in order to construct a TGE that realizes $\spec$. The idea is to construct an APT $\U_\spec$ that reads a $(2^{\C}\times \P_{M,\H,\G})$-edge-labeled $2^{\V}$-tree $T$, and let it simulate $\A_\spec$ on the induced computation tree $\mathrm{full}(T)$. Then, similar to the approach in Section~\ref{automata sol}, we reduce the synthesis of a TGE that realizes $\spec$ to the nonemptiness of $\U_\spec$. Essentially, an APT is nonempty iff its accepts a tree generated by a finite transducer, and such a tree,  accepted by $\U_\spec$, describes a TGE that satisfies $\spec$. Conversely, a TGE the satisfies $\spec$ defines a tree that is accepted by $\U_\spec$. The simulation is achieved by augmenting the states of $\A_\spec$ with the current memory and last input-output assignments. This way, although the APT $\U_\spec$ reads labels that do not include the assignments to the hidden and guided signals, the augmented states do include that missing information, which makes the simulation of $\A_\spec$ on the induced computation tree $\mathrm{full}(T)$ possible.
\begin{theorem}
\label{apt ctls}
	Given an APT $\A$ that reads $2^{I\cup O}$-node-labeled $2^I$-trees, we can construct an APT $\U$ that reads $(2^\C\times \P_{M,\H,\G})$-edge-labeled $2^\V$-trees, is of size $|\A|\cdot |M|\cdot 2^{|I|}$ and of the same index as $\A$, and accepts a tree $T=\zug{(2^\V)^*,\ell}$ iff $\mathrm{full}(T)$ is accepted by $\A$. 
\end{theorem}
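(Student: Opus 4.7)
The plan is to have $\U$ simulate a run of $\A$ on $\mathrm{full}(T)$ while only reading $T$. The key obstacle is that $\A$'s transition at a node $w\cdot(v\cup h)\in(2^I)^*$ of $\mathrm{full}(T)$ needs the label $\sigma=v\cup h\cup c\cup p_\G(m,h)$, where $\zug{c,p}$ is the edge label in $T$ that comes out of the $\V$-projection of $w$ in direction $v$ and $m=\kappa_M(w)$ is the environment memory at $w$; yet $\U$, reading $T$, directly sees only the edge label $\zug{c,p}$ and its own direction $v$. The missing pieces $h$ and $m$ therefore have to be carried in $\U$'s state. For convenience I view $T$ as the node-labeled tree obtained by the standard shift, in which node $x\cdot v$ carries the label $\ell(x,x\cdot v)$ and the root $\varepsilon$ carries a placeholder $\bot$.

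I would take $Q_\U=(Q\times M\times 2^I)\cup\set{q_{\U,0}}$, which gives the claimed bound $|Q_\U|=|\A|\cdot|M|\cdot 2^{|I|}+1$. A non-initial state $\zug{q,m,v\cup h}$ is intended to appear at the $\V$-projection of some node $w\cdot(v\cup h)\in (2^I)^*$, encoding that $\A$ is in state $q$, that the environment memory at $w$ was $m$, and that the input on the incoming edge was $v\cup h$. The transition $\eta_\U(\zug{q,m,v\cup h},\zug{c,p})$ is then defined by first computing $\sigma=v\cup h\cup c\cup p_\G(m,h)$ and then applying to $\eta_\A(q,\sigma)$ the substitution that replaces each atom $(q',v'\cup h')\in Q\times 2^I$ by the atom $(\zug{q',p_M(m,h),v'\cup h'},v')\in Q_\U\times 2^\V$. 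At the root, $\eta_\U(q_{\U,0},\bot)$ mimics $\eta_\A(q_0,\emptyset)$ by replacing each atom $(q',v\cup h)$ with $(\zug{q',m_0,v\cup h},v)$, thereby initializing the environment memory to $m_0$. Set $\alpha_\U(\zug{q,m,i})=\alpha_\A(q)$ and assign $q_{\U,0}$ any rank; this preserves $\A$'s index.

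For correctness I would exhibit a natural bijection between runs of $\A$ on $\mathrm{full}(T)$ and runs of $\U$ on $T$, sending a node labeled $\zug{w\cdot(v\cup h),q}$ of an $\A$-run to a node labeled $\zug{x\cdot v,\zug{q,\kappa_M(w),v\cup h}}$ of a $\U$-run, where $x$ is the $\V$-projection of $w$, and sending the root $\zug{\varepsilon,q_0}$ to $\zug{\varepsilon,q_{\U,0}}$. The substitution defining $\eta_\U$ guarantees that the satisfying assignments of $\eta_\U(\zug{q,m,v\cup h},\zug{c,p})$ are in direction- and state-respecting bijection with those of $\eta_\A(q,\sigma)$, because $\sigma$ is exactly the $\mathrm{full}(T)$-label at the node $\A$ is reading. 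Hence the two run-trees have identical branching structure and the $Q$-trace along every pair of corresponding infinite paths is the same, so acceptance is preserved in both directions.

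The main obstacle is essentially bookkeeping: verifying that the triple $\zug{q,m,v\cup h}$ together with the currently read edge label indeed captures exactly the information $\A$ needs about its current node in $\mathrm{full}(T)$, handling the root cleanly via the dedicated state $q_{\U,0}$ and placeholder $\bot$, and arranging that the memory passed down to the children is the \emph{new} memory $p_M(m,h)$ rather than the old $m$. Once this is in place, the rest is an atom-by-atom rewriting of $\eta_\A$ that preserves its positive-Boolean structure direction by direction, and so preserves $\A$'s parity index while blowing up the state space by exactly the factor $|M|\cdot 2^{|I|}$ claimed in the theorem.
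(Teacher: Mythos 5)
Your construction is correct and is essentially the same as the paper's: you augment the states of $\A$ with the environment memory before the current program application and the last input assignment $v\cup h$, reconstruct the $\mathrm{full}(T)$-label $v\cup h\cup c\cup p_\G(m,h)$ inside the transition, and substitute atoms $(q',v'\cup h')$ by $(\zug{q',p_M(m,h),v'\cup h'},v')$, exactly as the paper does. The only (immaterial) difference is that you handle the root with a dedicated initial state and placeholder label, whereas the paper initializes the state to $\zug{q_0,m_0,\emptyset,\emptyset}$ and assigns the root a synthetic label $\zug{\emptyset,p_0}$ with $p_0(m,h)=\zug{m_0,\emptyset}$.
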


\begin{proof}
Let $\A=\zug{2^{I\cup O},2^I,Q,q_0,\eta,\alpha}$ be an APT over $2^{I\cup O}$-node-labeled $2^I$-trees, we construct the APT $\U=\zug{(2^\C\times \P_{M,\H,\G}),2^{\V},Q',q'_0,\eta',\alpha'}$ as follows.
The states of $\U$ are defined as $Q' = Q \times M \times 2^{\V} \times 2^{\H}$. The initial state is defined as $q'_0=\zug{q_0,m_0,\emptyset,\emptyset}$, where $m_0$ is the initial memory of the environment. The acceptance condition $\alpha'$ is induced by $\alpha$ and is defined by $\alpha'(q,m,a)=\alpha(q)$ for all $q\in Q$, $m\in M$ and $a\in 2^{I\cup O}$. We continue by defining the transition function $\eta'$ so that $\U$ accepts a tree $T$ iff the induced computation tree $\mathrm{full}(T)$ is accepted by $\A$.

Note that while the inputs for $\U$ are edge-labeled trees, the input for $\A$ are node labeled. To overcome this, we think of a $(2^\C\times \P_{M,\H,\G})$-edge-labeled $2^\V$-tree $T=\zug{(2^{\V})^*,\ell}$ as node labeled, where labels are pushed downward to their target node. Formally, we extend $\ell$ to non-root nodes $w=w'\cdot v\in (2^\V)^*$ for $w'\in (2^\V)^*$ and $v\in 2^V$ by $\ell(w)=\ell(w',w'\cdot v)$. For the root node we define $\ell(\varepsilon)=\zug{\emptyset,p_0}$ where $p_0$ is the program defined $p_0(m,h)=\zug{m_0,\varepsilon}$ for all $m\in M$ and $h\in 2^\H$. This is to ensure that all copies of $\A_\spec$ at the first level of the tree start with the initial memory $m_0$. Note that at the root the assignments to the controlled and guided outputs are set both to $\emptyset$ which is aligned with with how the label of the root of the computation tree $\mathrm{full}(T)$ is defined. Indeed, $\kappa(\varepsilon)=\emptyset$.

The transition function $\eta':Q'\times (2^\C\times \P_{M,\H,\G})\rightarrow B^+(Q'\times 2^V)$ is defined for every $s=\zug{q,m,v,h}\in Q'$ and $\zug{c,p}\in 2^\C\times \P_{M,\H,\G}$ as follows. We think of the tuple $\zug{q,m,v,h}$ as a state that simulates the scenario of a copy of $\A$ reading a node of $\mathrm{full}(T)$ from state $q$ while the environment is at memory $m$ and its last input assignment was $v\cup h$. Recall that labels were pushed from edges downward to their target node. This means that from state $s=\zug{q,m,v,h}$ that is labeled with $\zug{c,p}$ we need to apply the program $p$ on $m$ and $h$ in order to extract the current assignment $g\in 2^\G$ to the guided outputs and the next memory state $m'\in M$ of the environment. Let $p(m,h)=\zug{m',g}$ and $a=v\cup h\cup c\cup g$. Thus, the corresponding copy of $\A$ is in state $q$ and reads a node labeled $a$. Therefore, we define $\eta'(\zug{q,m,v,h},\zug{c,p})$ to be the positive Boolean formula above $Q'\times 2^V$ that is induced by $\eta(q,a)$ by replacing every atomic proposition $\zug{q',v'\cup h'}\in Q\times 2^{\V\cup 
\H}$ that appears in $\eta(q,a)$ with the atomic proposition $\zug{\zug{q',m',v',h'},v'}$. 
 
We prove that a $(2^\C\times \P_{M,\H,\G})$-edge-labeled $2^\V$-tree $T$ is accepted by $\U_\spec$ iff the induced computation tree $\mathrm{full}(T)$ is accepted by $\A$.

To establish the correspondence between runs of $\U$ on $T$ and runs of $\A$ on $\mathrm{full}(T)$, observe the following invariant that is preserved along transitions edges of $\mathrm{full}(T)$: suppose that after processing the sequence $w=(v_1\cup h_1)\cdots(v_k\cup h_k)$ of visible and hidden inputs, the environment's memory is $\kappa_M(w) = m'$, and the label for the next edge in $T$ for the next visible input $v_{k+1}$ is $\ell(v_1 \cdots v_k, v_1\cdots v_{k+1}) = \zug{c, p}$. Then, for any next assignment $h_{k+1}$ to the hidden inputs, if applying the program $p$ yields $p(m', h_{k+1}) = \zug{ m, g}$, then the label at the corresponding next node in $\mathrm{full}(T)$ is $\kappa((v_1 \cup h_1) \cdots (v_k \cup h_k) \cdot (v_{k+1} \cup h_{k+1})) = v_{k+1} \cup h_{k+1} \cup c \cup g$, and the updated memory is $\kappa_M(w\cdot (v_{k+1}\cup h_{k+1})) = m$. Thus, a copy in a run tree of $\U$ that is in state $\zug{ q, m', v, h}$ and reads a node labeled $\zug{c,p}$ corresponds precisely to a copy in a run tree of $\A$ in state $q$ reading a node labeled $v \cup h \cup c \cup g$, where $m'$ is the current memory and $v\cup h$ is the current input assignment. This establishes the required one-to-one correspondence.

Finally, the acceptance condition of $\U$ is derived from the acceptance condition of $\A$ by projection. Hence, a run tree of $\U$ on $T$ is accepting iff the corresponding run tree of $\A$ on $\mathrm{full}(T)$ is accepting. This completes the proof.
\end{proof}

We can now conclude the section with the following complexity result for SGE with branching-time specifications.
\begin{theorem}
\label{sg-ctls}
	The \ctls SGE problem is 2EXPTIME complete. Given a \ctls formula $\spec$ over sets of signals $\V$, $H$, $C$, and $\G$, and a integer $k$, deciding whether $\spec$ is $(\V,H,C,\G)$-realizable by a TGE with memory $k$ can be done in time doubly exponential in $|\spec|$ and exponential in $k$. 	
%	Given a \ctls specification $\spec$ over $I\cup O$, the synthesis of a TGE that $\vhcg$-realizes $\spec$ can be reduced to the nonemptiness of an APT of size $2^{O(|\spec|)}\cdot |M|$ and index $3$. In particular, the synthesis problem is decidable in time 2EXPTIME in $|\spec|$ and EXPTIME in $|M|$.
\end{theorem}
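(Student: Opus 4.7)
The plan is to mimic the automata-theoretic pipeline developed for the LTL case (\autoref{bounded 2exp}), but replacing UCTs with APTs. First, I would apply \autoref{BV94-ctls-APT} to the given \ctls formula $\spec$ to obtain an APT $\A_\spec$ of size $2^{O(|\spec|)}$ and index $3$ that accepts exactly the $2^{I\cup O}$-node-labeled $2^I$-trees satisfying $\spec$. Then, for a fixed memory set $M$ of size $k$, I would feed $\A_\spec$ into \autoref{apt ctls} to obtain an APT $\U_\spec$ over $(2^\C\times \P_{M,\H,\G})$-edge-labeled $2^\V$-trees, of size $|\A_\spec|\cdot k\cdot 2^{|I|} = 2^{O(|\spec|)}\cdot k$ and index $3$, such that $\U_\spec$ accepts an edge-labeled tree $T$ iff $\mathrm{full}(T)$ is accepted by $\A_\spec$, i.e., iff $\mathrm{full}(T)$ satisfies $\spec$.

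Second, I would reduce \ctls SGE to the non-emptiness problem of $\U_\spec$. The direction from realizability to non-emptiness is immediate: every TGE $\T$ with memory $k$ that realizes $\spec$ induces an edge-labeled tree $T_\T$ accepted by $\U_\spec$. Conversely, by the small-model (finite-transducer) property of APTs, if $\U_\spec$ is non-empty then it accepts a regular edge-labeled tree generated by a finite transducer; such a transducer, whose states and transitions label the $2^\V$-tree with pairs in $2^\C\times \P_{M,\H,\G}$, is exactly a TGE $\T$ that $\vhcg$-realizes $\spec$ with memory $k$. For the complexity, non-emptiness of an APT with $n$ states and a fixed index can be decided by translating it to an NPT and solving the resulting parity game, in time exponential in $n$ (and polynomial in the number of directions $2^{|\V|}$). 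Plugging in $n=2^{O(|\spec|)}\cdot k$ yields an overall bound that is doubly exponential in $|\spec|$ and exponential in $k$.

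For the lower bound, I would reuse the argument from \autoref{bounded 2exp}: traditional \LTL (and hence \ctls) synthesis is 2EXPTIME-hard~\cite{Ros92}, and it reduces to our setting by taking $\H=\G=\emptyset$, in which case programs collapse to plain assignments and a TGE degenerates to a standard transducer. Hence \ctls SGE inherits the 2EXPTIME lower bound.

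The main obstacle I expect is making sure the correspondence between accepting runs of $\U_\spec$ and TGEs is tight enough to respect the memory bound $k$. \autoref{apt ctls} hard-codes $|M|=k$ into the state space of $\U_\spec$, so an accepting regular witness tree uses programs of the form $p:M\times 2^\H\rightarrow M\times 2^\G$ with $|M|\le k$, and the memory labeling $\kappa_M$ simulated inside the states of $\U_\spec$ guarantees the environment can indeed execute these programs using only $k$ memory states; beyond checking this bookkeeping, the rest of the proof is a routine composition of \autoref{BV94-ctls-APT}, \autoref{apt ctls}, and standard APT non-emptiness.
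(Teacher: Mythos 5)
Your proposal matches the paper's proof essentially step for step: it composes Theorem~\ref{BV94-ctls-APT} with Theorem~\ref{apt ctls} to obtain an APT of size $2^{O(|\spec|)}\cdot k$ and fixed index over $(2^\C\times \P_{M,\H,\G})$-edge-labeled $2^\V$-trees, reduces \ctls SGE to its nonemptiness via the regular-witness property, and inherits the 2EXPTIME lower bound from \LTL synthesis (the paper phrases this as inheriting it from the \LTL SGE lower bound of Theorem~\ref{bounded 2exp}, which is the same reduction with $\H=\G=\emptyset$). The argument is correct and no genuinely different ideas are involved.
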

\begin{proof}
The lower bound follows from the lower bound for LTL SGE (Theorem~\ref{bounded 2exp}) and the fact that \ctls is a superset of \LTL.\footnote{
Since \ctls satisfiability is 2EXPTIME-complete \cite{VS85}, one could also reduce from \ctls satisfiability. Indeed, a specification $\psi$ over a set $AP$ of atomic propositions is satisfiable iff $\psi$ is realizable by a TGE in which all the propositions in $AP$ are controllable output signals. The reduction, however, has to ensure that the computation tree of the TGE has a sufficient branching degree. For instance, if $I = \emptyset$, then the computation tree induced by a TGE has only one branch, which may not suffice for formulas with multiple existential path quantifiers.
By the ``sufficient branching-degree property'' of \ctls, a formula $\spec$ is satisfiable iff it is satisfiable in a tree with branching degree $\#_E(\spec) +1$, where $\#_E(\spec)$ is the number of existential quantifiers in $\spec$ \cite{ES84}. Accordingly, we reduce the satisfiability of $\spec$ to SGE of $\spec$ with 
$\left\lceil \log(\#_E(\spec) +1) \right\rceil$ arbitrary input signals, all in $V$, and $C=AP$.
}
We continue to the upper bound.
	By Theorem~\ref{BV94-ctls-APT}, there exists an APT $\A_\spec$ of size $|\A_{\spec}|=2^{O(|\spec|)}$ and index $3$ that accepts the set of $2^{I\cup O}$-labeled $2^I$-trees that satisfy $\spec$. Let $M$ be a set of size $k$. By Theorem~\ref{apt ctls}, we can use $\A_\spec$ to construct an APT $\U_\spec$ that reads $(2^\C\times \P_{M,\H,\G})$-edge-labeled $2^\V$-trees, is of size $|\A_\spec|\cdot |M|\cdot 2^{|I|}$ and of the same index as $\A_\spec$, and accepts a tree $T=\zug{(2^\V)^*,\ell}$ iff $\mathrm{full}(T)$ is accepted by $\A_\spec$. The size of $\U_\spec$ is thus $2^{O(|\spec|)}\cdot |M|$. It well known that the synthesis of a language specified by an APT can be reduced to its nonemptiness and that the nonemptiness problem of an APT of a fixed index is decidable in time exponential in it's size~\cite{KV97c,KV05c}. Thus, the synthesis of a TGE that $\vhcg$-realizes $\spec$ can be reduced to the nonemptiness of $\U_\spec$, which leads to the desired complexity result.
\end{proof}

\section{Discussion}

We introduced {\em synthesis with guided environments}, where the system can  utilize the knowledge and computational power of the environment. Straightforward directions for future research include extensions of the many settings in which synthesis has been studied to the ``the guided paradigm''. Here we discuss two directions that are more related to the paradigm itself.

\vspace{4mm}
{\bf Dynamic hiding and guidance.} In the setting studied here, the partition of $I$ and $O$ into visible, hidden, controlled, and guided signals is fixed throughout the computation. In some settings, these partitions may be dynamic. For example, when visibility depends on sensors that the system may activate and deactivate \cite{AKK19} or when signals are sometimes hidden in order to maintain the privacy of the system and the environment \cite{KL22}. The decision which signals to hide in each round may depend on the system (e.g., when it instructs the environment which signals to hide in order to maintain its privacy), the environment (e.g., when it prefers not to share sensitive information), or an external authority (e.g., when signals become hidden due to actual invisibility).
As for output signals, their guidance may depend on the history of the interaction (e.g., we may be able to assume amenability from the environment only after some password has been entered).

\stam{
Moreover, in the context of preserving privacy in computations, it is natural utilizing programs not only for the delegated signals and the memory update but also as a way for the system to send to the environment queries regarding its current input assignment. In our current setting the system always asks the environment to send its assignment to a constant set $\V$ of visible signals. Then, in dynamic hiding, this query encodes the set of signals that the system requires to disclose. But sometimes we can do even better. Consider for instance the case we need to realize $\spec=i_1\rightarrow(i_2\iff o)$ while preserving the value of $i_2$ hidden and controlling $o$. In the current setting the only thing we can do is request the environment to send the value of $i_2$, but this is insufficient for realizing $\spec$. Also a dynamic approach of hiding won't be useful. Yet, a system that requests from the environment to send the value of the predicate $i_1\iff i_2$, can assign that value to $o$ and satisfy $\spec$. Indeed, $i_1\rightarrow (i_2\iff (i_1\iff i_2))\equiv i_1\rightarrow i_1\equiv \True$. 
}

%Beyond coming up with the right definition for dynamic hiding and delegation, we expect the SGE algorithm to be more difficult.

%A framework for SGE with dynamic hiding and delegation would enable such a dynamic partition. Beyond coming up with the right definition for it, we expect the SGE algorithm to be more difficult.

\vspace{4mm}
{\bf Bounded SGE.}
SGE involves a memory that can be used by the environment. As in the study of traditional {\em bounded synthesis} \cite{SF07,KLVY11}, it is interesting to study SGE with given bounds not only on the memory for the guiding but also bounds on the state spaces of the system and the environment. In addition to better modeling the setting, the bounds are used in order to improve the complexity of the algorithm, and they can also serve in heuristics, as in SAT-based algorithms for bounded synthesis \cite{Ehl10}. In the setting of SGE, it is interesting to investigate the tradeoffs among the three involved bounds. It is easy to see that the two bounds that are related to the environment, namely the bound on its state space and the bound on the memory supervised by the system, are dual: an increase in the memory supervised by the system makes more specifications realizable, whereas an increase in the size of the state space of the environment makes fewer specifications realizable.

%ofer: Do you refer below to the size of the TGE or the size of the program? I think that it's the TGE, but it reads as it's the programs.
%orna1 changed
Another parameter that is interesting to bound is the number of different programs that a TGE may use, or the class of possible programs. In particular, restricting SGE to programs in which guided output signals can be assigned only the values of hidden signals or values stored in registers, will simplify an implementation of the algorithm. Likewise, the update of the memory during the interaction may be global and fixed throughout the computation.

\bibliographystyle{IEEEtran}
\bibliography{../../../ok}

% Generated by IEEEtran.bst, version: 1.14 (2015/08/26)
\begin{thebibliography}{10}
\providecommand{\url}[1]{#1}
\csname url@samestyle\endcsname
\providecommand{\newblock}{\relax}
\providecommand{\bibinfo}[2]{#2}
\providecommand{\BIBentrySTDinterwordspacing}{\spaceskip=0pt\relax}
\providecommand{\BIBentryALTinterwordstretchfactor}{4}
\providecommand{\BIBentryALTinterwordspacing}{\spaceskip=\fontdimen2\font plus
\BIBentryALTinterwordstretchfactor\fontdimen3\font minus \fontdimen4\font\relax}
\providecommand{\BIBforeignlanguage}[2]{{%
\expandafter\ifx\csname l@#1\endcsname\relax
\typeout{** WARNING: IEEEtran.bst: No hyphenation pattern has been}%
\typeout{** loaded for the language `#1'. Using the pattern for}%
\typeout{** the default language instead.}%
\else
\language=\csname l@#1\endcsname
\fi
#2}}
\providecommand{\BIBdecl}{\relax}
\BIBdecl

\bibitem{BCJ18}
R.~Bloem, K.~Chatterjee, and B.~Jobstmann, ``Graph games and reactive synthesis,'' in \emph{Handbook of Model Checking.}\hskip 1em plus 0.5em minus 0.4em\relax Springer, 2018, pp. 921--962.

\bibitem{PR89a}
A.~Pnueli and R.~Rosner, ``On the synthesis of a reactive module,'' in \emph{Proc.\ 16th ACM Symp. on Principles of Programming Languages}, 1989, pp. 179--190.

\bibitem{BEK15}
R.~Bloem, R.~Ehlers, and R.~K{\"{o}}nighofer, ``Cooperative reactive synthesis,'' in \emph{13th Int. Symp. on Automated Technology for Verification and Analysis}, 2015, pp. 394--410.

\bibitem{AMNS23}
A.~Anand, K.~Mallik, S.~Nayak, and A.-K. Schmuck, ``Computing adequately permissive assumptions for synthesis,'' in \emph{Proc.\ 29th Int. Conf. on Tools and Algorithms for the Construction and Analysis of Systems}.\hskip 1em plus 0.5em minus 0.4em\relax Springer, 2023, pp. 211--228.

\bibitem{Rei84}
J.~Reif, ``The complexity of two-player games of incomplete information,'' \emph{Journal of Computer and Systems Science}, vol.~29, pp. 274--301, 1984.

\bibitem{KV00a}
O.~Kupferman and M.~Vardi, ``Synthesis with incomplete informatio,'' in \emph{Advances in Temporal Logic}.\hskip 1em plus 0.5em minus 0.4em\relax Kluwer Academic Publishers, 2000, pp. 109--127.

\bibitem{CDHR07}
K.~Chatterjee, L.~Doyen, T.~Henzinger, and J.-F. Raskin, ``Algorithms for omega-regular games with imperfect information,'' \emph{Logical Methods in Computer Science}, vol.~3, no.~3, 2007.

\bibitem{Ros92}
R.~Rosner, ``Modular synthesis of reactive systems,'' Ph.D. dissertation, Weizmann Institute of Science, 1992.

\bibitem{PR90}
A.~Pnueli and R.~Rosner, ``Distributed reactive systems are hard to synthesize,'' in \emph{Proc.\ 31st IEEE Symp. on Foundations of Computer Science}, 1990, pp. 746--757.

\bibitem{KV01}
O.~Kupferman and M.~Vardi, ``Synthesizing distributed systems,'' in \emph{Proc.\ 16th ACM/IEEE Symp. on Logic in Computer Science}, 2001, pp. 389--398.

\bibitem{Sch08}
S.~Schewe, ``Synthesis of distributed systems,'' Ph.D. dissertation, Saarland University, Saarbr{\"{u}}cken, Germany, 2008.

\bibitem{CMH08}
K.~Chatterjee, R.~Majumdar, and T.~A. Henzinger, ``Controller synthesis with budget constraints,'' in \emph{Proc 11th International Workshop on Hybrid Systems: Computation and Control}, ser. Lecture Notes in Computer Science, vol. 4981.\hskip 1em plus 0.5em minus 0.4em\relax Springer, 2008, pp. 72--86.

\bibitem{AKK19}
S.~Almagor, D.~Kuperberg, and O.~Kupferman, ``Sensing as a complexity measure,'' \emph{Int. J. Found. Comput. Sci.}, vol.~30, no. 6-7, pp. 831--873, 2019.

\bibitem{FMM24}
B.~Finkbeiner, N.~Metzger, and Y.~Moses, ``Information flow guided synthesis with unbounded communication,'' in \emph{Proc. 36th Int. Conf. on Computer Aided Verification}.\hskip 1em plus 0.5em minus 0.4em\relax Springer Nature Switzerland, 2024, pp. 64--86.

\bibitem{WRRLS18}
Y.~Wu, V.~Raman, B.~Rawlings, S.~Lafortune, and S.~Seshia, ``Synthesis of obfuscation policies to ensure privacy and utility,'' \emph{Journal of Automated Reasoning}, vol.~60, no.~1, pp. 107--131, 2018.

\bibitem{KL22}
O.~Kupferman and O.~Leshkowitz, ``Synthesis of privacy-preserving systems,'' in \emph{Proc. 42nd Conf. on Foundations of Software Technology and Theoretical Computer Science}, ser. Leibniz International Proceedings in Informatics (LIPIcs), vol. 250, 2022, pp. 42:1--42:23.

\bibitem{AHK02}
R.~Alur, T.~Henzinger, and O.~Kupferman, ``Alternating-time temporal logic,'' \emph{Journal of the ACM}, vol.~49, no.~5, pp. 672--713, 2002.

\bibitem{BMMRV17}
R.~Berthon, B.~Maubert, A.~Murano, S.~Rubin, and M.~Vardi, ``Strategy logic with imperfect information,'' in \emph{Proc.\ 32nd ACM/IEEE Symp. on Logic in Computer Science}, 2017, pp. 1--12.

\bibitem{GPW18}
J.~Gutierrez, G.~Perelli, and M.~J. Wooldridge, ``Imperfect information in reactive modules games,'' \emph{Inf. Comput.}, vol. 261, pp. 650--675, 2018.

\bibitem{KR97}
A.~Krausz and U.~Rieder, ``Markov games with incomplete information,'' \emph{Mathematical Methods of Operations Research}, vol.~46, pp. 263--279, 1997.

\bibitem{CDNV14}
K.~Chatterjee, L.~Doyen, S.~Nain, and M.~Vardi, ``The complexity of partial-observation stochastic parity games with finite-memory strategies,'' in \emph{Proc. 17th Int. Conf. on Foundations of Software Science and Computation Structures}, ser. Lecture Notes in Computer Science, vol. 8412.\hskip 1em plus 0.5em minus 0.4em\relax Springer, 2014, pp. 242--257.

\bibitem{KS98}
R.~Kumar and M.~Shayman, ``Formulae relating controllability, observability, and co-observability,'' \emph{Autom.}, vol.~34, no.~2, pp. 211--215, 1998.

\bibitem{dGV16}
G.~{De Giacomo} and M.~Y. Vardi, ``Ltl$_f$ and ldl$_f$ synthesis under partial observability,'' in \emph{Proc.~25th Int'l Joint Conf.~on Artificial Intelligence}.\hskip 1em plus 0.5em minus 0.4em\relax {IJCAI/AAAI} Press, 2016, pp. 1044--1050.

\bibitem{BMV17}
P.~Bouyer, N.~Markey, and S.~Vester, ``{N}ash equilibria in symmetric graph games with partial observation,'' \emph{Information and Computation}, vol. 254, pp. 238--258, 2017.

\bibitem{FGR18}
E.~Filiot, R.~Gentilini, and J.-F. Raskin, ``Rational synthesis under imperfect information,'' in \emph{Proc.\ 33rd ACM/IEEE Symp. on Logic in Computer Science}.\hskip 1em plus 0.5em minus 0.4em\relax {ACM}, 2018, pp. 422--431.

\bibitem{KB17}
A.~Khalimov and R.~Bloem, ``Bounded synthesis for {S}treett, {R}abin, and {CTL}$^*$,'' in \emph{Proc. 29th Int. Conf. on Computer Aided Verification}, ser. Lecture Notes in Computer Science, vol. 10427.\hskip 1em plus 0.5em minus 0.4em\relax Springer, 2017, pp. 333--352.

\bibitem{BSK17}
R.~Bloem, S.~Schewe, and A.~Khalimov, ``{CTL}$^*$ synthesis via {LTL} synthesis,'' in \emph{Proc. $6$-th Workshop on Synthesis}, ser. {EPTCS}, vol. 260, 2017, pp. 4--22.

\bibitem{SF07}
S.~Schewe and B.~Finkbeiner, ``Bounded synthesis,'' in \emph{5th Int. Symp. on Automated Technology for Verification and Analysis}, ser. Lecture Notes in Computer Science, vol. 4762.\hskip 1em plus 0.5em minus 0.4em\relax Springer, 2007, pp. 474--488.

\bibitem{KLVY11}
O.~Kupferman, Y.~Lustig, M.~Vardi, and M.~Yannakakis, ``Temporal synthesis for bounded systems and environments,'' in \emph{Proc. 28th Symp. on Theoretical Aspects of Computer Science}, 2011, pp. 615--626.

\bibitem{Pnu81}
A.~Pnueli, ``The temporal semantics of concurrent programs,'' \emph{Theoretical Computer Science}, vol.~13, pp. 45--60, 1981.

\bibitem{VW94}
M.~Vardi and P.~Wolper, ``Reasoning about infinite computations,'' \emph{Information and Computation}, vol. 115, no.~1, pp. 1--37, 1994.

\bibitem{KV05c}
O.~Kupferman and M.~Vardi, ``Safraless decision procedures,'' in \emph{Proc.\ 46th IEEE Symp. on Foundations of Computer Science}, 2005, pp. 531--540.

\bibitem{GH82}
Y.~Gurevich and L.~Harrington, ``Trees, automata, and games,'' in \emph{Proc.\ 14th ACM Symp. on Theory of Computing}.\hskip 1em plus 0.5em minus 0.4em\relax ACM Press, 1982, pp. 60--65.

\bibitem{VW86a}
M.~Vardi and P.~Wolper, ``Automata-theoretic techniques for modal logics of programs,'' \emph{Journal of Computer and Systems Science}, vol.~32, no.~2, pp. 182--221, 1986.

\bibitem{Saf88}
S.~Safra, ``On the complexity of $\omega$-automata,'' in \emph{Proc.\ 29th IEEE Symp. on Foundations of Computer Science}, 1988, pp. 319--327.

\bibitem{EKS20}
J.~Esparza, J.~Kret{\'{\i}}nsk{\'{y}}, and S.~Sickert, ``A unified translation of linear temporal logic to {\(\omega\)}-automata,'' \emph{J. {ACM}}, vol.~67, no.~6, pp. 33:1--33:61, 2020.

\bibitem{ABCF24}
S.~Akshay, E.~Basa, S.~Chakraborty, and D.~Fried, ``On dependent variables in reactive synthesis,'' in \emph{Proc.\ 30th Int. Conf. on Tools and Algorithms for the Construction and Analysis of Systems}.\hskip 1em plus 0.5em minus 0.4em\relax Springer Nature Switzerland, 2024, pp. 123--143.

\bibitem{Lam80}
L.~Lamport, ````{S}ometimes'' is sometimes ``not never'' - on the temporal logic of programs,'' in \emph{Proc.\ 7th ACM Symp. on Principles of Programming Languages}, 1980, pp. 174--185.

\bibitem{EH86}
E.~Emerson and J.~Halpern, ``Sometimes and not never revisited: On branching versus linear time,'' \emph{Journal of the ACM}, vol.~33, no.~1, pp. 151--178, 1986.

\bibitem{KVW00}
O.~Kupferman, M.~Vardi, and P.~Wolper, ``An automata-theoretic approach to branching-time model checking,'' \emph{Journal of the ACM}, vol.~47, no.~2, pp. 312--360, 2000.

\bibitem{VS85}
M.~Vardi and L.~Stockmeyer, ``Improved upper and lower bounds for modal logics of programs,'' in \emph{Proc.\ 17th ACM Symp. on Theory of Computing}, 1985, pp. 240--251.

\bibitem{ES84}
E.~Emerson and A.~P. Sistla, ``Deciding branching time logic,'' in \emph{Proc.\ 16th ACM Symp. on Theory of Computing}, 1984, pp. 14--24.

\bibitem{KV97c}
O.~Kupferman and M.~Vardi, ``Synthesis with incomplete informatio,'' in \emph{2nd Int. Conf. on Temporal Logic}, 1997, pp. 91--106.

\bibitem{Ehl10}
R.~Ehlers, ``Symbolic bounded synthesis,'' in \emph{Proc. 22nd Int. Conf. on Computer Aided Verification}, ser. Lecture Notes in Computer Science, vol. 6174.\hskip 1em plus 0.5em minus 0.4em\relax Springer, 2010, pp. 365--379.

\end{thebibliography}
\end{document}